\newcommand{\n}{n}
\newcommand{\nln}[1]{\n_{#1}}
\newcommand{\ellsub}[1]{\ell_{#1}}
\newcommand{\m}{m}
\newcommand{\mln}[1]{\m_{#1}}
\newcommand{\x}{x}
\newcommand{\y}{y}
\newcommand{\xln}[1]{\x_{#1}}
\newcommand{\xhln}[1]{\tilde{\x}_{#1}}
\newcommand{\yln}[1]{\y_{#1}}
\newcommand{\yhln}[1]{\tilde{\y}_{#1}}
\newcommand{\bx}{X}
\newcommand{\by}{Y}
\newcommand{\bz}{Z}
\newcommand{\mmdu}[2]{{\text{MMD}}^2_{u}(#1, #2)}
\newcommand{\mmdulower}[2]{{\underline{\text{MMD}}}^2_{u}(#1, #2)}
\newcommand{\mmduupper}[2]{{\overline{\text{MMD}}}^2_{u}(#1, #2)}
\newcommand{\dd}{d}
\newcommand{\ddimension}{\mathbb{R}^{\dd}}
\newcommand{\kernel}[2]{k(#1,#2)}
\newcommand{\kernels}[2]{k^*(#1,#2)}
\newcommand{\z}{z}
\newcommand{\zln}[1]{\z_{#1}}
\newcommand{\zsln}[1]{\z^*_{#1}}
\newcommand{\kk}{k} % maybe not use this one, since k denotes kernel
\newcommand{\st}{t}
\newcommand{\iconstant}{i}
\newcommand{\jconstant}{j}
\newcommand{\lconstant}{l}
\newcommand{\constant}[1]{c_{#1}}
\newcommand{\bt}{T}
\newcommand{\btln}[1]{\bt_{#1}}
\newcommand{\bs}{S}
\newcommand{\bsln}[1]{\bs_{#1}}
\newcommand{\br}{R}
\newcommand{\sa}{a}
\newcommand{\saln}[1]{\sa_{#1}}
\newcommand{\salnp}[1]{\sa_{#1}'}
\newcommand{\bb}{b}
\newcommand{\bbln}[1]{\bb_{#1}}
\newcommand{\bblnp}[1]{\bb_{#1}'}
\newcommand{\bu}{U}
\newcommand{\buln}[1]{\bu_{#1}}
\newcommand{\tonumber}[1]{[#1]}
\newcommand{\indicator}[1]{\text{I}(#1)}
\newcommand{\p}{p}
\newcommand{\upperp}{\overline{\p}}
\newcommand{\q}{q}
\newcommand{\nh}{H_0}
\newcommand{\ah}{H_1}
\newcommand{\pr}[1]{P\left(#1\right)}
\newcommand{\iid}{\text{i.i.d.}}
\newcommand{\termone}{A_1}
\newcommand{\termtwo}{A_2}
\newcommand{\termthree}{A_3}
\newcommand{\termfour}{A_4}
\newcommand{\setchi}{\mathcal{X}}
\newcommand{\setchiln}[1]{\setchi_{#1}}
\DeclareMathOperator*{\argmin}{arg\,min}
\newtheorem{definition}{Definition}
\newtheorem{lemma}{Lemma}
\newtheorem{theorem}{Theorem}
\title{MMD Two-sample Testing in the Presence \\ of Arbitrarily Missing Data}
\author{Yijin Zeng  \and Niall M. Adams \and Dean A. Bodenham}
\date{{\normalsize Department of Mathematics, Imperial College London,} \\
 {\normalsize South Kensington Campus, London SW7 2AZ, U.K.} \\
{\small \texttt{yijin.zeng20@imperial.ac.uk}, 
\quad \texttt{n.adams@imperial.ac.uk},
\quad \texttt{dean.bodenham@imperial.ac.uk}}
}
\begin{document}

\maketitle

\begin{abstract}
In many real-world applications, it is common that a proportion of the data may be missing or only partially observed. We develop a novel two-sample testing method based on the Maximum Mean Discrepancy (MMD) which accounts for missing data in both samples, without making assumptions about the missingness mechanism. Our approach is based on deriving the mathematically precise bounds of the MMD test statistic after accounting for all possible missing values. To the best of our knowledge, it is the only two-sample testing method that is guaranteed to control the Type I error for both univariate and multivariate data where data may be arbitrarily missing. Simulation results show that our method has good statistical power, typically for cases where 5\% to 10\% of the data are missing. We highlight the value of our approach when the data are missing not at random, a context in which either ignoring the missing values or using common imputation methods may not control the Type I error.
\end{abstract}

%=====================================================%

\section{Introduction} \label{introduction}

Two-sample hypothesis testing is a fundamental statistical method used to determine if 
two samples of data are different. 
Numerous two-sample testing methods are available, including Student's $t$-test 
\cite{student1908probable}, the Wilcoxon-Mann-Whitney $U$ 
test \cite{mann1947test}, the Kolmogorov-Smirnov test \cite{an1933sulla}, 
the Energy Distance \cite{rizzo2016energy}, and the Maximum Mean Discrepancy (MMD) test 
\cite{gretton2012kernel}. These methods have proven useful in various fields 
including medicine \cite{tsiatis2008covariate,o1979multiple}, finance 
\cite{lukic2024novel,jobson1981performance}, psychology \cite{pfister2013confidence}
and machine learning \cite{gretton2012kernel, gretton2009fast,schrab2023mmd}.
Nearly all two-sample testing methods are designed solely for samples that 
are fully observed. However, in many real-world datasets, 
a subset of univariate values may
be missing or multivariate values may only be partially observed.

When data are missing, common practices are either to ignore all 
missing values or impute these values using some imputation scheme.
Following this step, the data is treated as complete, allowing 
any standard two-sample testing method to be used. However, except 
in special cases, such as when the missing data are missing completely 
at random \cite{rubin1976inference}, these practices are often invalid 
as they risk increasing the probability of a Type I error occurring. 
Under certain assumptions, such as when the data are missing completely 
at random, missing at random, or when the data are missing not at random
but the missingness mechanisms (the samples can be subject to different
missingness mechanisms) can be explicitly specified 
\cite{schafer2002missing}, then certain sophisticated missing data 
methods exist, such as the expectation-maximization 
algorithm and the multiple imputation method \cite{schafer1999multiple}.
However, without these assumptions, relying on these methods is fraught 
with risk.

Recently a two-sample testing method for data with univariate missing values 
was proposed which does not rely on a specific imputation procedure 
\cite{zeng2024two}.
This method, based the Wilcoxon-Mann-Whitney test, rejects the 
null hypothesis after accounting for all possible values of the missing data 
by deriving bounds for the $U$ statistic. 
This approach avoids ignoring or imputing the missing data and is shown to 
control the Type I error without making assumptions about the missingness 
mechanisms, while also having good statistical power when 
the proportion of missing data is around $10\%$.
However, this approach is restricted to univariate samples and can only detect 
location shifts since it is based on the Wilcoxon-Mann-Whitney test. 

In this paper, we propose a novel two-sample testing method based 
on the MMD test in the presence of missing data that makes no assumption about 
the missingness mechanisms, following the approach in \cite{zeng2024two}. 
However, 
our approach can be used for both univariate and multivariate samples, 
and can detect any distributional change since it is based on the MMD test.

Our approach, named \textbf{MMD-Miss}, 
is based on deriving the bounds of the unbiased MMD test 
statistic \cite{gretton2012kernel} in the presence of missing data. 
To do so, we use the Laplacian kernel, a popular choice of characteristic 
kernel that enables MMD to detect any distributional shift 
\cite{sriperumbudur2011universality}. 
To compute the $p$-value for the test, we either use 
Monte Carlo sampling from the permuation distribution of the observed data or, 
in the case of high-dimensional multivariate data, we use the normality 
approximation recently derived in \cite{gao2023two}.
We prove that this approach controls the Type I error, 
regardless of the values of the missing data. Numerical simulations are 
presented, showing that our method has good testing power when the 
proportion of missing data is around 5\% to 10\%, while controlling
the Type I error.

\paragraph{Two-sample testing for a small proportion of missing data} 
In the context of performing two-sample testing with missing data, a common 
misunderstanding is that when the proportion of missing data is small, 
e.g. 5\% as suggested by \cite{schafer1999multiple, heymans2022handling}, 
the testing result will not be skewed by missing data after either ignoring 
or imputing these values. While these practices may be justified under certain 
conditions, such as when the data are missing completely at 
random \cite{rubin1976inference}, we provide experiments in 
Section \ref{Experiments} showing that these practices can lead to a Type I error 
asymptotically equals to 1 even with a small proportion (5\%) 
of missing data, when the data are missing not at random.

\section{Background} \label{Background}

\paragraph{Related Work}

Various approaches have been proposed to mitigate the issue of missing data for 
two-sample hypothesis testing. When the missing data are missing at random, 
rank-based two-sample testing methods have been proposed which control the Type I error 
and achieve good testing power \cite{cheung2005exact,chen2013mann}. 
Paired two-sample testing with missing data are studied in 
\cite{martinez2013hypothesis, fong2018rank}, under the assumption that data are missing 
completely at random and at least one of the paired samples is observed. When 
part of the samples are interval-censored, i.e. the samples are 
bounded within an interval, \cite{pan2000two} proposed an 
algorithm based on multiple imputation \cite{little2019statistical} by 
employing approximate Bayesian bootstrap \cite{rubin1986multiple}. 
A method for detecting a change in the mean 
of high-dimensional normal data 
assumes the two samples share the same missingness mechanism 
\cite{shutoh2010testing}. 
By considering certain unobserved samples as 
the ``worst case'',  \cite{lachin1999worst} proposes a test method based on 
rank called the worst rank test.

%As previously noted, \cite{zeng2024two} proposed an approach based on the 
%Wilcoxon-Mann-Whitney test that makes no assumption 
%about the missingness mechanisms and controls Type I error regardless of 
%values of the missing data. This method can be used to detect a location shift in 
%in univariate samples. Our proposed method follows a similar manner but 
%uses the MMD test, and thus can be employed for both univariate and 
%multivariate samples to detect any distributional change.     

\paragraph{Missingness Mechanisms.} Missingness mechanisms are typically 
classified as missing completely at random (MCAR), missing at random 
(MAR), and missing not at random (MNAR) \cite{rubin1976inference}. 
Let $\bz_{\text{com}} = (\zln{1}, \ldots, \zln{\n})^T$ denote the 
complete data, where each $\zln{\iconstant}$ is a $\mathbb{R}^d$ real 
value. The complete data $\bz_{\text{com}}$ can be split into observed 
parts $\bz_{\text{obs}}$ and missing parts $\bz_{\text{miss}}$. 
Let $\br$ be a binary indicator matrix matching the dimensions 
of $\bz_{\text{com}}$, with elements indicating observed (1) or missing 
(0) values. The core idea of \cite{rubin1976inference} is to 
admit $\br$ as a probabilistic phenomenon.

The mechanism is MCAR if $\pr{\br|\bz_{\text{com}}} = \pr{\br}$, 
meaning $\br$ is independent of data values, justifying the practice of 
ignoring missing data. It is MAR 
if $\pr{\br|\bz_{\text{com}}} = \pr{\br|\bz_{\text{obs}}}$, where $\br$ 
depends only on observed data, making certain imputation methods viable. 
If neither condition holds, the mechanism is MNAR. For known mechanisms 
\cite{rubin1976inference, lachin1999worst}, specific imputations might be 
justified. However, with unknown mechanisms, ignoring or imputing data can 
lead to invalid results, as shown in our simulations 
(Section \ref{Experiments}). Notably, our method does not depend on the 
missing data values, ensuring valid results regardless of the missingness 
mechanism.

\paragraph{Two-Sample Testing.} A two-sample testing method is used to determine 
whether two groups of data are statistically significantly different.
Consider two samples of observations
$\bx = \{\xln{1}, \ldots, \xln{\nln{1}}\}$ and 
$\by = \{\yln{1}, \ldots, \yln{\nln{2}}\}$, where each observation is in  
$\ddimension$, and suppose that $\bx$, $\by$ are independent random samples 
from distributions $\p$, $\q$, respectively. 
We define the null hypothesis $\nh$ as $\p = \q$ and the alternative 
hypothesis $\ah$ as $\p \neq \q$. 
The two-sample testing method first computes a statistic based on the data, 
and then a $p$-value based on the statistic. After comparing the $p$-value
to a pre-specified significance threshold $\alpha$, the null hypothesis $\nh$ 
is either rejected or fails to be rejected.

A Type I error occurs when the null hypothesis $\nh$ is true but the test 
incorrectly rejects it. Conversely, a Type II error occurs when $\nh$ is 
false but the test fails to reject it. A two-sample testing method is usually 
derived so that, under $\nh$, the Type I error is not larger than  
the pre-specified significance threshold $\alpha \in (0,1)$. 
In this case, we say the test controls the Type I error.
For a given $\alpha$, a 
two-sample testing method is preferred if it has a lower probability of making 
a Type II error, denoted by $\beta$. The power is defined as 
$1 - \beta$, and a two-sample testing method is more desirable if it 
demonstrates higher power. Two-sample testing methods are usually assessed
on their power, given that the Type I error is controlled.

\paragraph{Maximum Mean Discrepancy.} The Maximum Mean Discrepancy (MMD) is a 
kernel-based measure of distance between two distributions $\p$ and $\q$, 
by comparing their mean embeddings in a reproducing kernel 
Hilbert space \cite{aronszajn1950theory}. 
More formally, let $\mathcal{H}_k$ denote a reproducing kernel Hilbert space 
with kernel function $\kk$, the MMD between $\p$ and $\q$ is defined 
as \cite{gretton2012kernel}
\begin{align*}
	\text{MMD}^2[\mathcal{H}_k, \p, \q] = \sup_{f \in \mathcal{H}_k: ||f|| \le 1} (\mathbb{E}_{\bx \sim \p}[f(\bx)] - \mathbb{E}_{\by \sim \q}[f(\by)]).
\end{align*}
In the context of two-sample testing, an unbiased estimate of $\text{MMD}^2[\mathcal{H}_k, \p, \q]$ is:
\begin{align} %\label{def:mmd}
	\begin{split}
		\mmdu{\bx}{\by} &= \tfrac{1}{\nln{1}(\nln{1}-1)} \sum_{\iconstant = 1}^{\nln{1}} \sum_{ \substack{\jconstant = 1 \\ \jconstant \neq \iconstant } }^{\nln{2}} \kernel{\xln{\iconstant}}{\xln{\jconstant}}
		+ \tfrac{1}{\nln{2}(\nln{2}-1)} \sum_{\iconstant = 1}^{\nln{2}} \sum_{ \substack{\jconstant = 1 \\ \jconstant \neq \iconstant } }^{\nln{2}} \kernel{\yln{\iconstant}}{\yln{\jconstant}}
		- \tfrac{2}{\nln{1}\nln{2}}	\sum_{\iconstant = 1}^{\nln{1}} \sum_{\jconstant = 1}^{\nln{2}} \kernel{\xln{\iconstant}}{\yln{\jconstant}}.
        \nonumber
	\end{split}
\end{align}
The MMD test rejects the null hypothesis when $\mmdu{\bx}{\by}$ exceeds 
certain threshold, usually chosen to control the probability of Type I 
error under a pre-specified number $\alpha \in (0,1)$. It is shown 
in \cite{gretton2012kernel} that when  $\kk$ is a characteristic kernel 
\cite{sriperumbudur2011universality}, $\text{MMD}^2[\mathcal{H}_k, \p, \q] =0$ 
if and only if $\p = \q$, otherwise $\text{MMD}^2[\mathcal{H}_k, \p, \q] > 0$. 
Hence, when a characteristic kernel is used, the MMD test is able to detect 
any distribution changes. However, on finite sample sizes, the choices of 
kernel have significant impact on the power of MMD test \cite{biggs2024mmd}. 
Two popular choices of characteristic kernels are Laplacian and Gaussian 
kernels, which are defined as 
\begin{align} %\label{def:laplacian kernels}
    \kk_{L}({\x},{\y}) = \exp(-\beta ||\x - \y||_1),~~
	\kk_{G}({\x},{\y}) = \exp(-\gamma ||\x - \y||_2),
    \nonumber
\end{align}   
respectively, where $\gamma > 0$ and $\beta > 0$ are hyperparameters.

\paragraph{Permutation Test.} The permutation test is a common numerical 
procedure for approximating the distribution of $\mmdu{\bx}{\by}$ under 
the null hypothesis and deciding the test threshold. It is proved to be 
able to control the Type I error of MMD test non-asymptotically 
\cite{schrab2023mmd}. Let $\bx = \{\xln{1}, \ldots, \xln{\nln{1}}\}$ 
and $\by = \{\yln{1}, \ldots, \yln{\nln{2}}\}$, where each sample is 
$\ddimension$ real value. The permutation test begins with uniformly 
sampling $B$ $\iid$ permutations from $\{1,\ldots,\nln{1} + \nln{2}\}$. 
Let us denote the $B$ samplings as $(\sigma^{(1)}, \ldots, \sigma^{(B)})$, 
where each 
$\sigma^{(\iconstant)} = (\sigma^{(\iconstant)}(1),\ldots, 
\sigma^{(\iconstant)}(\nln{1}+\nln{2}))$, $\iconstant = 1,\ldots,B$. 
Subsequently, define 
$\zln{1} = \xln{1},\ldots, \zln{\nln{1}} = \xln{\nln{1}}, 
\zln{\nln{1}+1} = \yln{1},\ldots, \zln{\nln{1}+\nln{2}} = \yln{\nln{2}}$ 
and for any $\iconstant = 1,\ldots,B$, denote
\begin{align*}
	\bx_{\sigma^{(\iconstant)}} 
    = \{\zln{\sigma^{(\iconstant)}(1)},\ldots,\zln{\sigma^{(\iconstant)}(\nln{1})} \}, 
    \by_{\sigma^{(\iconstant)}} = \{\zln{\sigma^{(\iconstant)}(\nln{1}+1)},\ldots,
    \zln{\sigma^{(\iconstant)}(\nln{1}+\nln{2})} \}.
\end{align*}
The test threshold of MMD test, using $\mmdu{\bx}{\by}$ as test statistic 
with significance level $0 < \alpha < 1$, is then computed as 
the $\left\lceil \alpha(B+1) \right\rceil$-th largest numbers in the set
\begin{align*}
    \left\{\mmdu{\bx}{\by} \right\} \cup 
    \left\{ \mmdu{\bx_{\sigma^{(\iconstant)}}}{\by_{\sigma^{(\iconstant)}}}, 
    \iconstant = 1,\ldots,B\right\}.
\end{align*}
If $\mmdu{\bx}{\by}$ exceeds this threshold, the MMD test rejects the null 
hypothesis. Otherwise, the null hypothesis is not rejected.

%The $p$-value of MMD test, using $\mmdu{\bx}{\by}$ as test statistic, is then computed as
%\begin{align*}
%	\p = \frac{1}{B + 1} \left(1 + \sum_{\iconstant = 1}^{B} \indicator{\mmdu{\bx_{\sigma^{(\iconstant)}}}{\by_{\sigma^{(\iconstant)}}} \ge \mmdu{\bx}{\by}} \right).
%\end{align*}
%The null hypothesis is then rejected if $\p$ is smaller or equal to the pre-specified significance level $\alpha$, otherwise the null hypothesis is not rejected. 

\paragraph{Normality Approximation.}

It is well known that for fixed dimension $\dd$, the asymptotic 
distribution of $\mmdu{\bx}{\by}$ under the null hypothesis (i.e. $\p = \q$) 
takes the form of an infinite weighted sum of $\chi^2$ random variables 
with the weights depending on $\p$ \cite{gretton2012kernel}, which are 
normally unknown. An estimation of the asymptotic distribution of 
$\mmdu{\bx}{\by}$ is establish in \cite{gretton2009fast}, which is 
proved asymptotically correct regardless of the distribution $\p$ under 
certain conditions. However, empirical experiments show this estimation 
is rather conservative.

It was recently shown \cite{gao2023two} that the studentized 
$\mmdu{\bx}{\by}$ given by
Equation~\eqref{def:studentized mmd} convergences to 
normal distribution when all $\n,\m,\dd \to \infty$. This result holds 
for a wide range of kernels, including Laplacian and Gaussian kernels. 
This normality approximation is proved empirically rather accurate 
\cite{gao2023two} even for relatively small $\n, \m, \dd$, 
e.g. $\n = \m = 25, \dd = 50$, and the higher approximation accuracy 
is shown for larger $\n, \m, \dd$.

\section{Bounding MMD with Missing Data} \label{Bounding MMD with Missing Data}

In this section, we provide bounds for the MMD test statistic in the presence of 
missing data. To do so, we use the Laplacian kernel which is a characteristic 
kernel \cite{fukumizu2008characteristic} and able to detect any change in 
distribution. In the following, $\kernel{\x}{\y}$ will be used to denote the 
Laplacian kernel only. 

\paragraph{Univariate Samples.} 
If not all samples in $\bx$ and $\by$ are observed, 
the $\mmdu{\bx}{\by}$ statistic cannot be directly computed. 
However, for any two given real values $\x$ and $\y$, the Laplacian kernel 
$\kernel{\x}{\y} = \exp\left(-\beta |\x- \y|\right)$ is bounded by $(0,1]$ 
since $\beta > 0$ and $|\x - \y| > 0$. Hence, a straightforward way
to obtain a lower bound of $\mmdu{\bx}{\by}$ is to 
let $\kernel{\xln{\iconstant}}{\xln{\jconstant}} = 0$ if at least one sample 
of $\xln{\iconstant}$ and $\xln{\jconstant}$ is unobserved. Similarly, if at 
least one sample of $\yln{\iconstant}$ and $\yln{\jconstant}$ is unobserved, 
let $\kernel{\xln{\iconstant}}{\xln{\jconstant}} = 0$. On the other hand, 
to minimize $- \kernel{\xln{\iconstant}}{\yln{\jconstant}}$, we can 
let $- \kernel{\xln{\iconstant}}{\yln{\jconstant}} = -1$ if either 
$\xln{\iconstant}$ or $\yln{\jconstant}$ is unobserved. The upper 
bound of $\mmdu{\bx}{\by}$ can be obtained following the similar manner. 
This simplistic method, while providing an theoretically correct bound 
of $\mmdu{\bx}{\by}$, yields bounds that are too conservative for effective 
two-sample testing in practice. 
In the following, we construct tighter bounds of $\mmdu{\bx}{\by}$ in 
order to make it useful for the two-sample testing problem. 
%Without loss of generality, let us 
%assume $\xln{1}, \ldots, \xln{\mln{1}}$, $\yln{1}, \ldots, \yln{\mln{2}}$ are 
%samples that are not observed, i.e. missing. 
%We start by rewriting by rewriting $\mmdu{\bx}{\by}$ using the following lemma.
%We start by rewriting 
%$\mmdu{\bx}{\by} = \termone + \termtwo + \termthree + \termfour$,
%where $\termone$ contains terms 
%where both arguments of the kernel function are missing,
%while $\termtwo$ contains all the terms where no values are missing, 
%and $\termthree$ and $\termfour$ contain terms which have one argument
%of the kernel function missing; see Appendix XXX for details. 
One starts by decomposing $\mmdu{\bx}{\by}$ into terms which have either none, one or both
arguments of the kernel function missing; see Appendix \ref{supp:decomposition} for details. 
Given this decomposition, one only needs to bound
functions of the form in Equation~\eqref{eqn:boundT} below.
\begin{lemma} \label{lemma:2}
	Let $\xln{1}, \ldots, \xln{\ellsub{1}}$ and $\yln{1}, \ldots, \yln{\ellsub{2}}$ 
    be real values, that are observed.
    Suppose $\saln{1}, \ldots, \saln{\ellsub{1}}$, 
    $\bbln{1}, \ldots, \bbln{\ellsub{2}}$ and $\beta$ are positive constants. 
    Define
	\begin{align}
		\bt(\z) = \sum_{\iconstant = 1}^{\ellsub{1}} 
        \saln{\iconstant} \exp(-\beta|\xln{\iconstant} - \z|) 
        - \sum_{\iconstant = 1}^{\ellsub{2}} \bbln{\iconstant} \exp(-\beta|\yln{\iconstant} - \z|)
        \label{eqn:boundT}
	\end{align}
	as a function of $\z \in \mathbb{R}$. 
    Subsequently, for any given $\zln{0} \in \mathbb{R}$,
	\begin{align*} 
		\bt(\zln{0}) \ge \min \{0, \bt(\xln{1}),\ldots,  \bt(\xln{\ellsub{1}}), \bt(\yln{1}),\ldots,  \bt(\yln{\ellsub{2}})\},
	\end{align*}
	and
	\begin{align*} 
		\bt(\zln{0}) \le \max \{0, \bt(\xln{1}),\ldots,  \bt(\xln{\ellsub{1}}), \bt(\yln{1}),\ldots,  \bt(\yln{\ellsub{2}})\}.
	\end{align*}
\end{lemma}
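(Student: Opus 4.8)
The plan is to exploit a second-order differential identity that $\bt$ satisfies away from the ``knot'' points $\xln{1},\ldots,\xln{\ellsub{1}},\yln{1},\ldots,\yln{\ellsub{2}}$, and then to argue that the global extrema of $\bt$ can only occur at those knots or in the limit at infinity. First I would note that $\bt$ is continuous on $\mathbb{R}$ and that $\bt(\z) \to 0$ as $\z \to \pm\infty$, since every summand $\exp(-\beta|\cdot - \z|)$ tends to $0$; hence $\bt$ is bounded and its supremum and infimum are either attained at finite points or approached in a regime where $\bt \to 0$. The crucial step is the identity $\bt''(\z) = \beta^2 \bt(\z)$ at every $\z$ that is not a knot. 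To see it, fix an open interval between consecutive knots: there each $|\xln{\iconstant} - \z|$ equals $\xln{\iconstant}-\z$ or $\z-\xln{\iconstant}$ with a fixed sign, so $\exp(-\beta|\xln{\iconstant}-\z|)$ is a constant times $e^{\pm\beta\z}$ and satisfies $\tfrac{d^2}{d\z^2}\exp(-\beta|\xln{\iconstant}-\z|) = \beta^2\exp(-\beta|\xln{\iconstant}-\z|)$; the same holds for each $\yln{\iconstant}$ term, and summing against the coefficients $\saln{\iconstant}$ and $-\bbln{\iconstant}$ yields $\bt'' = \beta^2\bt$.

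For the upper bound I would split into two cases. If $\sup_{\z}\bt(\z) \le 0$, then $\bt(\zln{0}) \le 0 \le \max\{0,\bt(\xln{1}),\ldots,\bt(\yln{\ellsub{2}})\}$ and there is nothing more to prove. Otherwise $M := \sup_{\z}\bt(\z) > 0$, and since $\bt \to 0$ at infinity this supremum is attained at some finite $\z^*$. If $\z^*$ were not a knot, then $\bt$ would be twice differentiable there with $\bt'(\z^*)=0$ and $\bt''(\z^*)\le 0$, so the identity gives $\beta^2\bt(\z^*)=\bt''(\z^*)\le 0$, forcing $\bt(\z^*)\le 0$ and contradicting $\bt(\z^*)=M>0$. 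Hence $\z^*$ is one of the knots, and $\bt(\zln{0}) \le M = \bt(\z^*) \le \max\{0,\bt(\xln{1}),\ldots,\bt(\yln{\ellsub{2}})\}$, which is exactly the claimed bound.

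The lower bound then follows immediately by applying the identical argument to $-\bt$, which has precisely the same form as $\bt$ (with the roles of the $x$'s and the $y$'s, and of the $a$'s and the $b$'s, interchanged) and the same knot set; negating the resulting upper bound for $-\bt$ converts the $\max$ into the $\min$ in the statement.

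I expect the main obstacle to be the careful treatment of the knots, where $\bt$ is only continuous and not differentiable. The argument must make clear why a \emph{positive} global maximum cannot sit in the interior of a smooth piece, which is ruled out by $\bt''=\beta^2\bt>0$ there, while it may perfectly well occur at a knot (where $\bt$ has a downward kink), and why this possibility is harmless precisely because those knots are the points $\xln{\iconstant},\yln{\iconstant}$ already appearing on the right-hand side. Stating the second-order necessary condition only at points where differentiability is guaranteed, and handling the unbounded tails via the $\bt\to 0$ limit, is the delicate bookkeeping I would be most careful about.
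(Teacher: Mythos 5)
Your proof is correct, and it takes a genuinely different route from the paper's. Both arguments hinge on the same key identity $\bt''(\z) = \beta^2 \bt(\z)$ away from the knots, but you deploy it globally while the paper deploys it locally. The paper's proof is piecewise: on the two unbounded tails it factors $\bt(\z)$ explicitly as a positive multiple of $\bt(\zln{1})$ (resp.\ $\bt(\zln{2})$), and on each interval between consecutive knots $\zln{3} < \zln{4}$ it writes $\bt(\z) = A e^{-\beta(\z - \zln{3})} + B e^{-\beta(\zln{4} - \z)}$, proves an auxiliary claim that $\bt$ vanishes at most once on the interval, and then runs a two-case contradiction argument on the sign of $\bt'(\zln{0})$ using monotonicity of $\bt'$ and the mean value theorem, ultimately showing $\bt(\zln{0}) \ge \min\{0, \bt(\zln{3}), \bt(\zln{4})\}$ interval by interval. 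You replace this entire apparatus with a single global-extremum argument: since $\bt$ is continuous and decays to $0$ at $\pm\infty$, a positive supremum is attained at some finite $\z^*$, and the second-order necessary condition $\bt''(\z^*) \le 0$ combined with $\bt''(\z^*) = \beta^2 \bt(\z^*) > 0$ rules out any non-knot location — correctly noting that the condition may only be invoked where $\bt$ is twice differentiable, and that a kinked maximum at a knot is harmless because the knots are exactly the points appearing on the right-hand side. Both proofs obtain the complementary inequality by the same negation symmetry (the paper proves the min-bound first and negates; you prove the max-bound first). Your route is shorter and less error-prone; what the paper's finer piecewise analysis buys is slightly stronger local information (the bound on each interval involves only the two adjacent knots, and the tails are shown to be monotone exponential in $\bt(\zln{1})$, $\bt(\zln{2})$), though none of this extra precision is needed for the lemma as stated.
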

This lemma is proved in Appendix \ref{Proof of Lemma 2}.
%If we denote
It provides a linear time algorithm for computing the bounds of $\bt(\z)$ 
by computing all the values in the set 
$\{ \bt(\z) : \z \in \{\xln{1}, \ldots, \xln{\ellsub{1}}, \yln{1}, \ldots, \yln{\ellsub{2}}\} \}$ 
and identifying the maximum and minimum values.

\paragraph{Multivariate Samples.} We now consider the case 
where $\bx = \{\xln{1}, \ldots, \xln{\nln{1}} \}$, 
$\by = \{\yln{1}, \ldots, \yln{\nln{2}}\}$ 
and $\bx,\by \in \mathbb{R}^{\dd}$ with $\dd > 1$. 
For each sample $\xln{\iconstant}$ in $\bx$, denote 
$\xln{\iconstant} = (\xln{\iconstant}(1), \ldots, \xln{\iconstant}(\dd))$, 
where $\xln{\iconstant}(\lconstant) \in \mathbb{R}$ is the value 
the $\lconstant$th component of $\xln{\iconstant}$. Similarly, for each 
sample $\yln{\iconstant}$ in $\by$, denote 
$\yln{\iconstant} = (\yln{\iconstant}(1), \ldots, \yln{\iconstant}(\dd))$. 
%Without loss of generality, let us assume 
%$\xln{1}, \ldots, \xln{\mln{1}}$ and $\yln{1}, \ldots, \yln{\mln{2}}$ are 
%the samples that are not completely observed, i.e. with at least one 
%component missing. 
%Following Lemma \ref{lemma:1}, 
%we can decompose the $\mmdu{\bx}{\by}$ into four parts 
%$\termone, \termtwo, \termthree$ and $\termfour$ with part $\termone$ 
%includes incomplete samples only, $\termtwo$ includes complete samples 
%only, and $\termthree$, $\termfour$ mixed with both complete and 
%incomplete samples.
We now provide notation for which components of a $d$-dimensional
observation are missing and define what it means to impute such a value.
\begin{definition}
    For a value $\z = (\z(1), \dots, \z(d)) \in \mathbb{R}^{d}$, let 
    $\buln{\z}$ denote the set of components of $\z$ that are missing; 
    in other words $\{\z(j)  \,: \, j \in \buln{\z}\}$ are missing.
\end{definition}
\begin{definition}
    For a value $\z \in \mathbb{R}^{d}$, that has
    missing components, $z^{\ast}$ is called an \emph{imputation} of $z$ if
    $z^{\ast} \in \mathbb{R}^{d}$ 
    is fully observed and $z^{\ast}(j) = z(j)$ for all $j \in \{1, \dots, d \} \setminus \buln{z}$.
\end{definition}
%In order to bound $\termthree$, it is then sufficient to bound each 
%$\btln{1}(\xln{\iconstant}(\lconstant); \lconstant \in \buln{\xln{\iconstant}})$ 
%separately. 
%To do so, we first extend Lemma \ref{lemma:2} into high dimensional cases:
We can now extend Lemma \ref{lemma:2} to the $d$-dimensional case:
\begin{lemma} \label{lemma:4}
	Let $\xln{1}, \ldots, \xln{\ellsub{1}}, \yln{1}, \ldots, \yln{\ellsub{2}} \in \mathbb{R}^d$
    be values that are fully observed.
    Suppose $\saln{1}, \ldots, \saln{\ellsub{1}}$, $\bbln{1}, \ldots, \bbln{\ellsub{2}}, \beta$ 
    are positive constants. For $\z = (\z(1), \ldots, \z(\dd) ) \in \mathbb{R}^d$  
    with missing components, define
    \begin{align*}
    	\bt( \{ \z (j) \,:\, j \in \buln{z} \}) 
    	%\bt(\zln{1}, \ldots, \zln{\dd}) 
    	= \sum_{\iconstant = 1}^{\ellsub{1}} \saln{\iconstant}  
    	\exp\left(-\beta \sum_{\jconstant \in \buln{\z}} | \xln{\iconstant}(\jconstant) 
    	- \z(\jconstant)|\right) 
    	- \sum_{\iconstant = 1}^{\ellsub{2}} \bbln{\iconstant} 
    	\exp\left(-\beta  \sum_{\jconstant \in \buln{\z}} |\yln{\iconstant}(\jconstant) - \z(\jconstant)|\right)
    \end{align*}
    as a function of the unobserved components of $\z$ and let 
    \begin{align*}
    	\setchi = \{ \bt( \{ \z(j) \,:\, j \in \buln{z} \}) \ :\, 
    	\z(\iconstant)\in \{\xln{1}(\iconstant), \ldots, \xln{\ellsub{1}}(\iconstant), 
    	\yln{1}(\iconstant), \ldots, \yln{\ellsub{2}}(\iconstant) \}, \iconstant \in \buln{z} \}.
    \end{align*}
%    denote the possible values for $\bt( \{ \zln{j} \,:\, j \in \buln{z} \})$ when the
%    missing components of $z$ are imputed with values equal to a component of either $x$ or $y$.
    Then, for any possible imputation $z^{\ast}$ of $\z$, 
%    i.e. $z^{\ast} \in \mathbb{R}^{d}$ 
%    is fully observed and $z^{\ast}(j) = z(j)$ for all $j \in \buln{z}$, we have
    %Then, we have
	\begin{align*}
        \min \{0, \min\setchi\} \leq \bt( \{ z^{\ast}(j) \,:\, j \in \buln{z} \})  \leq \max \{0, \max\setchi\} 
	\end{align*}
\end{lemma}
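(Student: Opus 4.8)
The plan is to reduce the $\dd$-dimensional statement to repeated application of the univariate Lemma~\ref{lemma:2}, optimizing one missing component of $\z$ at a time. The crucial observation is that if every missing component except one, say $\jconstant_0 \in \buln{\z}$, is held fixed, then $\bt$ becomes a function of the single free variable $\z(\jconstant_0)$ of exactly the form treated in Lemma~\ref{lemma:2}. Indeed, each summand factorizes as $\exp(-\beta |\xln{\iconstant}(\jconstant_0) - \z(\jconstant_0)|) \cdot \exp(-\beta \sum_{\jconstant \in \buln{\z},\, \jconstant \neq \jconstant_0} |\xln{\iconstant}(\jconstant) - \z(\jconstant)|)$, and with the remaining components held fixed the second factor is a strictly positive constant. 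Absorbing these constants into new positive weights $\salnp{\iconstant}, \bblnp{\iconstant}$ turns $\bt$ into $\sum_{\iconstant} \salnp{\iconstant} \exp(-\beta|\xln{\iconstant}(\jconstant_0) - \z(\jconstant_0)|) - \sum_{\iconstant} \bblnp{\iconstant} \exp(-\beta|\yln{\iconstant}(\jconstant_0) - \z(\jconstant_0)|)$, a univariate function whose anchor points are precisely the coordinate values $\xln{1}(\jconstant_0), \ldots, \yln{\ellsub{2}}(\jconstant_0)$ that appear in the definition of $\setchi$.

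Given this reduction, I would establish the upper bound by a coordinate-sweep induction on the missing components $\buln{\z} = \{\jconstant_1, \ldots, \jconstant_{\m}\}$. Starting from an arbitrary imputation $z^{\ast}$, apply Lemma~\ref{lemma:2} along $\jconstant_1$ with all other components fixed at their $z^{\ast}$-values: this yields $\bt(\{z^{\ast}(j) : j \in \buln{\z}\}) \le \max\{0, M_1\}$, where $M_1$ is the largest value of $\bt$ obtained by resetting $\z(\jconstant_1)$ to one of its anchor values. If the maximum is attained by $0$ we stop; otherwise some anchor value for $\jconstant_1$ does not decrease $\bt$, and we pass to $\jconstant_2$ and repeat with the new fixed value in place. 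After at most $\m$ steps, either the bound $\bt \le 0$ has been triggered at some stage, or every missing component has been snapped to one of its anchor values without ever decreasing $\bt$, producing a point whose value lies in $\setchi$. In both cases $\bt(\{z^{\ast}(j) : j \in \buln{\z}\}) \le \max\{0, \max\setchi\}$. The lower bound follows by the symmetric sweep using the minimum half of Lemma~\ref{lemma:2}.

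I expect the only real care to be bookkeeping rather than a deep obstacle. One must verify that the absorbed weights stay strictly positive at every stage, which holds because each omitted factor is the exponential of a finite real number, so Lemma~\ref{lemma:2} applies afresh at each coordinate regardless of how the previously snapped coordinates were set. One must also correctly thread the ``$0$'' alternative of Lemma~\ref{lemma:2} through the induction: if at some coordinate the univariate bound is attained by $0$ rather than by an anchor value, the monotone chain of $\bt$-values built by the successive snaps must be short-circuited directly to the conclusion $\bt(\{z^{\ast}(j):j\in\buln{\z}\}) \le 0 \le \max\{0,\max\setchi\}$. Once these two points are handled, the argument is a clean finite induction on $|\buln{\z}|$, and the same template delivers both inequalities.
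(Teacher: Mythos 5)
Your proposal is correct and takes essentially the same route as the paper: both proofs reduce to the univariate Lemma~\ref{lemma:2} by fixing all but one missing coordinate, absorbing the remaining exponential factors into new strictly positive weights $\salnp{\iconstant}, \bblnp{\iconstant}$, and then iterating over the coordinates in $\buln{z}$, with the anchor set for each coordinate unchanged by the previous snaps. The paper packages this iteration as a formal induction on $|\buln{z}|$ (inductive hypothesis on the tail coordinates, then Lemma~\ref{lemma:2} on the first coordinate evaluated at the optimizing anchors), and your greedy coordinate sweep with the explicit $0$ short-circuit is just the unrolled form of that same argument.
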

Lemma \ref{lemma:4} is proved in Appendix \ref{Proof of Lemma 4}. 
It shows that in order to compute the maximum and minimum
values of 
$\bt( \{ \z (j) \,:\, j \in \buln{z} \})$ for $z \in \mathbb{R}^d$,
we only need to check the imputations of $\z$ where its missing components
are imputed using the components of 
$\xln{1}, \ldots, \xln{\ellsub{1}}, \yln{1}, \ldots, \yln{\ellsub{2}}$.
However, computing $\bt( \{ \z (j) \,:\, j \in \buln{z} \})$ for all possible 
imputations using Lemma \ref{lemma:4} is 
$(\ellsub{1} + \ellsub{2})^{|\buln{\xln{\iconstant}}|}$, which 
is exponential in the number of unobserved components of $\z$ and impractical
to compute.
%If we let $\buln{\x}$ denote the set of components
%of the value $\x \in \mathbb{R}^{d}$ that are missing, then we show in 
%Appendix \ref{Proof of Lemma 4} that for 
%any $\iconstant \in \{1,\ldots,\mln{1}\}$, 
%$\btln{1}(\xln{\iconstant}(\lconstant); \lconstant \in \buln{\xln{\iconstant}})$ 
%is a special case of $\bt(\zln{1}, \ldots, \zln{\dd})$, and the computation cost 
%to determine the exact bounds of 
%$\btln{1}(\xln{\iconstant}(\lconstant); \lconstant \in \buln{\xln{\iconstant}})$ 
%This exponential growth in computational requirements makes the exact 
%calculations of bounds of 
%$\btln{1}(\xln{\iconstant}(\lconstant); \lconstant \in \buln{\xln{\iconstant}})$ impractical.
To address this computational challenge, we propose to further bound $\min\setchi$ and $\max\setchi$, using the following lemma:
\begin{lemma} \label{lemma:5}
	Following the notation and definitions in Lemma \ref{lemma:4}, denote 
	\begin{align*}
		\xhln{\iconstant}(\jconstant) &= \max \{ |\xln{\iconstant}(\jconstant) - \xln{1}(\jconstant)|, \ldots, |\xln{\iconstant}(\jconstant) - \xln{\ellsub{1}}(\jconstant)|, |\xln{\iconstant}(\jconstant) - \yln{1}(\jconstant)|, \ldots, |\xln{\iconstant}(\jconstant) - \yln{\ellsub{2}}(\jconstant)|\}
	\end{align*}
	for any $\iconstant \in \{1,\ldots,\ellsub{1}\}, \jconstant \in \buln{z}$; denote
	\begin{align*}
	    \yhln{\iconstant}(\jconstant) &= \max \{ |\yln{\iconstant}(\jconstant) - \ellsub{1}(\jconstant)|, \ldots, |\yln{\iconstant}(\jconstant) - \xln{\nln{1}}(\jconstant)|, |\yln{\iconstant}(\jconstant) - \yln{1}(\jconstant)|, \ldots, |\yln{\iconstant}(\jconstant) - \yln{\ellsub{2}}(\jconstant)|\}
	\end{align*}
	for any $\iconstant \in \{1,\ldots,\ellsub{2}\}, \jconstant \in \buln{z}$. Subsequently,
	\begin{align*}
		&\min \setchi \ge  \sum_{\iconstant = 1}^{\ellsub{1}} \saln{\iconstant} \exp\left(-\beta \sum_{ \jconstant \in \buln{z}} \xhln{\iconstant}(\jconstant) \right) - \sum_{\iconstant = 1}^{\ellsub{2}} \bbln{\iconstant},~\max \setchi \le  \sum_{\iconstant = 1}^{\ellsub{1}} \saln{\iconstant} - \sum_{\iconstant = 1}^{\ellsub{2}} \bbln{\iconstant} \exp\left(-\beta \sum_{ \jconstant \in \buln{z}} \yhln{\iconstant}(\jconstant) \right).
	\end{align*}
\end{lemma}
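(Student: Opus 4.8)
The plan is to establish both inequalities by bounding $\bt$ separately on its two groups of exponential terms, evaluated at an arbitrary member of $\setchi$, and then passing to the minimum (respectively maximum) over $\setchi$. First I would fix any element of $\setchi$; by its definition in Lemma~\ref{lemma:4} this corresponds to an imputation in which each missing coordinate $\z(\jconstant)$, $\jconstant \in \buln{z}$, is set equal to one of the observed values $\xln{1}(\jconstant), \ldots, \xln{\ellsub{1}}(\jconstant), \yln{1}(\jconstant), \ldots, \yln{\ellsub{2}}(\jconstant)$. This restriction is the crux of the argument: because $\xhln{\iconstant}(\jconstant)$ is defined as the largest distance from $\xln{\iconstant}(\jconstant)$ to exactly these observed coordinate values, we are guaranteed $|\xln{\iconstant}(\jconstant) - \z(\jconstant)| \le \xhln{\iconstant}(\jconstant)$ for every such imputation, and analogously $|\yln{\iconstant}(\jconstant) - \z(\jconstant)| \le \yhln{\iconstant}(\jconstant)$.

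For the lower bound on $\min \setchi$, I would treat the two sums asymmetrically. Summing the coordinatewise bound over $\jconstant \in \buln{z}$ and using that $\st \mapsto \exp(-\beta \st)$ is decreasing gives $\exp(-\beta \sum_{\jconstant \in \buln{z}} |\xln{\iconstant}(\jconstant) - \z(\jconstant)|) \ge \exp(-\beta \sum_{\jconstant \in \buln{z}} \xhln{\iconstant}(\jconstant))$; multiplying by $\saln{\iconstant} > 0$ and summing over $\iconstant$ bounds the first (positive) sum from below. For the subtracted sum, each exponent is nonpositive, so $\exp(-\beta \sum_{\jconstant \in \buln{z}} |\yln{\iconstant}(\jconstant) - \z(\jconstant)|) \le 1$, and since $\bbln{\iconstant} > 0$ the subtracted part is bounded below by $-\sum_{\iconstant} \bbln{\iconstant}$. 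Adding the two bounds yields the claimed lower bound for the chosen element of $\setchi$; since this bound does not depend on the particular imputation, it also holds for $\min \setchi$.

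The upper bound on $\max \setchi$ is the mirror image. For the positive sum I would use $\exp(-\beta \sum_{\jconstant \in \buln{z}} |\xln{\iconstant}(\jconstant) - \z(\jconstant)|) \le 1$ to bound it above by $\sum_{\iconstant} \saln{\iconstant}$, and for the subtracted sum I would use $|\yln{\iconstant}(\jconstant) - \z(\jconstant)| \le \yhln{\iconstant}(\jconstant)$ together with monotonicity to obtain $\exp(-\beta \sum_{\jconstant \in \buln{z}} |\yln{\iconstant}(\jconstant) - \z(\jconstant)|) \ge \exp(-\beta \sum_{\jconstant \in \buln{z}} \yhln{\iconstant}(\jconstant))$, so that multiplying by $-\bbln{\iconstant}$ (which flips the inequality) bounds the subtracted part above by $-\sum_{\iconstant} \bbln{\iconstant} \exp(-\beta \sum_{\jconstant \in \buln{z}} \yhln{\iconstant}(\jconstant))$. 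Again the bound is uniform over $\setchi$, so it controls $\max \setchi$.

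There is no serious obstacle here: the argument is a term-by-term application of the monotonicity of the exponential and the coordinatewise distance bound. The only points demanding care are bookkeeping ones, namely keeping the two sums decoupled (which is exactly what makes these bounds loose but cheap to evaluate) and getting the inequality directions right when multiplying the subtracted terms by the negative coefficients $-\bbln{\iconstant}$. It is worth stating explicitly that this bound deliberately trades tightness for a closed form that avoids the $(\ellsub{1}+\ellsub{2})^{|\buln{\z}|}$ enumeration of Lemma~\ref{lemma:4}.
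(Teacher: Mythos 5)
Your proposal is correct and matches the paper's own proof in all essentials: both arguments fix an element of $\setchi$, use that each imputed coordinate $\z(\jconstant)$ lies among the observed values so that $|\xln{\iconstant}(\jconstant) - \z(\jconstant)| \le \xhln{\iconstant}(\jconstant)$, apply monotonicity of $\st \mapsto \exp(-\beta \st)$ to the positive sum, and bound the subtracted exponentials by $1$ (the paper evaluates at the minimizer of $\setchi$ directly, while you bound an arbitrary element uniformly, which is the same argument). The paper likewise proves only the lower bound and notes the upper bound is symmetric, exactly as in your mirror-image treatment.
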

This lemma is proved in Appendix \ref{Proof of Lemma 5}. It can be seen that for each $\iconstant \in \{1,\ldots,\ellsub{1}\}, \jconstant \in \buln{z}$, $|\xhln{\iconstant}(\jconstant) | = \ellsub{1} + \ellsub{2}$ according to its definition. Hence, to bound $\min \setchi$, the computation times $|\buln{z}|\ellsub{1}(\ellsub{1} + \ellsub{2})$ are needed, which is of order ${O}(\dd \nln{1}(\nln{1} + \nln{2}))$. Similarly, to bound $\max \setchi$, the order of computation complexity is ${O}(\dd\nln{2}(\nln{1} + \nln{2}))$. This computational cost reduction, compared with Lemma \ref{lemma:4}, which grows exponentially with $|\buln{z}|$, makes bounding $\mmdu{\bx}{\by}$ possible in practice.

\paragraph{Bounds for MMD statistic}
The bounds derived above do not depend on the missing data mechanisms.
Combining Lemmas~\ref{lemma:2}, \ref{lemma:4} and \ref{lemma:5}, we have our main result:
\begin{theorem} \label{theorem:1}
    For data $\xln{1}, \ldots, \xln{\nln{1}} \in \mathbb{R}^d$ and 
    $\yln{1}, \ldots, \yln{\nln{2}} \in \mathbb{R}^d$ 
    with missing values, bounds for the MMD
    statistic with the Laplacian kernel 
    can be computed in $O(\nln{1} + \nln{2})$ when $d=1$ and  
    $O (d (\nln{1} + \nln{2})^2)$ when $d > 1$.
\end{theorem}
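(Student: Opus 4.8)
The plan is to read Theorem~\ref{theorem:1} as a complexity accounting: the \emph{correctness} of the bounds is already guaranteed by Lemmas~\ref{lemma:2}, \ref{lemma:4} and \ref{lemma:5}, so only the running times must be established. First I would recall the decomposition from Appendix~\ref{supp:decomposition}, which writes $\mmdu{\bx}{\by}$ as a weighted sum of kernel terms partitioned by how many arguments are missing: (i) \emph{fully observed} terms, whose value is fixed; (ii) \emph{one-missing} terms, which, collected according to the single missing value they involve, assemble into functions of the form $\bt(\z)$ in Equation~\eqref{eqn:boundT}; and (iii) \emph{both-missing} terms, whose kernels lie in $(0,1]$ and are replaced by the extreme value $0$ or $1$ dictated by the sign of their coefficient. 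I would argue for the lower bound and note that the upper bound is symmetric (max versus min, and $0$ versus $0$, in each lemma), so a single cost count suffices. Throughout let $\mln{1},\mln{2}$ denote the numbers of missing values in $\bx,\by$.

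For $\dd=1$ I would show all three groups cost $O(\nln{1}+\nln{2})$. The fully observed contribution is a sum of univariate Laplacian kernels $\exp(-\beta|\xln{\iconstant}-\xln{\jconstant}|)$: after sorting the observed values, the absolute value becomes a signed difference on each side, so $\sum_{\jconstant}\exp(-\beta|\xln{\iconstant}-\xln{\jconstant}|)$ factorises as $\exp(\mp\beta\xln{\iconstant})$ times a prefix/suffix sum of $\exp(\pm\beta\xln{\jconstant})$, letting all such sums be evaluated in linear time. The crucial observation for the one-missing group is twofold: by the MMD normalisation, every missing value of $\bx$ induces the \emph{same} function $\bt$ (identical coefficients $\saln{\iconstant},\bbln{\iconstant}$ over the observed points), and likewise for $\by$; hence by Lemma~\ref{lemma:2} their common bound $\min\{0,\bt(\xln{1}),\ldots,\bt(\yln{\ellsub{2}})\}$ is computed once---via the same prefix-sum evaluation of $\bt$ at all observed points, again $O(\nln{1}+\nln{2})$---and the total one-missing contribution is obtained by multiplying this single number by $\mln{1}$ (resp.\ $\mln{2}$). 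The both-missing group is aggregated the same way: its pairs number $\mln{1}(\mln{1}-1)$, $\mln{2}(\mln{2}-1)$ and $\mln{1}\mln{2}$, so its contribution is read off from these counts in $O(1)$ without enumeration. Summing yields $O(\nln{1}+\nln{2})$.

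For $\dd>1$ the multivariate Laplacian kernel no longer factorises across coordinates in a way amenable to prefix sums, so I would compute the fully observed terms directly: at most $(\nln{1}+\nln{2})^2$ pairs at $O(\dd)$ each, i.e.\ $O(\dd(\nln{1}+\nln{2})^2)$. The one-missing group is bounded through Lemmas~\ref{lemma:4} and~\ref{lemma:5}, and the key point that converts the cost of Lemma~\ref{lemma:4}, exponential in $|\buln{z}|$, into a polynomial one is that the quantities $\xhln{\iconstant}(\jconstant)$ and $\yhln{\iconstant}(\jconstant)$ of Lemma~\ref{lemma:5} do not depend on $\z$: they are precomputed once over all $\iconstant$ and all coordinates $\jconstant$ in $O(\dd(\nln{1}+\nln{2})^2)$. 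Given these, the bound attached to one missing point costs $O(\dd\nln{1})$ (summing $\xhln{\iconstant}(\jconstant)$ over $\jconstant\in\buln{z}$ for each of the $\ellsub{1}\le\nln{1}$ observed points, the term $\sum\bbln{\iconstant}$ being precomputed), so over the at most $\nln{1}+\nln{2}$ missing points the one-missing group also costs $O(\dd(\nln{1}+\nln{2})^2)$; the both-missing group is again handled by counts. Adding the three groups gives the claimed $O(\dd(\nln{1}+\nln{2})^2)$.

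The step I expect to be the main obstacle is the $\dd=1$ linear-time claim: reaching $O(\nln{1}+\nln{2})$ rather than the naive $O((\nln{1}+\nln{2})^2)$ rests on three things that must be made precise together---the prefix/suffix-sum linearisation of univariate Laplacian kernel sums, the fact that all missing values of a given sample share one bounding function $\bt$, and the count-based aggregation of the both-missing pairs so that no quadratic enumeration ever occurs. I would also be explicit about the sorting step, which costs $O((\nln{1}+\nln{2})\log(\nln{1}+\nln{2}))$ and is either assumed to be performed on pre-sorted input or understood to be suppressed in the stated leading-order bound, and I would verify that every estimate above transfers verbatim to the upper bound by symmetry.
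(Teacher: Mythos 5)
Your proposal is correct and, in its overall architecture, matches the paper's proof: the paper likewise proves Theorem~\ref{theorem:1} via the decomposition of Lemma~\ref{lemma:1} into $\termone,\ldots,\termfour$, bounds $\termthree$ and $\termfour$ by observing that all missing points of a given sample share one bounding function (so the Lemma~\ref{lemma:2} extremum is computed once and multiplied by $\mln{1}$ or $\mln{2}$; Theorem~\ref{supp:theorem:1}), and for $\dd>1$ precomputes the quantities $\xhln{\iconstant}(\lconstant),\yhln{\iconstant}(\lconstant)$ in $O(\dd(\nln{1}+\nln{2})^2)$ before applying Lemmas~\ref{lemma:4} and~\ref{lemma:5} per missing point (Theorem~\ref{supp:theorem:2}). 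Two deviations are worth recording. First, for $\dd=1$ your treatment is actually more complete than the paper's: the appendix justifies $O(\nln{1}+\nln{2})$ merely by counting the evaluations of $\btln{1},\btln{2}$ at the observed points, but each naive evaluation costs $O(\nln{1}+\nln{2})$, and the fully observed block $\termtwo$ is naively quadratic as well; your sorting plus prefix/suffix-sum factorisation of $\sum_{\jconstant}\exp(-\beta|\xln{\iconstant}-\xln{\jconstant}|)$ (in the spirit of the cited work \cite{bodenham2023eummd}) is precisely what is needed to substantiate the linear-time claim, modulo the $O(n\log n)$ sort you rightly flag. Second, for $\dd>1$ the paper does \emph{not} handle the both-missing block $\termone$ by pure pair counts as you do: since ``missing'' multivariate points are only partially observed, it bounds $\termone$ with the incomplete Laplacian kernel $\kernels{\x}{\y}$ over jointly observed coordinates (Lemma~\ref{supp:lemma:3}, inequality~\eqref{eqn:boundstermone:multi}), which is strictly tighter than your $(0,1]$ count bound at the same $O(\dd(\nln{1}+\nln{2})^2)$ cost; both suffice for the theorem as stated, but the paper's version is what the implemented bound uses. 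Relatedly, the coefficients $\salnp{\iconstant,\jconstant},\bblnp{\iconstant,\jconstant}$ feeding Lemma~\ref{lemma:5} are themselves $\kk^*$ values depending on the missing point, so your parenthetical claim that the sum of the $\bbln{\iconstant}$ is ``precomputed'' is slightly off --- it must be recomputed per missing point at $O(\dd\,\nln{2})$ --- but this stays within the stated budget, so your complexity accounting survives intact.
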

This theorem is proved by a series of results in Appendix~\ref{Proof of Theorem 1} and 
\ref{Proof of Theorem 2}.

%Similar to our construction of bounds for univariate samples, 
%the results presented in this section do not assume any specific values for the missing data. 
%Consequently, the bounds established in Theorem \ref{supp:theorem:2} 
%are valid regardless of the missingness mechanisms.

%we then proceed by bounding the term $\termthree$ and $\termthree$. The details are included in the Theorem \ref{supp:theorem:2} in Appendix \ref{Proof of Theorem 2}, which concludes our discuss for bounding $\mmdu{\bx}{\by}$ for multivariate samples. 

\section{Two-Sample Testing in the Presence of Arbitrarily Missing Data} \label{Testing with Missing Data Using Bounds of MMD}

%The previous section discusses bounds of MMD in the presence of missing data.
%in both univariate and multivariate cases 
%(see theorem \ref{supp:theorem:1} and \ref{supp:theorem:2}). 
%However, determining these bounds alone does not directly yield a test result. 
In this section, we discuss methods for employing bounds derived in 
Section~\ref{Bounding MMD with Missing Data} to develop valid two-sample 
testing methods. %which control the Type I error regardless of values of missing data.
Besides computing the test statistic, we also need a method for computing 
a $p$-value from the test statistic.

\paragraph{Bounding $p$-value using permutations.} The permutation test, 
as discussed in Section \ref{Background}, is a numerical procedure that can 
be used to a $p$-value for any statistic based on the observed data. 
%After computing the test threshold as described in Section \ref{Background}, the MMD test rejects the null hypothesis only if the $\mmdu{\bx}{\by}$ is larger than the test threshold. 
Instead of computing the threshold as in Section \ref{Background}, an 
equivalent approach to compute the $p$-value of the MMD test is 
\begin{align} \label{def:p value using permutations}
	\p = \frac{1}{B + 1} \left(1 + \sum_{\iconstant = 1}^{B} \indicator{ \mmdu{\bx_{\sigma^{(\iconstant)}}}{\by_{\sigma^{(\iconstant)}}} \ge \mmdu{\bx}{\by}} \right).
\end{align}
The null hypothesis is rejected if $\p$ is smaller or equal to the 
pre-specified significance level $\alpha$.  

Using the bounds of MMD in the presence of missing data, we proceed by 
providing bounds of the $\p$-value, using the following theorem: 
\begin{theorem} \label{theorem:3}
	Suppose $\bx = \{\xln{1}, \ldots, \xln{\nln{1}}\}, \by = \{\yln{1}, \ldots, \yln{\nln{2}}\} \in \ddimension$. 
    Let $(\sigma^{(1)}, \ldots, \sigma^{(B)})$ be $B$ $\iid$ 
    random permutations of $\{1,\ldots,\nln{1} + \nln{2}\}$ and denoted as $\sigma^{(\iconstant)} = (\sigma^{(\iconstant)}(1),\ldots, \sigma^{(\iconstant)}(\nln{1}+\nln{2}))$, $\iconstant = 1,\ldots,B$. Subsequently, let $\zln{1} = \xln{1},\ldots, \zln{\nln{1}} = \xln{\nln{1}}, \zln{\nln{1}+1} = \yln{1},\ldots, \zln{\nln{1}+\nln{2}} = \yln{\nln{2}}$ and for any $\iconstant = 1,\ldots,B$, denote
	\begin{align*}
		\bx_{\sigma^{(\iconstant)}} = \{\zln{\sigma^{(\iconstant)}(1)},\ldots,\zln{\sigma^{(\iconstant)}(\nln{1})} \}, \by_{\sigma^{(\iconstant)}} = \{\zln{\sigma^{(\iconstant)}(\nln{1}+1)},\ldots,\zln{\sigma^{(\iconstant)}(\nln{1}+\nln{2})} \},
	\end{align*}
	and define $p$ according to Equation~\eqref{def:p value using permutations}. Suppose further
	 \begin{align*}
	 	\mmdulower{\bx}{\by}\le \mmdu{\bx}{\by},~ \mmdu{\bx_{\sigma^{(\iconstant)}}}{\by_{\sigma^{(\iconstant)}}} \le \mmduupper{\bx_{\sigma^{(\iconstant)}}}{\by_{\sigma^{(\iconstant)}}},~\iconstant = 1,\ldots, B.
	 \end{align*}
	 Define
	 \begin{align*}
	 	\upperp = \frac{1}{B + 1} \left(1 + \sum_{\iconstant = 1}^{B} \indicator{\mmduupper{\bx_{\sigma^{(\iconstant)}}}{\by_{\sigma^{(\iconstant)}}} \ge \mmdulower{\bx}{\by}} \right).
	 \end{align*}
	 Then, we must have $\upperp \ge \p$.
\end{theorem}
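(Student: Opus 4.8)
The plan is to prove $\upperp \ge \p$ termwise, reducing it to a comparison of the indicator functions appearing in their defining sums. Since both quantities share the factor $\tfrac{1}{B+1}$ and the leading constant $1$, it suffices to show that for each $\iconstant \in \{1, \ldots, B\}$,
\begin{align*}
\indicator{\mmdu{\bx_{\sigma^{(\iconstant)}}}{\by_{\sigma^{(\iconstant)}}} \ge \mmdu{\bx}{\by}} \le \indicator{\mmduupper{\bx_{\sigma^{(\iconstant)}}}{\by_{\sigma^{(\iconstant)}}} \ge \mmdulower{\bx}{\by}}.
\end{align*}
First I would note that each indicator is valued in $\{0, 1\}$, so the inequality is automatic whenever the left-hand indicator is $0$; the only case needing argument is when it equals $1$, that is, when $\mmdu{\bx_{\sigma^{(\iconstant)}}}{\by_{\sigma^{(\iconstant)}}} \ge \mmdu{\bx}{\by}$.

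In that case I would chain the two hypothesized bounds together with this assumption to show the right-hand indicator is also $1$:
\begin{align*}
\mmduupper{\bx_{\sigma^{(\iconstant)}}}{\by_{\sigma^{(\iconstant)}}} \ge \mmdu{\bx_{\sigma^{(\iconstant)}}}{\by_{\sigma^{(\iconstant)}}} \ge \mmdu{\bx}{\by} \ge \mmdulower{\bx}{\by},
\end{align*}
using the upper bound on the permuted statistic for the first inequality, the case assumption for the second, and the lower bound on the observed statistic for the third. Hence $\mmduupper{\bx_{\sigma^{(\iconstant)}}}{\by_{\sigma^{(\iconstant)}}} \ge \mmdulower{\bx}{\by}$, so the right-hand indicator equals $1$ and the termwise inequality holds.

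Finally I would sum the termwise inequality over $\iconstant = 1, \ldots, B$, add the common constant $1$, and multiply through by $\tfrac{1}{B+1}$; comparing with the definitions of $\p$ and $\upperp$ then yields $\upperp \ge \p$.

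The main point requiring care is not analytic but a matter of bookkeeping: one must apply the lower bound to the \emph{observed} statistic and the upper bound to each \emph{permuted} statistic, so that every inequality in the chain points in the same direction. This directionality is precisely what makes substituting these particular bounds for the true statistics inflate rather than deflate the $p$-value, which is what guarantees conservativeness of the test. Beyond verifying this, the argument is entirely elementary, resting only on the monotonicity of the indicator function.
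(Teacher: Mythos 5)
Your proposal is correct and matches the paper's own proof in Appendix \ref{Proof of Theorem 3}: both reduce the claim to the termwise indicator inequality and establish it by chaining $\mmduupper{\bx_{\sigma^{(\iconstant)}}}{\by_{\sigma^{(\iconstant)}}} \ge \mmdu{\bx_{\sigma^{(\iconstant)}}}{\by_{\sigma^{(\iconstant)}}} \ge \mmdu{\bx}{\by} \ge \mmdulower{\bx}{\by}$ in the nontrivial case. Your closing remark about applying the lower bound to the observed statistic and the upper bound to the permuted statistics correctly identifies the one directional subtlety in the argument.
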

The proof of this theorem is included in Appendix \ref{Proof of Theorem 3}. 
In the presence of missing data, $\mmdu{\bx}{\by}$ cannot be computed 
directly. However, as shown in Section \ref{Bounding MMD with Missing Data}, 
the lower and upper bounds of $\mmdu{\bx}{\by}$ can be 
computed when the Laplacian kernel is used. 
Then, Theorem \ref{theorem:3} can be applied to compute 
$\upperp$, a value that is bounded by $\p$, defined 
in Equation~\eqref{def:p value using permutations}. 
It is proved in \cite{schrab2023mmd} that, under the null hypothesis,  
computing the $p$-value using permutations controls the Type I error.
Therefore, using the approach described in Theorem~\ref{theorem:3}
will also control the Type I error,  
since the bounds were derived without any 
assumptions about the missing data mechanisms.

\paragraph{Bounding $p$-value using normality approximation.} 
In the above, we have proposed a testing procedure based on permutation 
test in the presence of missing data. Notably, this procedure provides a 
valid test procedure for any given $\nln{1},\nln{2},\dd$.
When both $\nln{1},\nln{2},\dd$ are large enough, an alternative method 
to compute a $p$-value is to use 
the normality approximation \cite{gao2023two}. 
It is suggested in \cite{gao2023two} that this approximation 
is effective when $\nln{1},\nln{2} \ge 25, \dd \ge 50$. 

To more formally describe the normality approximation proposed in \cite{gao2023two}, we will follow their notation in this section. Let us denote the sample sizes for $X$ and $Y$ as $\n,\m$, respectively, rather than $\nln{1},\nln{2}$. It is proved in \cite{gao2023two} the studentized  $\mmdu{\bx}{\by}$, taking the form as
\begin{align} \label{def:studentized mmd}
	\frac{\mmdu{\bx}{\by}}{\sqrt{\constant{\n,\m}{\mathcal{V}}_{n,m}^{k*}}},
\end{align}
where $\constant{\n,\m} = 2/(n(n-1)) + 4/(nm) + 2/(m(m-1))$, and $\mathcal{V}_{n,m}^{k*}$ is defined in Proposition 10 in \cite{gao2023two}, convergence to standard normal distribution under the null (see Theorem 16 in \cite{gao2023two}). In order to use this result, it is necessary to compute the estimation of variance of $\mmdu{\bx}{\by}$, i.e. $\constant{\n,\m}\mathcal{V}_{n,m}^{k*}$, which can not be computed directly with missing data. To overcome this problem, we propose to first bound $\mathcal{V}_{n,m}^{k*}$, leading to the following theorem: 

\begin{theorem} \label{theorem:4}
    A lower bound for
    the studentized MMD statistic in Equation \ref{def:studentized mmd} can be computed, and this 
    provides an upper bound for the $p$-value computed via the normality approximation, 
    when not all data are fully missing.
\end{theorem}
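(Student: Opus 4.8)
The plan is to exploit the monotone relationship between the normality-approximation $p$-value and the studentized statistic. Under the normal approximation of \cite{gao2023two}, the one-sided $p$-value is $p = 1 - \Phi(T)$, where $\Phi$ is the standard normal CDF and $T = \mmdu{\bx}{\by}/\sqrt{\constant{\n,\m}\mathcal{V}_{n,m}^{k*}}$ is the studentized statistic of Equation~\eqref{def:studentized mmd}. Since $1-\Phi$ is strictly decreasing, any computable lower bound $T_L \le T$ that holds simultaneously for every imputation of the missing entries immediately yields a computable upper bound $\upperp = 1-\Phi(T_L) \ge p$. Thus the whole problem reduces to producing such a $T_L$.

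To lower-bound $T$ I would bound numerator and denominator separately. For the numerator, Theorem~\ref{theorem:1} (via Lemmas~\ref{lemma:2}, \ref{lemma:4} and \ref{lemma:5}) already supplies a computable lower bound $\mmdulower{\bx}{\by} \le \mmdu{\bx}{\by}$ valid for all missing values. For the denominator I would bound the variance estimator $\mathcal{V}_{n,m}^{k*}$ of Proposition~10 in \cite{gao2023two}. Writing $\mathcal{V}_{n,m}^{k*}$ explicitly as a sum of products of Laplacian-kernel entries, I would classify each constituent term by the sign with which it enters and bound every kernel evaluation involving a missing coordinate using the range $(0,1]$ of the Laplacian kernel together with the sharper per-term estimates of Lemmas~\ref{lemma:4} and \ref{lemma:5}. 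Bounding positively-signed terms from above and negatively-signed terms from below (and conversely) produces a computable upper bound $\overline{\mathcal{V}}_{n,m}^{k*}$ and lower bound $\underline{\mathcal{V}}_{n,m}^{k*}$, both valid for every imputation.

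Combining the two requires care with the sign of the numerator, since $\mmdu{\bx}{\by}$ is unbiased and can be negative in finite samples, so $\mmdulower{\bx}{\by}$ need not be non-negative. As $N/\sqrt{D}$ is increasing in $N$, decreasing in $D$ when $N>0$ and increasing in $D$ when $N<0$, I would take
\begin{align*}
T_L = \begin{cases} \dfrac{\mmdulower{\bx}{\by}}{\sqrt{\constant{\n,\m}\,\overline{\mathcal{V}}_{n,m}^{k*}}}, & \mmdulower{\bx}{\by} \ge 0, \\[1ex] \dfrac{\mmdulower{\bx}{\by}}{\sqrt{\constant{\n,\m}\,\underline{\mathcal{V}}_{n,m}^{k*}}}, & \mmdulower{\bx}{\by} < 0, \end{cases}
\end{align*}
which a short case analysis shows satisfies $T_L \le T$ over the whole box of admissible numerator–variance pairs; the hypothesis that not all data are fully missing guarantees $\underline{\mathcal{V}}_{n,m}^{k*}>0$, so $T_L$ is well defined. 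The claim $\upperp = 1-\Phi(T_L) \ge p$ then follows from monotonicity of $1-\Phi$.

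The main obstacle I expect is bounding $\mathcal{V}_{n,m}^{k*}$, not the combination step. Unlike $\mmdu{\bx}{\by}$, which is linear in each kernel entry, the variance estimator is a higher-order functional mixing products of kernel entries; propagating the missing-data uncertainty through these products while correctly tracking the sign of every term, so that the resulting bounds remain valid for \emph{all} imputations rather than merely the marginally worst case for each factor, is the delicate part, and it is where the $(0,1]$ range of the Laplacian kernel and Lemma~\ref{lemma:5} do the real work.
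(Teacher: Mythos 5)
Your main path coincides with the paper's: take the numerator bound $\mmdulower{\bx}{\by} \le \mmdu{\bx}{\by}$ from Theorem~\ref{theorem:1}, construct a computable upper bound $\overline{\mathcal{V}}_{n,m}^{k*} \ge \mathcal{V}_{n,m}^{k*}$ valid for every imputation, and conclude by monotonicity of $1-\Phi$. One difference of detail: the paper does not expand $\mathcal{V}_{n,m}^{k*}$ into signed products or invoke Lemmas~\ref{lemma:4} and~\ref{lemma:5} for the denominator. It bounds each kernel entry $a_{s,t}^{k}$ entrywise, from above by the incomplete Laplacian kernel (Lemma~\ref{supp:lemma:3}: $\kernels{\x}{\y} \ge \kernel{\x}{\y} \ge 0$) and from below by $0$ whenever a missing sample is involved, then assembles $\overline{A}_{s,t}^{k*} = \overline{a}_{s,t}^{k} - \underline{a}_{\cdot,t}^{k} - \underline{a}_{s,\cdot}^{k} + \overline{a}_{\cdot,\cdot}^{k} \ge A_{s,t}^{k*}$ and substitutes into the formula for $\mathcal{V}_{n,m}^{k*}$; the machinery of Lemmas~\ref{lemma:4} and~\ref{lemma:5} is reserved for the numerator. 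Your sign-classification plan for the variance is workable but heavier than what is actually needed.

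The genuine gap is your treatment of the case $\mmdulower{\bx}{\by} < 0$. Your $T_L$ there requires $\underline{\mathcal{V}}_{n,m}^{k*} > 0$, and you assert that ``not all data are fully missing'' guarantees this; it does not. An imputation-uniform lower bound on $\frac{1}{N(N-3)}\sum_{s\neq t}\left(A_{s,t}^{k*}\right)^2$ built from per-entry worst cases can be very small --- it is zero for every entry whose centered value can change sign over the imputation box, and the imputations are shared across entries, so the per-entry minima need not be simultaneously attainable anyway --- and after subtracting $\frac{1}{(N-1)(N-3)}$ your $\underline{\mathcal{V}}_{n,m}^{k*}$ may well be nonpositive, leaving $T_L$ undefined exactly where you invoke it. The paper sidesteps this entirely: it proves the studentized inequality only under the hypothesis $0 \le \mmdulower{\bx}{\by}$, and when $\mmdulower{\bx}{\by} < 0$ it simply sets $\upperp = 1$, i.e.\ never rejects, which is trivially a valid upper bound on any $p$-value. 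Your extra case also buys nothing operationally: a negative $T_L$ yields $\upperp = 1-\Phi(T_L) > 1/2 > \alpha$, so no rejection occurs in that regime under either convention. As a side remark, your instinct to track both signs of the centered entries is sound, and arguably more careful than the paper's own step: passing from $\overline{A}_{s,t}^{k*} \ge A_{s,t}^{k*}$ to $\left(\overline{A}_{s,t}^{k*}\right)^2 \ge \left(A_{s,t}^{k*}\right)^2$ implicitly requires $\overline{A}_{s,t}^{k*} \ge |A_{s,t}^{k*}|$, i.e.\ precisely the two-sided control your sign bookkeeping would supply.
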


This theorem is proved in Appendix \ref{Proof of Theorem 4}, and provides an
approach for using the normality approximation to compute the $p$-value, when 
data are missing.

\section{Experiments} \label{Experiments}

In this section, we investigate the Type I error and statistical 
power of MMD-Miss proposed in Section \ref{Testing with Missing Data Using Bounds of MMD} 
and compare it with three common missing data approaches:
case deletion, mean imputation and hot deck imputation. 
These other three approaches are described in further detail in 
Appendix~\ref{simulationsettings}. 

%==========================================================================%

\paragraph{Asymptotic Type I error for a small proportion of missing data (5\%).}

The first experiment investigates the Type I error of the proposed methods 
and common missing data approaches, with increasing sample sizes for a given 
proportion (5\%) of missing data, where data are missing not at random (MNAR). 
To assess the Type I error,
random samples $\bx = \{\xln{1}, \ldots, \xln{\nln{1}}\}$, 
$\by = \{\yln{1}, \ldots, \yln{\nln{2}}\}$ are independently generated
from a $d$-dimensional normal distribution with mean vector 
$\mu = (0, \ldots, 0)^{\bt}$ and covariance matrix equal to the identity matrix.
Subsequently, $5\%$ of samples 
will be selected in both $\bx$ and $\by$ to be labeled as missing 
(for $d = 1$), or incomplete (for $d > 1$). When an observation is labeled as 
incomplete, $30\%$ of its components will be missing values.

In this experiment, $d \in \{1, 10, 50\}$ will be considered. 
When the dimension $d = 1$, only observations $\xln{\iconstant} \in \bx$
with $\xln{\iconstant} < 0$ will be randomly selected to possibly be missing, 
while only observations $\yln{\iconstant} \in \by$ with $\yln{\iconstant} > 0$ will 
be randomly selected to possibly be missing; in this case
the data are missing not at random (MNAR), i.e. the data 
are informatively missing.
For higher dimensions $d \in \{ 10, 50 \}$, the observations 
in $\bx$ that will possibly be partially missing are those with 
$\sum_{\lconstant = 1}^{\dd} \xln{\iconstant}(\lconstant)/\sqrt{d} < -0.8$, and 
then for each chosen sample, $30\%$ of its 
components with values smaller than 
$\text{median}{(\xln{\iconstant}(1),\ldots,\xln{\iconstant}(\dd))}$ 
will be randomly selected to be missing. On the other hand, only observations 
in $\by$ with 
$\sum_{\lconstant = 1}^{\dd} \yln{\iconstant}(\lconstant)/\sqrt{d} > 0.8$ 
will possibly be partially missing, 
and again for each chosen observation, $30\%$ its components 
with values larger than 
$\text{median}{(\yln{\iconstant}(1),\ldots,\yln{\iconstant}(\dd))}$ 
will be randomly selected to be missing. We use these missingness mechanisms, 
since if only the fully observed observations are taken into account, 
the two samples will appear to be different.

Since using the normality approximation with MMD-Miss may not be suitable 
when $\dd \in \{1, 10\}$, we only assess its performance when $\dd = 50$, 
the dimension suggested by \cite{gao2023two} for when the approximation
will be effective. Figure~\ref{fig:1} shows that the Type I error of all 
common missing data approaches increases with increasing sample sizes. 
In particular, when the sample sizes equals to $5000$, the Type I error 
approaches to 1. In contrast, MMD-Miss controls the Type I 
error. The significance level is  $\alpha = 0.05$.

\begin{figure}[h] 
	\begin{center} 
		\includegraphics[width=14.1cm]{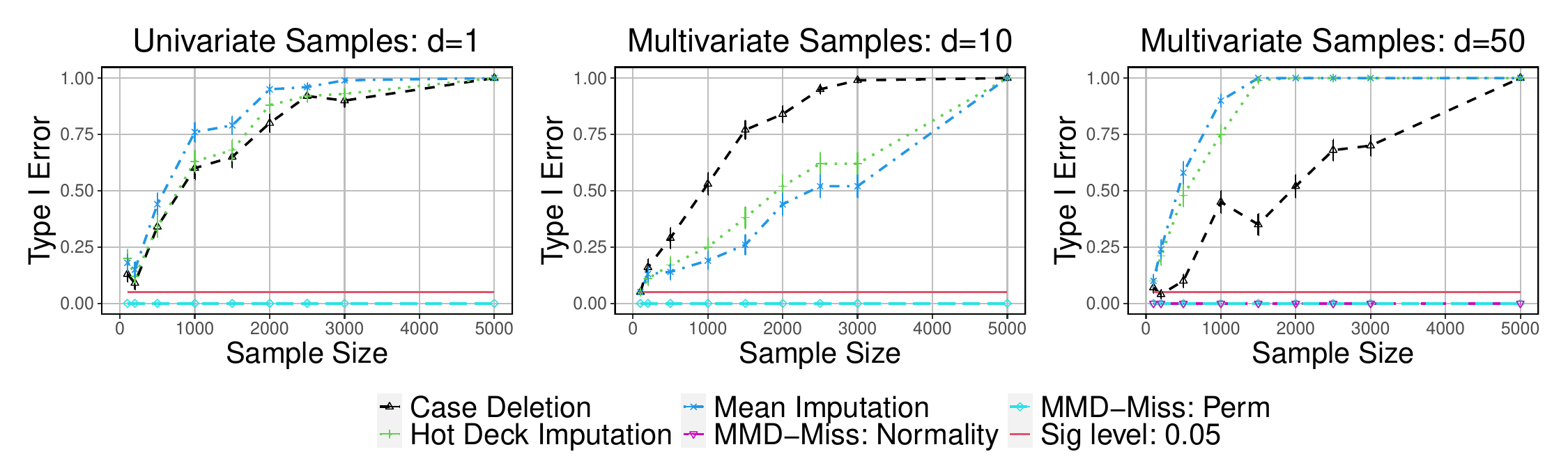}
	\end{center}
	\caption{The Type I error of the proposed method and common missing data methods
		%including case deletion, hot deck imputation and mean imputation, for samples 
		when data are missing not at random. 
		%    Samples are generated independently from $d$-dimensional normal distribution 
		%    with mean vector $\mu = (0, \ldots, 0)^{\bt}$ and covariance matrix equal to the identity matrix. 
		(Left): $d = 1$; (Middle): $d = 10$; (Right): $d = 50$. 
		MMD-Miss only uses the normality approximation when $d = 50$. 
		For all figures, the significance level is $\alpha = 0.05$, 
		and $5\%$ of the data are missing or partially observed.
		The plotted values show the average times of 
		the null hypothesis is rejected over 100 repetitions. The error bars represent one standard error of the mean.}  
	\label{fig:1}
\end{figure}

%==========================================================================%

\paragraph{Type I error and power for univariate data.}

The second experiment considers the Type I error and power of MMD-Miss and 
common missing data approaches for univariate samples where data are 
missing not at random (MNAR). 
To assess the Type I error, the data $\bx = \{\xln{1}, \ldots, \xln{\nln{1}}\}$ 
and $\by = \{\yln{1}, \ldots, \yln{\nln{2}}\}$ are independently sampled 
from a standard normal distribution $N(0,1)$. To assess the statistical power, 
the $\bx$ data are sampled independently from $N(0,1)$ 
the $\by$ data are sampled independently from either (a) $N(1,1)$ or (b) $N(1.5,1)$. 
Then, a proportion $s \in \{0,0.01,\ldots,0.20\}$ of samples are selected 
in $\bx$ and $\by$ to be missing. The missingness mechanisms are
the same as for Figure \ref{fig:1} when $\dd = 1$; 
only observations $\xln{\iconstant} \in \bx$
with $\xln{\iconstant} < 0$ and observations $\yln{\iconstant} \in \by$ 
with $\yln{\iconstant} > 0$ will 
be randomly selected to possibly be missing.

\begin{figure}[h] 
	\begin{center} 
		\includegraphics[width=14.1cm]{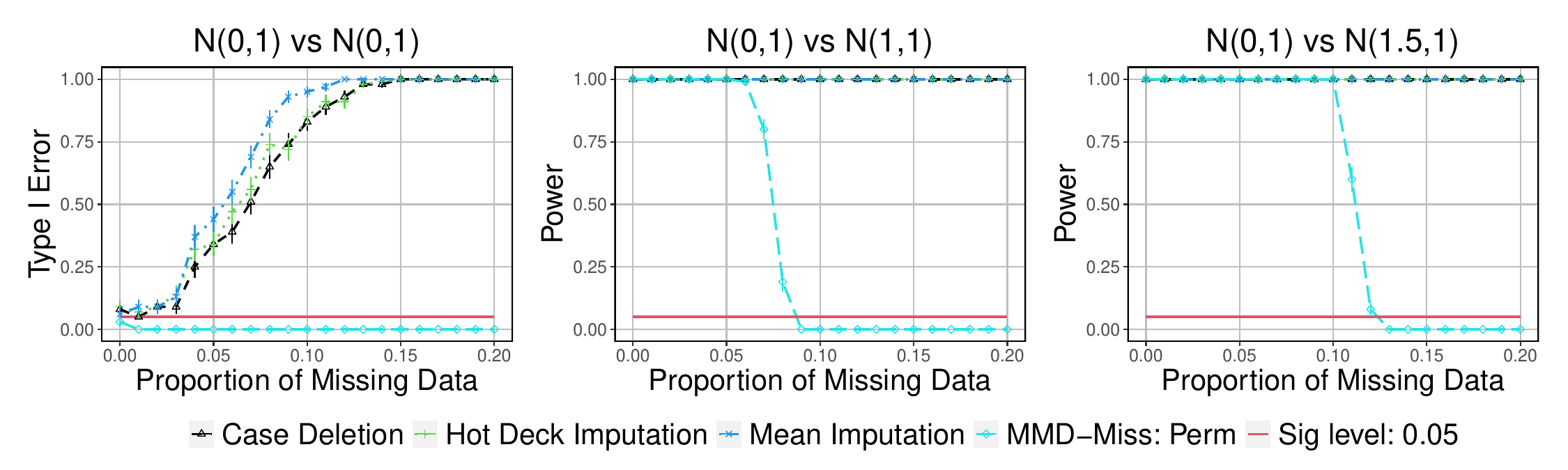}
	\end{center}
	\caption{The Type I error and power of the proposed method and common missing data methods
		for univariate samples when data are missing not at random. 
		(Left): $N(0,1)$ vs $N(0,1)$; 
		(Middle): $N(0,1)$ vs $N(1,1)$; 
		(Right): $N(0,1)$ vs $N(1.5,1)$. 
		A significance level $\alpha = 0.05$ and sample sizes $\nln{1} = \nln{2} = 500$ 
		are used. 
		The plotted values show the average times of 
		the null hypothesis is rejected over 100 repetitions. The error bars represent one standard error of the mean.}  
	\label{fig:2}
\end{figure}

Figure \ref{fig:2} shows that the Type I error is not controlled by 
the case deletion, hot deck imputation or mean imputation methods. 
When the proportion of missing data is $s=5\%$, the Type I error 
is above $25\%$ for each of these three methods, and when the 
$s=10\%$ the Type I error of these three methods is beyond $75\%$. 
On the other hand, the proposed method MMD-Miss controls the 
Type I error, regardless of the proportion of missing data. 
While all the common missing data methods have good statistical 
power, the power of the proposed method decreases significantly when 
there are more than $5\%$ missing data for alternative 
$N(0,1)$ vs $N(1,1)$, and when there are more than $10\%$ missing 
data for alternative $N(0,1)$ vs $N(1.5,1)$. 
This experiment demonstrates that MMD-Miss is useful when the 
proportion of missing data is in the range $5\%$ to $10\%$, 
while the three common missing data approaches fail to 
control the Type I error.

\paragraph{Type I error and power for multivariate data.} 
The third experiment compares the Type I error and power of 
MMD-Miss and the three common missing data approaches for 
multivariate observations.
For a value $a \in \mathbb{R}$, let 
$\mu_{a, \dd} = (a, a, \dots, a)^{\bt}$ be a 
mean vector with $\dd$ components all equal to $a$.
To assess the Type I error, 
observations $\bx = \{\xln{1}, \ldots, \xln{\nln{1}}\}$ and 
$\by = \{\yln{1}, \ldots, \yln{\nln{2}}\}$ are independently sampled 
from the $d$-dimensional normal distribution with zero mean vector 
$\mu_{0, \dd} = (0, 0, \dots, 0)^{\bt}$ and covariance matrix 
$\Sigma = I_{\dd}$, the identity matrix. 
To assess the statistical power, 
the $\bx$ data are independently sampled
from $N( \mu_{0, \dd}, I_{\dd})$ 
and either (a) $N( \mu_{1, \dd}, I_{\dd})$ or (b) $N(\mu_{1.5, \dd}, I_{\dd})$. 
Then, a proportion $s \in [0,0.2]$ of observations from 
both $\bx$ and $\by$ will be randomly selected to 
be partially missing.  
The missingness mechanisms follows the same approach used 
for Figure \ref{fig:1} when $\dd > 1$, with each partially missing
observation having $30\%$ missing components. 
The results displayed in Figure \ref{fig:3} show a similar pattern to the results for
univariate case: (i) the three common missing data approaches cannot control 
the Type I error 
although they all demonstrate good power, 
while (ii) MMD-Miss controls the Type I error, while also have good power 
when proportion of missing data is around $5\%$ to $10\%$. 
Notably, in the case where $\dd=50$, MMD-Miss with the normality 
approximation has better statistical power 
than when it uses permutations to compute the $p$-value. 
For the alternative $N((1.5, \ldots, 1.5)^{\bt}, I_{\dd})$, 
MMD-Miss with the normality approximation has good power when $15\%$ of the data 
have missing values.

\begin{figure}[h] 
	\begin{center} 
		\includegraphics[width=14.1cm]{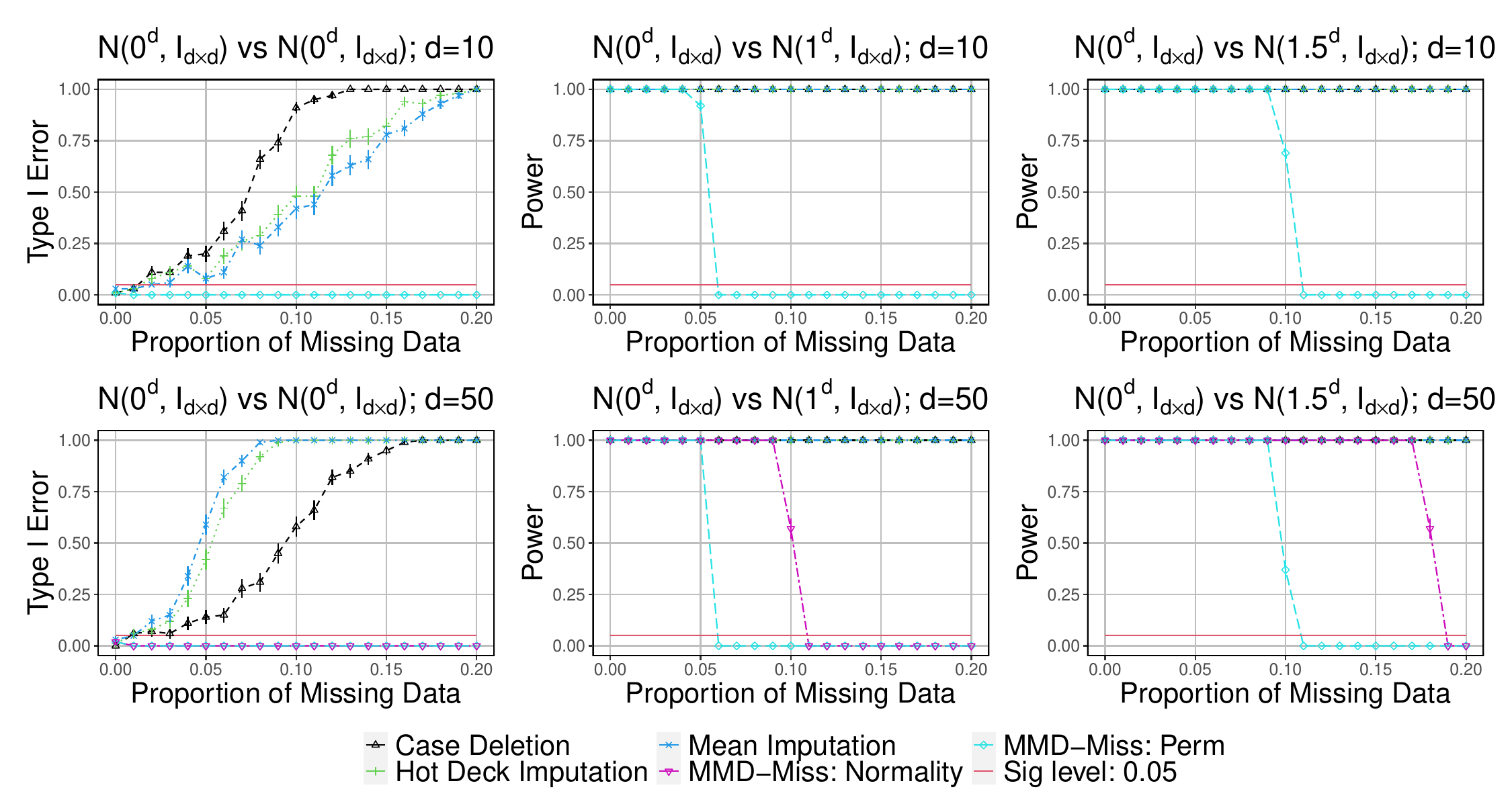}
	\end{center}
	\caption{The Type I error and power of MMD-Miss and the three common 
		missing data approaches for multivariate samples when data are missing not at random. 
		when the dimension of the data is either  $\dd = 10$ or $\dd = 50$. 
		When $\dd = 50$, the normality approximation can be used with MMD-Miss. 
		A significance level $\alpha = 0.05$, and sample sizes $\nln{1} = \nln{2} = 500$ 
		are used. 
		The plotted values show the average times of 
		the null hypothesis is rejected over 100 repetitions. The error bars represent one standard error of the mean.}  
	\label{fig:3}
\end{figure}

\paragraph{MNIST dataset.} We evaluate the performance of MMD-Miss on 
real-world data using MNIST images \cite{lecun1998gradient}, with examples 
shown in Figure~\ref{fig:MNIST} in Appendix~\ref{Examples of MNIST dataset}. 
Each image in the MNIST dataset has dimensions of $28 \times 28$ pixels 
and is labeled from $0$ to $9$. For our analysis, the pixel values of 
each image are scaled between $0$ and $1$. To assess Type I error, 
datasets $X$ and $Y$ are generated by randomly sampling with replacement 
from MNIST training set images labeled as $3$. For evaluating power, 
$X$ is generated by randomly sampling with replacement from images labeled as 
$0$, while $Y$ continues to be sampled from images labeled as $3$. 
A proportion $s \in [0,0.25]$ of the samples in $Y$ are then 
randomly selected and labeled as incomplete if there are more 
than 85 non-zero pixels in the region defined by rows 1 to 14 
and columns 8 to 21 (i.e. a sub-block in the upper half of each image), 
which will be marked as missing. In other words, 
images with more non-zero pixels in the specified region are more likely 
to be partially observed. Examples of these incompletely observed 
images are shown in Figure \ref{fig:MNIST Incomplete} in 
Appendix~\ref{Examples of MNIST dataset}
Table \ref{tab:1} shows that the three 
common missing data approaches cannot control the Type I error 
for this task. On the other hand, MMD-Miss controls the Type I error while enjoying good power. For MMD-Miss based on normality approximation, the power is good 
except for the case when $25\%$ images from $\by$ are missing. 

\begin{table}[h]
	\caption{Comparison of Type I error and power on MNIST dataset. 
		NA stands for MMD-Miss with the normality approximation; 
		PT for MMD-Miss with a permutation test; 
		CD stands for case deletion; 
		MI stands for mean imputation; 
		HD stands for hot deck imputation. 
		A significance level $\alpha = 0.05$, 
		and sample sizes $\nln{1} = \nln{2} = 500$ are used. 
		The values in the table are the average times 
		the null hypothesis is rejected over 100 repetitions.}
	\label{tab:1}
	\centering
	\begin{tabular}{cccccccccccc}
		\toprule
		\multirow{4}{*}{\begin{tabular}[c]{@{}c@{}}Proportion \\of \\ Missing \\ Data ($s$) \end{tabular}} & \multicolumn{5}{c}{Type I Error} & \multicolumn{5}{c}{Power} \\
		\cmidrule(r){2-6} \cmidrule(r){7-11}
		& \multicolumn{2}{c}{MMD-Miss} & \multicolumn{3}{c}{Common} & \multicolumn{2}{c}{MMD-Miss} & \multicolumn{3}{c}{Common} \\
		\cmidrule(r){2-3} \cmidrule(r){4-6} \cmidrule(r){7-8} \cmidrule(r){9-11}
		& NA & PT & CD & MI & HD & NA & PT & CD & MI & HD \\
		\midrule
		0.00 & 0.07 & 0.07 & 0.06 & 0.07 & 0.07 & 1.00 & 1.00 & 1.00 & 1.00 & 1.00 \\
		0.05 & 0.00 & 0.00 & {0.08} &{1.00} & {1.00} & 1.00 & 1.00 & 1.00 & 1.00 & 1.00\\
		0.10 & 0.00 & 0.00 & {0.21} & {1.00} & {1.00} & 1.00 & 1.00 & 1.00 & 1.00 & 1.00\\
		0.15 & 0.00 & 0.00 & {0.64} & {1.00} & {1.00} & 1.00 & 0.00 & 1.00 & 1.00 & 1.00\\
		0.20 & 0.00 & 0.00 & {0.91} & {1.00} & {1.00} & 0.96 & 0.00 & 1.00 & 1.00 & 1.00\\
		0.25 & 0.00 & 0.00 & {1.00} & {1.00} & {1.00} & 0.00 & 0.00 & 1.00 & 1.00 & 1.00\\
		\bottomrule
	\end{tabular}
\end{table}

\section{Conclusion} \label{Conclusion}

The proposed method MMD-Miss can perform two-sample testing on both
univariate and multivariate data with missing values.
It has good statistical power, typically when $5\%$ to $10\%$ of the data 
are missing, while controlling the Type I error. 

\paragraph{Limitations} First, a limitation of MMD-Miss is that 
it is restricted to the Laplacian kernel for computational reasons.
However since the Laplacian kernel is characteristic, MMD with 
this kernel can detect any distributional change. Moreover, 
MMD-Miss will be effective for other kernels if appropriate
bounds can be derived and computed efficiently.
Second, MMD-Miss is only suitable when up
to $20\%$ of the values are not fully observed, and typically performs best
when $5\%$ to $10\%$ of the data are missing. However, using
imputation methods when even $5\%$ of the data are informatively missing
leads to the Type I error being out of control.

\section{Acknowledgements} \label{sec:acknowledgements}

%\paragraph{Acknowledgements} 
Yijin Zeng is funded by a Roth Studentship from the 
Department of Mathematics, Imperial College London
and the 
EPSRC CDT in Statistics and Machine Learning.

%%%%%%%%%%%%%%%%%%%%%%%%%%%%%%%%%%%%%%%%%%%%%%%%%%%%%%%%%%%%

%%%%%%%%%%%%%%%%%%%%%%%%%%%%%%%%%%%%%%%%%%%%%%%%%%%%%%%%%%%%
\clearpage
\appendix

\section*{Overview of Appendix}

Appendix \ref{mathematicaldetails}. Mathematical details of developing bounds of $\mmdu{\bx}{\by}$ in Section \ref{Bounding MMD with Missing Data} and constructing valid two-sample testing method in Section \ref{Testing with Missing Data Using Bounds of MMD} are given. 

Appendix \ref{simulationdetails}. Details of experiments in Section \ref{Experiments} are provided.  

Appendix \ref{addtionalexperiments}. Additional experiments are given for further investigating the power of the proposed methods.

\section{Mathematical Details} \label{mathematicaldetails}

\subsection{Decomposition of MMD} \label{supp:decomposition}

%For instance, consider the case where $\x$ is unobserved and all samples $\yln{1}, \ldots, \yln{\n}$ are observed. This simple approach would give the lower bound of $-\sum_{\iconstant = 1}^{\n} \kernel{\x}{\yln{\iconstant}}$ as  $-\n$, since each $-\kernel{\x}{\yln{\iconstant}}$ has a lower bound $-1$. However, it can be seen that $\kernel{\x}{\yln{\iconstant}} = -1$ if only $\x = \yln{\iconstant}$. Thus,  unless $\yln{1} = \cdots = \yln{\n}$, $- \sum_{\iconstant = 1}^{\n} \kernel{\x}{\yln{\iconstant}}$ must be bounded by a value larger than $-\n$.   

Without loss of generality, let us 
assume $\xln{1}, \ldots, \xln{\mln{1}}$, $\yln{1}, \ldots, \yln{\mln{2}}$ are 
samples that are not observed, or not fully observed. Then, $\mmdu{\bx}{\by}$ can be decomposed, using the following lemma: 

\begin{lemma} \label{lemma:1}
	Suppose 
    $\bx = \{\xln{1}, \ldots, \xln{\nln{1}} \}$, 
    $\by = \{\yln{1}, \ldots, \yln{\nln{2}}\}$ are subsets of 
    $\mathbb{R}^d$, where $\dd \ge 1$. 
    Suppose that $\mmdu{\bx}{\by}$ is unbiased MMD test statistic defined as
    \begin{align} %\label{def:mmd}
    	\begin{split}
    		\mmdu{\bx}{\by} &= \tfrac{1}{\nln{1}(\nln{1}-1)} \sum_{\iconstant = 1}^{\nln{1}} \sum_{ \substack{\jconstant = 1 \\ \jconstant \neq \iconstant } }^{\nln{2}} \kernel{\xln{\iconstant}}{\xln{\jconstant}}
    		+ \tfrac{1}{\nln{2}(\nln{2}-1)} \sum_{\iconstant = 1}^{\nln{2}} \sum_{ \substack{\jconstant = 1 \\ \jconstant \neq \iconstant } }^{\nln{2}} \kernel{\yln{\iconstant}}{\yln{\jconstant}}
    		- \tfrac{2}{\nln{1}\nln{2}}	\sum_{\iconstant = 1}^{\nln{1}} \sum_{\jconstant = 1}^{\nln{2}} \kernel{\xln{\iconstant}}{\yln{\jconstant}},
    		\nonumber
    	\end{split}
    \end{align}
    with $\kk$ denoting the Laplacian kernel. Then, for any two positive integers $\mln{1}$, $\mln{2}$ such that $\mln{1}~\le~\nln{1},~\mln{2}~\le~\nln{2}$, $\mmdu{\bx}{\by}$ can be rewritten as:
	\begin{align*}
		& \mmdu{\bx}{\by} = \termone + \termtwo + \termthree + \termfour,
	\end{align*}
	where
    $\constant{1} = \frac{2}{\nln{1}(\nln{1}-1)}$, 
    $\constant{2} = \frac{2}{\nln{2}(\nln{2}-1)}$, 
    $\constant{3} = \frac{2}{\nln{1}\nln{2}}$, and
	\begin{align*}
		\termone &= \constant{1} \sum_{\iconstant = 1}^{\mln{1}} \sum_{\jconstant = \iconstant + 1}^{\mln{1}} \kernel{\xln{\iconstant}}{\xln{\jconstant}} + \constant{2} \sum_{\iconstant = 1}^{\mln{2}} \sum_{\jconstant = \iconstant +1}^{\mln{2}} \kernel{\yln{\iconstant}}{\yln{\jconstant}} - \constant{3} \sum_{\iconstant = 1}^{\mln{1}} \sum_{\jconstant = 1}^{\mln{2}} \kernel{\xln{\iconstant}}{\yln{\jconstant}},\\
		\termtwo &= \constant{1} \sum_{\iconstant = \mln{1} + 1}^{\nln{1}-1} \sum_{\jconstant = \iconstant + 1}^{\nln{1}} \kernel{\xln{\iconstant}}{\xln{\jconstant}} + \constant{2}\sum_{\iconstant = \mln{2}+1}^{\nln{2}-1} \sum_{\jconstant = \iconstant +1}^{\nln{2}} \kernel{\yln{\iconstant}}{\yln{\jconstant}} - 	\constant{3}	\sum_{\iconstant = \mln{1} + 1}^{\nln{1}} \sum_{\jconstant = \mln{2} + 1}^{\nln{2}} \kernel{\xln{\iconstant}}{\yln{\jconstant}},\\
		\termthree & =  \constant{1}\sum_{\iconstant = 1}^{\mln{1}} \sum_{\jconstant = \mln{1} + 1}^{\nln{1}} \kernel{\xln{\iconstant}}{\xln{\jconstant}} - 	\constant{3} \sum_{\iconstant = 1}^{\mln{1}} \sum_{\jconstant = \mln{2} + 1}^{\nln{2}} \kernel{\xln{\iconstant}}{\yln{\jconstant}}, \\
		\termfour & = \constant{2} \sum_{\iconstant = 1}^{\mln{2}} \sum_{\jconstant = \mln{2} +1}^{\nln{2}} \kernel{\yln{\iconstant}}{\yln{\jconstant}} - 	\constant{3} \sum_{\iconstant = \mln{1} + 1}^{\nln{1}} \sum_{\jconstant = 1}^{\mln{2}} \kernel{\xln{\iconstant}}{\yln{\jconstant}}.
	\end{align*}
\end{lemma}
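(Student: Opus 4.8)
The plan is to treat this as a bookkeeping identity: each of the three double sums defining $\mmdu{\bx}{\by}$ will be partitioned according to whether its indices fall in the ``missing'' blocks $\{1,\ldots,\mln{1}\}$, $\{1,\ldots,\mln{2}\}$ or the ``observed'' blocks $\{\mln{1}+1,\ldots,\nln{1}\}$, $\{\mln{2}+1,\ldots,\nln{2}\}$, and the resulting pieces will be regrouped into $\termone, \termtwo, \termthree, \termfour$. No distributional or analytic facts are needed; only the symmetry of the kernel and careful index bookkeeping.

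First I would use the symmetry of the Laplacian kernel, $\kernel{\xln{\iconstant}}{\xln{\jconstant}} = \kernel{\xln{\jconstant}}{\xln{\iconstant}}$, to rewrite each self-term as a sum over ordered pairs $\iconstant < \jconstant$. Since $\sum_{\iconstant=1}^{\nln{1}} \sum_{\jconstant \neq \iconstant} \kernel{\xln{\iconstant}}{\xln{\jconstant}} = 2 \sum_{\iconstant=1}^{\nln{1}} \sum_{\jconstant=\iconstant+1}^{\nln{1}} \kernel{\xln{\iconstant}}{\xln{\jconstant}}$, the leading factor $\tfrac{1}{\nln{1}(\nln{1}-1)}$ becomes $\constant{1} = \tfrac{2}{\nln{1}(\nln{1}-1)}$, and analogously the $\by$ self-term acquires the factor $\constant{2}$. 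The cross-term is already in the form $-\constant{3} \sum_{\iconstant=1}^{\nln{1}} \sum_{\jconstant=1}^{\nln{2}} \kernel{\xln{\iconstant}}{\yln{\jconstant}}$ with $\constant{3} = \tfrac{2}{\nln{1}\nln{2}}$.

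Next I would partition each sum. For the $\bx$ self-sum over pairs $\iconstant < \jconstant$ in $\{1,\ldots,\nln{1}\}$, there are exactly three disjoint cases: both indices in the missing block ($\jconstant \le \mln{1}$), both in the observed block ($\iconstant > \mln{1}$), or one in each — which, because $\iconstant < \jconstant$, forces $\iconstant \le \mln{1} < \jconstant$. These yield the sub-sums assigned to $\termone$, $\termtwo$, and $\termthree$ respectively. The $\by$ self-sum splits identically into pieces for $\termone$, $\termtwo$, and $\termfour$. The cross-sum splits into the four rectangular blocks $(\iconstant \le \mln{1}, \jconstant \le \mln{2})$, $(\iconstant > \mln{1}, \jconstant > \mln{2})$, $(\iconstant \le \mln{1}, \jconstant > \mln{2})$, and $(\iconstant > \mln{1}, \jconstant \le \mln{2})$, which I would route to $\termone$, $\termtwo$, $\termthree$, and $\termfour$.

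Finally I would collect the pieces and verify they coincide with the stated definitions of $\termone,\ldots,\termfour$. Because the case split for each double sum is exhaustive and mutually exclusive, summing $\termone + \termtwo + \termthree + \termfour$ reconstitutes each original double sum exactly, which gives the identity. The only real obstacle here is clerical: ensuring the index ranges are written without off-by-one errors (for instance the observed-block self-sum running $\iconstant = \mln{1}+1,\ldots,\nln{1}-1$ with $\jconstant = \iconstant+1,\ldots,\nln{1}$) and confirming that the four cross-sum blocks tile the full grid $\{1,\ldots,\nln{1}\} \times \{1,\ldots,\nln{2}\}$ with neither overlap nor omission. No analytic input beyond kernel symmetry is required.
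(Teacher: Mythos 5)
Your proposal is correct and follows essentially the same route as the paper's proof: use the symmetry of the kernel to rewrite the self-terms as sums over ordered pairs $\iconstant < \jconstant$ with constants $\constant{1}, \constant{2}$, then partition each double sum into the missing/observed index blocks (three cases for each self-sum, four rectangular blocks for the cross-sum) and regroup the pieces into $\termone, \termtwo, \termthree, \termfour$. The only difference is presentational — the paper splits the outer index first and then the inner one, while you case-split directly on where the pair lands — which does not change the argument.
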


\begin{proof}
	To start,  for any $\x$, $\y$ of $\mathbb{R}^{\dd}$ real values, the Laplacian kernel is defined as follows
	\begin{align*}
		\kk({\x},{\y}) = \exp(-\beta ||\x - \y||_1).
	\end{align*}
	It can be seen that $\kk$ is symmetric, i.e. $\kk(\x,\y) = \kk(\y,\x)$. Subsequently, according to Lemma 2 in \cite{bodenham2023eummd}, the $\mmdu{\bx}{\by}$ can then be rewritten as
	\begin{align*}
		\mmdu{\bx}{\by} &:= \frac{2}{\nln{1}(\nln{1}-1)} \sum_{\iconstant = 1}^{\nln{1}} \sum_{\jconstant = 1}^{\iconstant -1 } \kernel{\xln{\iconstant}}{\xln{\jconstant}}
		+ \frac{2}{\nln{2}(\nln{2}-1)} \sum_{\iconstant = 1}^{\nln{2}} \sum_{\jconstant = 1}^{\iconstant - 1 } \kernel{\yln{\iconstant}}{\yln{\jconstant}} \\
		& - \frac{2}{\nln{1}\nln{2}}	\sum_{\iconstant = 1}^{\nln{1}} \sum_{\jconstant = 1}^{\nln{2}} \kernel{\xln{\iconstant}}{\yln{\jconstant}}.
	\end{align*}
	Notice that
	\begin{align*}
		\sum_{\iconstant = 1}^{\nln{1}} \sum_{\jconstant = 1}^{\iconstant -1 } \kernel{\xln{\iconstant}}{\xln{\jconstant}} &= \sum_{\iconstant = 1}^{\nln{1}-1} \sum_{\jconstant = \iconstant + 1}^{\nln{1}} \kernel{\xln{\iconstant}}{\xln{\jconstant}},\\
		\sum_{\iconstant = 1}^{\nln{2}} \sum_{\jconstant = 1}^{\iconstant - 1 } \kernel{\yln{\iconstant}}{\yln{\jconstant}} & = \sum_{\iconstant = 1}^{\nln{2}-1} \sum_{\jconstant = \iconstant +1}^{\nln{2}} \kernel{\yln{\iconstant}}{\yln{\jconstant}}.
	\end{align*}
	Thus, we further have
	\begin{align*}
		\mmdu{\bx}{\by} &:= \frac{2}{\nln{1}(\nln{1}-1)} \sum_{\iconstant = 1}^{\nln{1}-1} \sum_{\jconstant = \iconstant + 1}^{\nln{1}} \kernel{\xln{\iconstant}}{\xln{\jconstant}}
		+ \frac{2}{\nln{2}(\nln{2}-1)} \sum_{\iconstant = 1}^{\nln{2}-1} \sum_{\jconstant = \iconstant +1}^{\nln{2}} \kernel{\yln{\iconstant}}{\yln{\jconstant}} \\
		& - \frac{2}{\nln{1}\nln{2}}	\sum_{\iconstant = 1}^{\nln{1}} \sum_{\jconstant = 1}^{\nln{2}} \kernel{\xln{\iconstant}}{\yln{\jconstant}}.
	\end{align*}
	Let $\mln{1}, \mln{2}$ be any two positive integers such that $\mln{1} \le \nln{1}, \mln{2} \le \nln{2}$. Denote $\constant{1} = \frac{2}{\nln{1}(\nln{1}-1)}$. Subsequently,
	\begin{align*}
		\frac{2}{\nln{1}(\nln{1}-1)}\sum_{\iconstant = 1}^{\nln{1}-1} \sum_{\jconstant = \iconstant + 1}^{\nln{1}} \kernel{\xln{\iconstant}}{\xln{\jconstant}} &= \constant{1} \sum_{\iconstant = 1}^{\mln{1}} \sum_{\jconstant = \iconstant + 1}^{\nln{1}} \kernel{\xln{\iconstant}}{\xln{\jconstant}} + \constant{1} \sum_{\iconstant = \mln{1} + 1}^{\nln{1}-1} \sum_{\jconstant = \iconstant + 1}^{\nln{1}} \kernel{\xln{\iconstant}}{\xln{\jconstant}}\\
		& = \constant{1}\sum_{\iconstant = 1}^{\mln{1}} \sum_{\jconstant = \iconstant + 1}^{\mln{1}} \kernel{\xln{\iconstant}}{\xln{\jconstant}} + \constant{1} \sum_{\iconstant = 1}^{\mln{1}} \sum_{\jconstant = \mln{1} + 1}^{\nln{1}} \kernel{\xln{\iconstant}}{\xln{\jconstant}}\\
		& + \constant{1}\sum_{\iconstant = \mln{1} + 1}^{\nln{1}-1} \sum_{\jconstant = \iconstant + 1}^{\nln{1}} \kernel{\xln{\iconstant}}{\xln{\jconstant}}.
	\end{align*}
	Also, denote $\constant{2} = \frac{2}{\nln{2}(\nln{2}-1)}$, it follows
	\begin{align*}
		\frac{2}{\nln{2}(\nln{2}-1)} \sum_{\iconstant = 1}^{\nln{2}-1} \sum_{\jconstant = \iconstant +1}^{\nln{2}} \kernel{\yln{\iconstant}}{\yln{\jconstant}} &= \constant{2} \sum_{\iconstant = 1}^{\mln{2}} \sum_{\jconstant = \iconstant +1}^{\nln{2}} \kernel{\yln{\iconstant}}{\yln{\jconstant}} + \constant{2} \sum_{\iconstant = \mln{2}+1}^{\nln{2}-1} \sum_{\jconstant = \iconstant +1}^{\nln{2}} \kernel{\yln{\iconstant}}{\yln{\jconstant}} \\
		& = \constant{2} \sum_{\iconstant = 1}^{\mln{2}} \sum_{\jconstant = \iconstant +1}^{\mln{2}} \kernel{\yln{\iconstant}}{\yln{\jconstant}} +  \constant{2} \sum_{\iconstant = 1}^{\mln{2}} \sum_{\jconstant = \mln{2} +1}^{\nln{2}} \kernel{\yln{\iconstant}}{\yln{\jconstant}} \\
		& +\constant{2}\sum_{\iconstant = \mln{2}+1}^{\nln{2}-1} \sum_{\jconstant = \iconstant +1}^{\nln{2}} \kernel{\yln{\iconstant}}{\yln{\jconstant}}.
	\end{align*}
	Denote $\constant{3} = \frac{2}{\nln{1}\nln{2}}$, then
	\begin{align*}
		\frac{2}{\nln{1}\nln{2}}	\sum_{\iconstant = 1}^{\nln{1}} \sum_{\jconstant = 1}^{\nln{2}} \kernel{\xln{\iconstant}}{\yln{\jconstant}} &= \constant{3}	\sum_{\iconstant = 1}^{\mln{1}} \sum_{\jconstant = 1}^{\nln{2}} \kernel{\xln{\iconstant}}{\yln{\jconstant}} + \constant{3}	\sum_{\iconstant = \mln{1} + 1}^{\nln{1}} \sum_{\jconstant = 1}^{\nln{2}} \kernel{\xln{\iconstant}}{\yln{\jconstant}}\\
		& = \constant{3}	\sum_{\iconstant = 1}^{\mln{1}} \sum_{\jconstant = 1}^{\mln{2}} \kernel{\xln{\iconstant}}{\yln{\jconstant}}  + \constant{3}\sum_{\iconstant = 1}^{\mln{1}} \sum_{\jconstant = \mln{2} + 1}^{\nln{2}} \kernel{\xln{\iconstant}}{\yln{\jconstant}} \\
		& + \constant{3} \sum_{\iconstant = \mln{1} + 1}^{\nln{1}} \sum_{\jconstant = 1}^{\mln{2}} \kernel{\xln{\iconstant}}{\yln{\jconstant}} + \constant{3}	\sum_{\iconstant = \mln{1} + 1}^{\nln{1}} \sum_{\jconstant = \mln{2} + 1}^{\nln{2}} \kernel{\xln{\iconstant}}{\yln{\jconstant}}.
	\end{align*}
	Combine the above together, $\mmdu{\bx}{\by}$ can be rewritten as
	\begin{align*}
		\mmdu{\bx}{\by} &= \constant{1} \sum_{\iconstant = 1}^{\mln{1}} \sum_{\jconstant = \iconstant + 1}^{\mln{1}} \kernel{\xln{\iconstant}}{\xln{\jconstant}} + \constant{1}  \sum_{\iconstant = 1}^{\mln{1}} \sum_{\jconstant = \mln{1} + 1}^{\nln{1}} \kernel{\xln{\iconstant}}{\xln{\jconstant}}\\
		& + \constant{1}  \sum_{\iconstant = \mln{1} + 1}^{\nln{1}-1} \sum_{\jconstant = \iconstant + 1}^{\nln{1}} \kernel{\xln{\iconstant}}{\xln{\jconstant}} + \constant{2}  \sum_{\iconstant = 1}^{\mln{2}} \sum_{\jconstant = \iconstant +1}^{\mln{2}} \kernel{\yln{\iconstant}}{\yln{\jconstant}} \\
		&+  \constant{2}  \sum_{\iconstant = 1}^{\mln{2}} \sum_{\jconstant = \mln{2} +1}^{\nln{2}} \kernel{\yln{\iconstant}}{\yln{\jconstant}}
		+ \constant{2}  \sum_{\iconstant = \mln{2}+1}^{\nln{2}-1} \sum_{\jconstant = \iconstant +1}^{\nln{2}} \kernel{\yln{\iconstant}}{\yln{\jconstant}}\\
		&- \constant{3} 	\sum_{\iconstant = 1}^{\mln{1}} \sum_{\jconstant = 1}^{\mln{2}} \kernel{\xln{\iconstant}}{\yln{\jconstant}}  - \constant{3}  \sum_{\iconstant = 1}^{\mln{1}} \sum_{\jconstant = \mln{2} + 1}^{\nln{2}} \kernel{\xln{\iconstant}}{\yln{\jconstant}} \\
		& - \constant{3}  \sum_{\iconstant = \mln{1} + 1}^{\nln{1}} \sum_{\jconstant = 1}^{\mln{2}} \kernel{\xln{\iconstant}}{\yln{\jconstant}} - \constant{3} 	\sum_{\iconstant = \mln{1} + 1}^{\nln{1}} \sum_{\jconstant = \mln{2} + 1}^{\nln{2}} \kernel{\xln{\iconstant}}{\yln{\jconstant}}.
	\end{align*}
	By rearranging the above equation, we conclude
	\begin{align*}
		\mmdu{\bx}{\by} &= \constant{1} \sum_{\iconstant = 1}^{\mln{1}} \sum_{\jconstant = \iconstant + 1}^{\mln{1}} \kernel{\xln{\iconstant}}{\xln{\jconstant}} + \constant{2}  \sum_{\iconstant = 1}^{\mln{2}} \sum_{\jconstant = \iconstant +1}^{\mln{2}} \kernel{\yln{\iconstant}}{\yln{\jconstant}} - \constant{3} 	\sum_{\iconstant = 1}^{\mln{1}} \sum_{\jconstant = 1}^{\mln{2}} \kernel{\xln{\iconstant}}{\yln{\jconstant}} \\
		& + \constant{1} \sum_{\iconstant = \mln{1} + 1}^{\nln{1}-1} \sum_{\jconstant = \iconstant + 1}^{\nln{1}} \kernel{\xln{\iconstant}}{\xln{\jconstant}} + \constant{2}\sum_{\iconstant = \mln{2}+1}^{\nln{2}-1} \sum_{\jconstant = \iconstant +1}^{\nln{2}} \kernel{\yln{\iconstant}}{\yln{\jconstant}} - 	\constant{3}	\sum_{\iconstant = \mln{1} + 1}^{\nln{1}} \sum_{\jconstant = \mln{2} + 1}^{\nln{2}} \kernel{\xln{\iconstant}}{\yln{\jconstant}} \\
		&+  \constant{1}\sum_{\iconstant = 1}^{\mln{1}} \sum_{\jconstant = \mln{1} + 1}^{\nln{1}} \kernel{\xln{\iconstant}}{\xln{\jconstant}} - 	\constant{3} \sum_{\iconstant = 1}^{\mln{1}} \sum_{\jconstant = \mln{2} + 1}^{\nln{2}} \kernel{\xln{\iconstant}}{\yln{\jconstant}}\\
		&- \constant{2} \sum_{\iconstant = 1}^{\mln{2}} \sum_{\jconstant = \mln{2} +1}^{\nln{2}} \kernel{\yln{\iconstant}}{\yln{\jconstant}} - 	\constant{3} \sum_{\iconstant = \mln{1} + 1}^{\nln{1}} \sum_{\jconstant = 1}^{\mln{2}} \kernel{\xln{\iconstant}}{\yln{\jconstant}},
	\end{align*}
	which proves our result.
\end{proof}

%Notice that $\termtwo$ is computed only with fully observed data, the task of 
%bounding $\mmdu{\bx}{\by}$ reduces to bounding terms $\termone$, $\termtwo$ 
%and $\termthree$. These bounds are obtained via the following result, in 
%the case the data are univariate.
This lemma divides the $\mmdu{\bx}{\by}$ into four parts: the first part $\termone$ includes the unobserved samples in $\bx$ and $\by$ only; on the contrary, the second part $\termtwo$ includes observed samples in $\bx$ and $\by$ only; the third and the fourth terms $\termthree$ and $\termfour$ involve mixed terms where one group (either $\bx$ or $\by$) contains only unobserved samples while the corresponding samples in the other group are all observed.

In order to bound $\mmdu{\bx}{\by}$, we propose to bound the four terms $\termone, \termtwo, \termthree$ and $\termfour$ separately. Given that $\termtwo$ includes the observed samples only, we focus on bounding $\termone, \termtwo$ and $\termfour$. For the first term $\termone$, where all samples are unobserved, it can be seen that $1 \ge \kernel{\x}{\y} > 0$ for any $\x,\y \in \mathbb{R}$. Hence, we have ${\mln{1}(\mln{1} - 1)}/{2} \ge \sum_{\iconstant = 1}^{\mln{1}} \sum_{\jconstant = \iconstant + 1}^{\mln{1}} \kernel{\xln{\iconstant}}{\xln{\jconstant}} > 0$, ${\mln{2}(\mln{2} - 1)}/{2} \ge \sum_{\iconstant = 1}^{\mln{2}} \sum_{\jconstant = \iconstant +1}^{\mln{2}} \kernel{\yln{\iconstant}}{\yln{\jconstant}} > 0$, and $\mln{1}\mln{2} \ge \sum_{\iconstant = 1}^{\mln{1}} \sum_{\jconstant = 1}^{\mln{2}} \kernel{\xln{\iconstant}}{\yln{\jconstant}} > 0$, which then according to the definition of $\termone$ in Lemma \ref{lemma:1}, follows
\begin{align}\label{eqn:bounding termone}
 \frac{\mln{1}(\mln{1} - 1)}{\nln{1}(\nln{1}-1)} + \frac{\mln{2}(\mln{2} - 1)}{\nln{2}(\nln{2}-1)} > \termone > - \frac{2}{\nln{1}\nln{2}} \mln{1}\mln{2}.
\end{align}

The primary challenge of bounding $\mmdu{\bx}{\by}$ lies in providing tight bounds for terms $\termthree$ and $\termfour$, where both observed and unobserved samples are presented. We focus on the term $\termthree$, with the understanding that a similar approach can be applied to the term $\termfour$. To start, we notice that the term $\termthree$ can be rewritten as
\begin{align} \label{rewrite:termthree:1} 
	\termthree & =  \sum_{\iconstant = 1}^{\mln{1}} \left(\constant{1} \sum_{\jconstant = \mln{1} + 1}^{\nln{1}} \kernel{\xln{\iconstant}}{\xln{\jconstant}} - 	\constant{3} \sum_{\jconstant = \mln{2} + 1}^{\nln{2}} \kernel{\xln{\iconstant}}{\yln{\jconstant}} \right),
\end{align}
where $\constant{1} = \frac{2}{\nln{1}(\nln{1}-1)}$ and $\constant{3} = \frac{2}{\nln{1}\nln{2}}$. Here, we introduce $\btln{1}(\z)$, defined as 
\begin{align*}
    \btln{1}(\z) = \constant{1} \sum_{\jconstant = \mln{1} + 1}^{\nln{1}} \kernel{\z}{\xln{\jconstant}} - 	\constant{3} \sum_{\jconstant = \mln{2} + 1}^{\nln{2}} \kernel{\z}{\yln{\jconstant}}.
\end{align*}  
Thus, $\termthree$ can be further expressed as:
\begin{align} \label{rewrite:termthree:2}
		\termthree = \sum_{\iconstant = 1}^{\mln{1}} \btln{1}(\xln{\iconstant}),
\end{align}
i.e. $\termthree$ is the summation of the function values $\btln{1}(\z)$ applied to all unobserved samples in $\bx$. This expression implies that in order to bound $\termthree$, it is sufficient to study bounds of function $\btln{1}(\z)$. We proceed by examining $\btln{1}(\z)$'s behavior, using the following lemma:

\subsection{Proof of Lemma \ref{lemma:2}} \label{Proof of Lemma 2}
\setcounter{lemma}{0}
\begin{lemma} \label{supp:lemma:2}
	Let $\xln{1}, \ldots, \xln{\ellsub{1}}$ and $\yln{1}, \ldots, \yln{\ellsub{2}}$ be univariate real values, that are observed. Suppose $\saln{1}, \ldots, \saln{\ellsub{1}}$, $\bbln{1}, \ldots, \bbln{\ellsub{2}}$ and $\beta$ are positive constants. Define
	\begin{align*}
		\bt(\z) = \sum_{\iconstant = 1}^{\ellsub{1}} \saln{\iconstant} \exp(-\beta|\xln{\iconstant} - \z|) - \sum_{\iconstant = 1}^{\ellsub{2}} \bbln{\iconstant} \exp(-\beta|\yln{\iconstant} - \z|)
	\end{align*}
	as a function of $\z \in \mathbb{R}$. Subsequently, for any given $\zln{0} \in \mathbb{R}$,
	\begin{align} \label{supp:lemma:2:eqn:1} 
		\bt(\zln{0}) \ge \min \{0, \bt(\xln{1}),\ldots,  \bt(\xln{\ellsub{1}}), \bt(\yln{1}),\ldots,  \bt(\yln{\ellsub{2}})\}.
	\end{align}
	On the other hand,
	\begin{align} \label{supp:lemma:2:eqn:2}
		\bt(\zln{0}) \le \max \{0, \bt(\xln{1}),\ldots,  \bt(\xln{\ellsub{1}}), \bt(\yln{1}),\ldots,  \bt(\yln{\ellsub{2}})\}.
	\end{align}
\end{lemma}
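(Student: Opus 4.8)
The plan is to exploit a differential identity that $\bt$ satisfies away from the knot points and then run a maximum-principle argument. Write $K = \{\xln{1}, \ldots, \xln{\ellsub{1}}, \yln{1}, \ldots, \yln{\ellsub{2}}\}$ for the finite set of knots. First I would record two elementary properties of $\bt$ viewed as a function on $\mathbb{R}$. Since each summand $\exp(-\beta| \cdot - \z|)$ is continuous and tends to $0$ as $|\z| \to \infty$, the function $\bt$ is continuous everywhere and satisfies $\bt(\z) \to 0$ as $|\z| \to \infty$, whatever the signs of the coefficients. Moreover, on each open interval of $\mathbb{R} \setminus K$ the sign of every $\xln{\iconstant} - \z$ and $\yln{\iconstant} - \z$ is constant, so $\bt$ is there of the form $A e^{\beta \z} + B e^{-\beta \z}$ and is twice differentiable; computing the second derivative of a single term $\exp(-\beta|p - \z|)$ for $\z \neq p$ gives $\beta^2 \exp(-\beta|p - \z|)$. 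Summing over all terms yields the key identity $\bt''(\z) = \beta^2 \bt(\z)$ for every $\z \notin K$.

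With this identity, the upper bound follows by contradiction. Let $M = \sup_{\z \in \mathbb{R}} \bt(\z)$ and suppose $M > \max\{0, \bt(\xln{1}), \ldots, \bt(\xln{\ellsub{1}}), \bt(\yln{1}), \ldots, \bt(\yln{\ellsub{2}})\}$; in particular $M > 0$. Because $\bt$ is continuous with $\bt \to 0$ at $\pm\infty$, there is an $R$ with $\bt(\z) < M/2$ for $|\z| > R$, so the supremum over $\mathbb{R}$ equals the supremum over the compact set $[-R, R]$ and is therefore attained at some finite point $w$. Since $M$ strictly exceeds $\bt(p)$ for every $p \in K$, the maximizer $w$ cannot be a knot, so it lies in an open interval of $\mathbb{R} \setminus K$ where $\bt$ is smooth. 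The second-order necessary condition at an interior maximizer gives $\bt''(w) \le 0$, whereas the identity gives $\bt''(w) = \beta^2 \bt(w) = \beta^2 M > 0$, a contradiction. Hence $M \le \max\{0, \bt(\xln{1}), \ldots, \bt(\yln{\ellsub{2}})\}$, and since $\bt(\zln{0}) \le M$ for every $\zln{0} \in \mathbb{R}$, inequality~\eqref{supp:lemma:2:eqn:1} follows.

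The lower bound is entirely symmetric. Setting $m = \inf_{\z \in \mathbb{R}} \bt(\z)$ and assuming $m < \min\{0, \bt(\xln{1}), \ldots, \bt(\yln{\ellsub{2}})\}$ forces $m < 0$; the same decay-and-compactness argument shows the infimum is attained at a finite non-knot point $w$, where the interior-minimum condition $\bt''(w) \ge 0$ contradicts $\bt''(w) = \beta^2 m < 0$. Thus $\bt(\zln{0}) \ge m \ge \min\{0, \bt(\xln{1}), \ldots, \bt(\yln{\ellsub{2}})\}$ for all $\zln{0}$, which is inequality~\eqref{supp:lemma:2:eqn:2}. Alternatively one may simply apply the upper-bound argument to $-\bt$.

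The only step needing genuine care is the claim that the global extremum is \emph{attained} at a finite point rather than merely approached; this is exactly where the decay $\bt \to 0$ at infinity is used, together with the assumption that the extremum strictly beats $0$ (so it cannot escape to infinity). Everything else reduces to the identity $\bt'' = \beta^2 \bt$ and the standard second-derivative test. I would note that positivity of $\saln{\iconstant}$ and $\bbln{\iconstant}$ plays no role in this argument, so the stated hypotheses are slightly stronger than required.
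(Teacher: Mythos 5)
Your proof is correct, but it takes a genuinely different route from the paper's. The paper argues locally: it splits $\mathbb{R}$ at the smallest and largest knots, shows that in the two tails $\bt(\z)$ is an exponential rescaling of its value at the nearest extreme knot (so it stays between $0$ and that value), and then, on each interval between consecutive knots $\zln{3} < \z < \zln{4}$, writes $\bt(\z) = A\exp(-\beta(\z-\zln{3})) + B\exp(-\beta(\zln{4}-\z))$, shows $\bt$ vanishes at most once there, and derives a contradiction through a case analysis on the sign of $\bt'(\zln{0})$ using monotonicity of the derivative and the mean value theorem; the upper bound is then obtained by applying the lower bound to $-\bt$. You instead run a single global maximum-principle argument: the identity $\bt'' = \beta^2 \bt$ off the knot set, continuity, and decay of $\bt$ at infinity force any global extremum strictly beating $0$ and all knot values to be attained at an interior smooth point, where the second-derivative test contradicts the identity. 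Both proofs ultimately rest on $\bt'' = \beta^2 \bt$ (the paper derives it per interval), but your version is shorter, treats both bounds symmetrically, handles the tails and the interior intervals uniformly, and correctly identifies the one delicate point (attainment of the extremum, which is where decay at infinity and the strict comparison with $0$ are used). Your closing observation that positivity of the $\saln{\iconstant}, \bbln{\iconstant}$ is never used is also right, and it applies equally to the paper's argument, so your proof in fact establishes a slightly more general statement. One cosmetic slip: you cite the equation labels backwards --- your upper-bound paragraph proves \eqref{supp:lemma:2:eqn:2} and your lower-bound paragraph proves \eqref{supp:lemma:2:eqn:1}, not the other way around.
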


\begin{proof}
	We will first prove that inequality \eqref{supp:lemma:2:eqn:1} holds. Let $\zln{1} =  \min\{\xln{1}, \ldots, \xln{\ellsub{1}}, \yln{1}, \ldots, \yln{\ellsub{2}}\}$. When $\z < \zln{1} $, it follows
	\begin{align*}
		\bt(\z) &= \sum_{\iconstant = 1}^{\ellsub{1}} \saln{\iconstant} \exp(-\beta(\xln{\iconstant} - \z)) - \sum_{\iconstant = 1}^{\ellsub{2}} \bbln{\iconstant} \exp(-\beta(\yln{\iconstant} - \z))\\
		& =  \sum_{\iconstant = 1}^{\ellsub{1}} \saln{\iconstant} \exp(-\beta(\xln{\iconstant} - \zln{1} +\zln{1} -\z )) - \sum_{\iconstant = 1}^{\ellsub{2}} \bbln{\iconstant} \exp(-\beta(\yln{\iconstant} - \zln{1} +\zln{1} -\z))\\
		& = \sum_{\iconstant = 1}^{\ellsub{1}} \saln{\iconstant} \exp(-\beta(\xln{\iconstant} - \zln{1})) \exp(-\beta(\zln{1} -\z)) - \sum_{\iconstant = 1}^{\ellsub{2}} \bbln{\iconstant} \exp(-\beta(\yln{\iconstant} - \zln{1})) \exp(-\beta(\zln{1} -\z))\\
		& = \exp(-\beta(\zln{1} -\z)) \left\{\sum_{\iconstant = 1}^{\ellsub{1}} \saln{\iconstant} \exp(-\beta(\xln{\iconstant} - \zln{1})) - \sum_{\iconstant = 1}^{\ellsub{2}} \bbln{\iconstant} \exp(-\beta(\yln{\iconstant} - \zln{1})) \right\}\\
		& =  \exp(-\beta(\zln{1} -\z)) \bt(\zln{1}).
	\end{align*}
	Notice that, since $\beta > 0$ and $\z < \zln{1}$,
	\begin{align*}
		0 < \exp(-\beta(\zln{1} -\z)) < 1.
	\end{align*}
	Hence, if $\bt(\zln{1}) \ge 0$,
	\begin{align*}
		&0 \le \exp(-\beta(\zln{1} -\z)) \bt(\zln{1}) \le \bt(\zln{1});\\
		\Rightarrow & 0 \le \bt(\z)
	\end{align*}
	if $\bt(\zln{1}) < 0$,
	\begin{align*}
		&0 > \exp(-\beta(\zln{1} -\z)) \bt(\zln{1}) > \bt(\zln{1});\\
		\Rightarrow & \bt(\z) > \bt(\zln{1}).
	\end{align*}
	Thus, for any given $\zln{0} \in \mathbb{R}$, if $\zln{0} \in (-\infty, \zln{1})$, we have
	\begin{align*}
		\bt(\zln{0}) \ge \min\{0, \bt(\zln{1})\},
	\end{align*}
	which proves \eqref{supp:lemma:2:eqn:1} when $\zln{0} \in (-\infty, \zln{1})$.

	Similarly, let $\zln{2} =  \max\{\xln{1}, \ldots, \xln{\ellsub{1}}, \yln{1}, \ldots, \yln{\ellsub{2}}\}$. When $\z > \zln{2} $, it follows
	\begin{align*}
		\bt(\z) &= \sum_{\iconstant = 1}^{\ellsub{1}} \saln{\iconstant} \exp(-\beta(\z - \xln{\iconstant})) - \sum_{\iconstant = 1}^{\ellsub{2}} \bbln{\iconstant} \exp(-\beta(\z - \yln{\iconstant}))\\
		& =  \sum_{\iconstant = 1}^{\ellsub{1}} \saln{\iconstant} \exp(-\beta(\z - \zln{2} + \zln{2} - \xln{\iconstant})) - \sum_{\iconstant = 1}^{\ellsub{2}} \bbln{\iconstant} \exp(-\beta(\z - \zln{2} + \zln{2} - \yln{\iconstant}))\\
		& = \sum_{\iconstant = 1}^{\ellsub{1}} \saln{\iconstant} \exp(-\beta(\zln{2}-\xln{\iconstant} )) \exp(-\beta(\z -\zln{2})) - \sum_{\iconstant = 1}^{\ellsub{2}} \bbln{\iconstant} \exp(-\beta(\zln{2} -\yln{\iconstant} )) \exp(-\beta(\z -\zln{2}))\\
		& = \exp(-\beta(\z -\zln{2})) \left\{\sum_{\iconstant = 1}^{\ellsub{1}} \saln{\iconstant} \exp(-\beta(\zln{2} - \xln{\iconstant})) - \sum_{\iconstant = 1}^{\ellsub{2}} \bbln{\iconstant} \exp(-\beta(\zln{2} - \yln{\iconstant} )) \right\}\\
		& =  \exp(-\beta(\z -\zln{2})) \bt(\zln{2}).
	\end{align*}
	Notice that, since $\beta > 0$ and $\z > \zln{2}$,
	\begin{align*}
		0 < \exp(-\beta(\z -\zln{2})) < 1.
	\end{align*}
	Hence, if $\bt(\zln{2}) \ge 0$,
	\begin{align*}
		&0 \le \exp(-\beta(\z -\zln{2})) \bt(\zln{2}) \le \bt(\zln{2});\\
		\Rightarrow & 0 \le \bt(\z)
	\end{align*}
	if $\bt(\zln{1}) < 0$,
	\begin{align*}
		&0 > \exp(-\beta(\z -\zln{2})) \bt(\zln{2}) > \bt(\zln{2});\\
		\Rightarrow & \bt(\z) > \bt(\zln{2}).
	\end{align*}
	Thus, for any given $\zln{0} \in \mathbb{R}$, if $\zln{0} \in (\zln{2}, \infty)$, we have
	\begin{align*}
		\bt(\zln{0}) > \min\{0, \bt(\zln{2})\},
	\end{align*}
	which proves \eqref{supp:lemma:2:eqn:1} when $\zln{0} \in (\zln{2}, \infty)$.

	Suppose $\zln{1} < \zln{0} < \zln{2}$ and $\zln{0} \notin \{\xln{1}, \ldots, \xln{\ellsub{1}}, \yln{1}, \ldots, \yln{\ellsub{2}}\}$. Then, there must be at least one number in $\{\xln{1}, \ldots, \xln{\ellsub{1}}, \yln{1}, \ldots, \yln{\ellsub{2}}\}$ smaller than $\z$, and at least one number larger than $\z$. Subsequently, denote $\zln{3}$ as the maximum number in $\{\xln{1}, \ldots, \xln{\ellsub{1}}, \yln{1}, \ldots, \yln{\ellsub{2}}\}$ smaller than $\zln{0}$; denote $\zln{4}$ as the minimum number in $\{\xln{1}, \ldots, \xln{\ellsub{1}}, \yln{1}, \ldots, \yln{\ellsub{2}}\}$ larger than $\zln{0}$.
	
	Notice that
	\begin{align*}
		\bt(\z) & = \sum_{\iconstant = 1}^{\ellsub{1}} \saln{\iconstant} \exp(-\beta|\xln{\iconstant} - \z|) - \sum_{\iconstant = 1}^{\ellsub{2}} \bbln{\iconstant} \exp(-\beta|\yln{\iconstant} - \z|)\\
		& = \sum_{\iconstant = 1}^{\ellsub{1}} \saln{\iconstant} \indicator{\xln{\iconstant} \le \zln{3}} \exp(-\beta  |\xln{\iconstant} - \z| ) + \sum_{\iconstant = 1}^{\ellsub{1}} \saln{\iconstant} \indicator{\xln{\iconstant} > \zln{3}} \exp(-\beta  |\xln{\iconstant} - \z| )\\
		& - \sum_{\iconstant = 1}^{\ellsub{2}} \bbln{\iconstant} \indicator{\yln{\iconstant} \le \zln{3}} \exp(-\beta|\yln{\iconstant} - \z|) - \sum_{\iconstant = 1}^{\ellsub{2}} \bbln{\iconstant} \indicator{\yln{\iconstant} > \zln{3}} \exp(-\beta|\yln{\iconstant} - \z|).
	\end{align*}
	When $\zln{3} < \z < \zln{4}$, according to the definition of $\zln{3}$ and $\zln{4}$, for any $\xln{\iconstant}, \yln{\iconstant} \le \zln{3}$, it follows $\z > \xln{\iconstant}, \yln{\iconstant}$. On the other hand, for for any $\xln{\iconstant}, \yln{\iconstant} > \zln{3}$, it follows $\xln{\iconstant}, \yln{\iconstant} \ge \zln{4}$, which gives $\z < \xln{\iconstant}, \yln{\iconstant}$.
	
	Hence, when $\zln{3} < \z < \zln{4}$,
	\begin{align*}
		\bt(\z) & = \sum_{\iconstant = 1}^{\ellsub{1}} \saln{\iconstant} \indicator{\xln{\iconstant} \le \zln{3}} \exp(-\beta  (\z - \xln{\iconstant}) ) + \sum_{\iconstant = 1}^{\ellsub{1}} \saln{\iconstant} \indicator{\xln{\iconstant} > \zln{3}} \exp(-\beta  (\xln{\iconstant} - \z) )\\
		& - \sum_{\iconstant = 1}^{\ellsub{2}} \bbln{\iconstant} \indicator{\yln{\iconstant} \le \zln{3}} \exp(-\beta(\z - \yln{\iconstant}) ) - \sum_{\iconstant = 1}^{\ellsub{2}} \bbln{\iconstant} \indicator{\yln{\iconstant} > \zln{3}} \exp(-\beta(\yln{\iconstant} - \z)).\\
	\end{align*}
	Further,
	\begin{align*}
		\bt(\z) & = \sum_{\iconstant = 1}^{\ellsub{1}} \saln{\iconstant} \indicator{\xln{\iconstant} \le \zln{3}} \exp(-\beta  (\z - \zln{3} +\zln{3} - \xln{\iconstant}) ) + \sum_{\iconstant = 1}^{\ellsub{1}} \saln{\iconstant} \indicator{\xln{\iconstant} > \zln{3}} \exp(-\beta  (\xln{\iconstant} -\zln{4} + \zln{4} - \z) )\\
		& - \sum_{\iconstant = 1}^{\ellsub{2}} \bbln{\iconstant} \indicator{\yln{\iconstant} \le \zln{3}} \exp(-\beta(\z - \zln{3} + \zln{3} - \yln{\iconstant}) ) - \sum_{\iconstant = 1}^{\ellsub{2}} \bbln{\iconstant} \indicator{\yln{\iconstant} > \zln{3}} \exp(-\beta(\yln{\iconstant} -\zln{4} + \zln{4} - \z)).\\
		&= \exp(-\beta(\z - \zln{3}))  \sum_{\iconstant = 1}^{\ellsub{1}}\saln{\iconstant} \indicator{\xln{\iconstant} \le \zln{3}} \exp(-\beta  (\zln{3} - \xln{\iconstant}) ) \\
		& + \exp(-\beta(\zln{4} - \z))  \sum_{\iconstant = 1}^{\ellsub{1}} \saln{\iconstant} \indicator{\xln{\iconstant} > \zln{3}} \exp(-\beta  (\xln{\iconstant} -\zln{4}) )\\
		& - \exp(-\beta(\z - \zln{3})) \sum_{\iconstant = 1}^{\ellsub{2}} \bbln{\iconstant} \indicator{\yln{\iconstant} \le \zln{3}} \exp(-\beta(\zln{3} -\yln{\iconstant}) )\\
		& - \exp(-\beta(\zln{4} - \z)) \sum_{\iconstant = 1}^{\ellsub{2}} \bbln{\iconstant} \indicator{\yln{\iconstant} > \zln{3}} \exp(-\beta(\yln{\iconstant} -\zln{4} )).\\
		&= \exp(-\beta(\z - \zln{3})) \left\{ \sum_{\iconstant = 1}^{\ellsub{1}}\saln{\iconstant} \indicator{\xln{\iconstant} \le \zln{3}} \exp(-\beta  (\zln{3} - \xln{\iconstant}) ) - \sum_{\iconstant = 1}^{\ellsub{2}} \bbln{\iconstant} \indicator{\yln{\iconstant} \le \zln{3}} \exp(-\beta(\zln{3} -\yln{\iconstant}) ) \right\}\\
		& + \exp(-\beta(\zln{4} - \z)) \left\{ \sum_{\iconstant = 1}^{\ellsub{1}} \saln{\iconstant} \indicator{\xln{\iconstant} > \zln{3}} \exp(-\beta  (\xln{\iconstant} -\zln{4}) ) - \sum_{\iconstant = 1}^{\ellsub{2}} \bbln{\iconstant} \indicator{\yln{\iconstant} > \zln{3}} \exp(-\beta(\yln{\iconstant} -\zln{4} )) \right\}.
	\end{align*}
	For notation ease, let us denote
	\begin{align*}
		&A := \left\{ \sum_{\iconstant = 1}^{\ellsub{1}}\saln{\iconstant} \indicator{\xln{\iconstant} \le \zln{3}} \exp(-\beta  (\zln{3} - \xln{\iconstant}) ) - \sum_{\iconstant = 1}^{\ellsub{2}} \bbln{\iconstant} \indicator{\yln{\iconstant} \le \zln{3}} \exp(-\beta(\zln{3} -\yln{\iconstant}) ) \right\},\\
		&B := \left\{ \sum_{\iconstant = 1}^{\ellsub{1}} \saln{\iconstant} \indicator{\xln{\iconstant} > \zln{3}} \exp(-\beta  (\xln{\iconstant} -\zln{4}) ) - \sum_{\iconstant = 1}^{\ellsub{2}} \bbln{\iconstant} \indicator{\yln{\iconstant} > \zln{3}} \exp(-\beta(\yln{\iconstant}-\zln{4}) ) \right\}.
	\end{align*}
	Thus,
	\begin{align*}
		\bt(\z) = A\exp(-\beta(\z - \zln{3})) + B\exp(-\beta(\zln{4} - \z)) .
	\end{align*}
	Notice that if $B = 0$, for any $\zln{3} < \z < \zln{4}$, $\bt(\z) = A\exp(-\beta(\z - \zln{3}))$ is a monotonic increasing function in $(\zln{3},\zln{4})$ when $A < 0$, or a monotonic decreasing function in $(\zln{3},\zln{4})$ when $A > 0$, or a constant function in $(\zln{3},\zln{4})$ when $A = 0$. Thus, we always have $\min\{\bt(\zln{3}), \bt(\zln{4}) \} \le \bt(\zln{0})$ for $\zln{0} \in (\zln{3}, \zln{4})$, which proves inequality \eqref{supp:lemma:2:eqn:1} directly. 
	
	If, however, $B \neq 0$, let $\bt(\z) = 0$, we have
	\begin{align*}
		&A\exp(-\beta(\z - \zln{3})) = -B\exp(-\beta(\zln{4} - \z))\\
		\Rightarrow &-\frac{A}{B} = \frac{\exp(-\beta(\zln{4} - \z))}{\exp(-\beta(\z - \zln{3}))}.
	\end{align*}
	When $-\frac{A}{B} \le 0$, since $\frac{\exp(-\beta(\zln{4} - \z))}{\exp(-\beta(\z - \zln{3}))} > 0$, then $\bt(\z)$ cannot take 0 in $(\zln{3}, \zln{4})$. If however, $-\frac{A}{B} > 0$, we further have
	\begin{align*}
		&\log \left(-\frac{A}{B}\right) =  -\beta(\zln{4} - \z) - \{-\beta(\z - \zln{3})\}\\
		\Rightarrow&\log \left(-\frac{A}{B}\right) =  -\beta\zln{4} + \beta\z + \beta \z - \beta \zln{3}\\
		\Rightarrow&\z = \frac{1}{\beta}\log \left(-\frac{A}{B}\right) + \zln{4} + \zln{3}.
	\end{align*}
	That is, $\bt(\z) = 0$ when $\z = \frac{1}{\beta}\log \left(-\frac{A}{B}\right) + \zln{4} + \zln{3} \in (\zln{3}, \zln{4})$. Overall, we have shown $\bt(\z)$ takes 0 at most once in $(\zln{3}, \zln{4})$ when $B \neq 0$. This result will be used subsequently for proving our final conclusion.

	Now, taking derivative of $\bt(\z)$ of $\z$,
	\begin{align*}
		\frac{\partial \bt(\z)}{\partial \z} &= -\beta A\exp(-\beta(\z - \zln{3})) + \beta B \exp(-\beta(\zln{4} - \z)).
	\end{align*}
	Further, taking derivative of $\frac{\partial \bt(\z)}{\partial \z}$ of $\z$,
	\begin{align*}
		\frac{\partial^2 \bt(\z)}{\partial^2 \z} &= \beta^2 A\exp(-\beta(\z - \zln{3})) + \beta^2 B \exp(-\beta(\zln{4} - \z))\\
		& = \beta^2 \bt(\z).
	\end{align*}
	
	We are going to prove $\bt(\zln{0}) \ge \min\{0, \bt{(\zln{3})}, \bt{(\zln{4})}\}$ when $B \neq 0$ using contradiction. Let us assume that
	\begin{align} \label{supp:lemma:2:eqn:3}
		\bt(\zln{0}) < \min\{0, \bt{(\zln{3})}, \bt{(\zln{4})}\}.
	\end{align}
	Then, since $\beta >0$, it must have
	\begin{align*}
		\left.\frac{\partial^2 \bt(\z)}{\partial^2 \z}\right\vert_{\zln{0}} = \beta^2 \bt(\zln{0}) < 0.
	\end{align*}
	For $\left.\frac{\partial \bt(\z)}{\partial \z}\right\vert_{\zln{0}}$, it is either $(i): \left.\frac{\partial \bt(\z)}{\partial \z}\right\vert_{\zln{0}} < 0$ or $(ii): \left.\frac{\partial \bt(\z)}{\partial \z}\right\vert_{\zln{0}} \ge 0$. We are going to discuss the two cases separately.
	
	Suppose $(i): \left.\frac{\partial \bt(\z)}{\partial \z}\right\vert_{\zln{0}} < 0$. If for any $\z \in [\zln{0}, \zln{4})$,  $\frac{\partial^2 \bt(\z)}{\partial^2 \z} < 0$. Then, $\frac{\partial \bt(\z)}{\partial \z}$ is a monotonic decreasing function of $\z$ in $[\zln{0}, \zln{4})$. Hence, for any $\z \in [\zln{0}, \zln{4})$,
	\begin{align*}
		\frac{\partial \bt(\z)}{\partial \z} < \left.\frac{\partial \bt(\z)}{\partial \z}\right\vert_{\zln{0}} < 0
	\end{align*}
	Thus, $\bt(\z)$ is a monotonic decreasing function in $\z \in [\zln{0}, \zln{4})$, giving us
	\begin{align*}
		\bt(\zln{0}) > \bt(\zln{4}), 
	\end{align*}
	which is contradicted to the assumption \eqref{supp:lemma:2:eqn:3}.
	
	If, however, for any $\z \in [\zln{0}, \zln{4})$, $\frac{\partial^2 \bt(\z)}{\partial^2 \z} < 0$ does not holds, recall that we have shown $\bt(\z)$ take 0 at most once in $(\zln{3}, \zln{4})$ when $B \neq 0$. This then implies there is only one point 
	$\st \in [\zln{0}, \zln{4})$ such that $\left.\frac{\partial^2 \bt(\z)}{\partial^2 \z}\right\vert_{\st} = 0$. Since $\frac{\partial^2 \bt(\z)}{\partial^2 \z}$ is a continuous function and $\left.\frac{\partial^2 \bt(\z)}{\partial^2 \z}\right\vert_{\zln{0}} < 0$, we have
	\begin{align*}
		\frac{\partial^2 \bt(\z)}{\partial^2 \z} < 0,~\z \in [\zln{0},\st).
	\end{align*}
	Thus, $\frac{\partial \bt(\z)}{\partial \z}$ is a monotonic decreasing function in $\z \in [\zln{0}, \st)$, giving us
	\begin{align}
		\frac{\partial \bt(\z)}{\partial \z} < \left.\frac{\partial \bt(\z)}{\partial \z}\right\vert_{\zln{0}} < 0,~\z \in [\zln{0},\st).  \label{supp:lemma:2:eqn:4}
	\end{align}
	However, notice that we also have
	\begin{align*}
		&\left.\frac{\partial^2 \bt(\z)}{\partial^2 \z}\right\vert_{\st} = 0 > \left.\frac{\partial^2 \bt(\z)}{\partial^2 \z}\right\vert_{\zln{0}}, \nonumber\\
		\Rightarrow & \bt(\st) = 0 > \bt(\zln{0}) \nonumber\\
		\Rightarrow & \exists \st' \in (\zln{0}, \st)~\text{such that}  \left.\frac{\partial \bt(\z)}{\partial \z}\right\vert_{\st'} = \frac{\bt(\st) - \bt(\zln{0})}{\st - \zln{0}} > 0,  ~~~(\text{mean value theorem}) 
	\end{align*}
	which contradicts \eqref{supp:lemma:2:eqn:4}.
	
	Suppose $(ii): \left.\frac{\partial \bt(\z)}{\partial \z}\right\vert_{\zln{0}} \ge 0$. The prove method is similar as to when $(i)$ holds.
	
	If for any $\z \in (\zln{3}, \zln{0}]$,  $\frac{\partial^2 \bt(\z)}{\partial^2 \z} < 0$. Then, $\frac{\partial \bt(\z)}{\partial \z}$ is a monotonic decreasing function of $\z$ in $(\zln{3}, \zln{0}]$. Hence, for any $\z \in (\zln{3}, \zln{0}]$, 
	\begin{align*}
		\frac{\partial \bt(\z)}{\partial \z} > \left.\frac{\partial \bt(\z)}{\partial \z}\right\vert_{\zln{0}} \ge 0.
	\end{align*}
	Thus, $\bt(\z)$ is a monotonic increasing function in $\z \in (\zln{3}, \zln{0}]$, giving
	\begin{align*}
		\bt(\zln{3}) < \bt(\zln{0}),
	\end{align*}
	which contradicts to the assumption \eqref{supp:lemma:2:eqn:3}.
	
	If, however, for any $\z \in (\zln{3}, \zln{0}]$,  $\frac{\partial^2 \bt(\z)}{\partial^2 \z} < 0$ does not holds, recall that we have shown $\bt(\z)$ taken 0 at most once in $(\zln{3}, \zln{4})$. This then implies there is only one point 
	$\st \in (\zln{3}, \zln{0}]$ such that $\left.\frac{\partial^2 \bt(\z)}{\partial^2 \z}\right\vert_{\st} = 0$. Since $\frac{\partial^2 \bt(\z)}{\partial^2 \z}$ is a continuous function and $\left.\frac{\partial^2 \bt(\z)}{\partial^2 \z}\right\vert_{\zln{0}} < 0$, we have
	\begin{align*}
		\frac{\partial^2 \bt(\z)}{\partial^2 \z} < 0,~\z \in (\st,\zln{0}].
	\end{align*}
	Thus, $\frac{\partial \bt(\z)}{\partial \z}$ is a monotonic decreasing function in $\z \in (\st,\zln{0}]$, giving us
	\begin{align}
		\frac{\partial \bt(\z)}{\partial \z} > \left.\frac{\partial \bt(\z)}{\partial \z}\right\vert_{\zln{0}} \ge 0,~\z \in (\st,\zln{0}].  \label{supp:lemma:2:eqn:5}
	\end{align}
	However, notice that we also have
	\begin{align*}
		&\left.\frac{\partial^2 \bt(\z)}{\partial^2 \z}\right\vert_{\st} = 0 > \left.\frac{\partial^2 \bt(\z)}{\partial^2 \z}\right\vert_{\zln{0}}, \nonumber\\
		\Rightarrow & \bt(\st) = 0 > \bt(\zln{0}) \nonumber\\
		\Rightarrow & \exists \st' \in (\st, \zln{0})~\text{such that}  \left.\frac{\partial \bt(\z)}{\partial \z}\right\vert_{\st'} = \frac{\bt(\st) - \bt(\zln{0})}{\st - \zln{0}} < 0.  ~~~(\text{mean value theorem})
	\end{align*}
	which contradicts \eqref{supp:lemma:2:eqn:5}. Hence, we finish our prove for inequality \eqref{supp:lemma:2:eqn:1}.
	
	We now prove inequality \eqref{supp:lemma:2:eqn:2}. Notice that 
	\begin{align*}
		- \bt(\z) = \sum_{\iconstant = 1}^{\ellsub{2}} \bbln{\iconstant} \exp(-\beta|\yln{\iconstant} - \z|) - \sum_{\iconstant = 1}^{\ellsub{1}} \saln{\iconstant} \exp(-\beta|\xln{\iconstant} - \z|).
	\end{align*}
	Subsequently, using the result of inequality \eqref{supp:lemma:2:eqn:1}, we have
	\begin{align*}
		&- \bt(\zln{0}) \ge \min \{0, -\bt(\xln{1}),\ldots,  -\bt(\xln{\ellsub{1}}), -\bt(\yln{1}),\ldots,  - \bt(\yln{\ellsub{2}})\}\\
		\Rightarrow &\bt(\zln{0})  \le - \min \{0, -\bt(\xln{1}),\ldots,  -\bt(\xln{\ellsub{1}}), -\bt(\yln{1}),\ldots,  - \bt(\yln{\ellsub{2}})\} \\
		\Rightarrow &\bt(\zln{0})  \le \max \{0, \bt(\xln{1}),\ldots,  \bt(\xln{\ellsub{1}}), \bt(\yln{1}),\ldots,   \bt(\yln{\ellsub{2}})\}, 
	\end{align*}
	which proves inequality \eqref{supp:lemma:2:eqn:2}.
\end{proof}
This lemma provides a linear time algorithm for computing the bounds of $\bt(\z)$, be checking all the function values of $\bt(\z)$ to the set $\{\xln{1}, \ldots, \xln{\ellsub{1}}, \yln{1}, \ldots, \yln{\ellsub{2}}\}$ and comparing all these function values with 0.
Notice that $\btln{1}(\z)$ is a special form of $\bt(\z)$ by letting all $\saln{1}, \ldots, \saln{\ellsub{1}}$ in Lemma \ref{lemma:2} equal to $\constant{1}$ and all  $\bbln{1}, \ldots, \bbln{\ellsub{2}}$  equal to $\constant{2}$, where $\constant{1} = \frac{2}{\nln{1}(\nln{1}-1)}$, 
$\constant{2} = \frac{2}{\nln{2}(\nln{2}-1)}$. Hence, Lemma \ref{lemma:2} allows us to compute the bounds of $\btln{1}(\z)$ efficiently, which then gives the bounds of $\termthree$ according to Equation \eqref{rewrite:termthree:2}. This result, combining with Equation \eqref{eqn:bounding termone}, allows us to conclude the bounds of MMD for univariate samples, which is proved in Theorem \ref{supp:theorem:1}.

\subsection{Proof of Lemma \ref{lemma:4}} \label{Proof of Lemma 4}

This section proves Lemma \ref{lemma:4}, which extends Lemma \ref{lemma:2} into higher dimensions. 

\begin{lemma} \label{supp:lemma:4}
	Let $\xln{1}, \ldots, \xln{\ellsub{1}}, \yln{1}, \ldots, \yln{\ellsub{2}} \in \mathbb{R}^d$
	be values that are fully observed.
	Suppose $\saln{1}, \ldots, \saln{\ellsub{1}}$, $\bbln{1}, \ldots, \bbln{\ellsub{2}}, \beta$ 
	are positive constants. For $\z = (\z(1), \ldots, \z(\dd) ) \in \mathbb{R}^d$  
	with missing components, define
	\begin{align*}
		\bt( \{ \z (j) \,:\, j \in \buln{z} \}) 
		%\bt(\zln{1}, \ldots, \zln{\dd}) 
		= \sum_{\iconstant = 1}^{\ellsub{1}} \saln{\iconstant}  
		\exp\left(-\beta \sum_{\jconstant \in \buln{\z}} | \xln{\iconstant}(\jconstant) 
		- \z(\jconstant)|\right) 
		- \sum_{\iconstant = 1}^{\ellsub{2}} \bbln{\iconstant} 
		\exp\left(-\beta  \sum_{\jconstant \in \buln{\z}} |\yln{\iconstant}(\jconstant) - \z(\jconstant)|\right)
	\end{align*}
	as a function of the unobserved components of $\z$ and let 
	\begin{align*}
		\setchi = \{ \bt( \{ \z(j) \,:\, j \in \buln{z} \}) \ :\, 
		\z(\iconstant)\in \{\xln{1}(\iconstant), \ldots, \xln{\ellsub{1}}(\iconstant), 
		\yln{1}(\iconstant), \ldots, \yln{\ellsub{2}}(\iconstant) \}, \iconstant \in \buln{z} \}.
	\end{align*}
	%    denote the possible values for $\bt( \{ \zln{j} \,:\, j \in \buln{z} \})$ when the
	%    missing components of $z$ are imputed with values equal to a component of either $x$ or $y$.
	Then, for any possible imputation $z^{\ast}$ of $\z$, 
	%    i.e. $z^{\ast} \in \mathbb{R}^{d}$ 
	%    is fully observed and $z^{\ast}(j) = z(j)$ for all $j \in \buln{z}$, we have
	%Then, we have
	\begin{align} \label{supp:lemma:4:eqn:1}
		\bt( \{ z^{\ast}(j) \,:\, j \in \buln{z} \}) \ge \min \{0, \min\setchi\},
	\end{align}
	and
	\begin{align} \label{supp:lemma:4:eqn:2}
		\bt( \{ z^{\ast}(j) \,:\, j \in \buln{z} \})  \le \max \{0, \max\setchi\}.
	\end{align}
\end{lemma}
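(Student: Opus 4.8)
The plan is to reduce the $d$-dimensional statement to repeated application of the univariate Lemma~\ref{lemma:2}, treating the missing coordinates one at a time. The key algebraic observation is that the exponential of a sum factorises: for each $\iconstant$,
\begin{align*}
\exp\left(-\beta \sum_{\jconstant \in \buln{z}} |\xln{\iconstant}(\jconstant) - \z(\jconstant)|\right) = \prod_{\jconstant \in \buln{z}} \exp\left(-\beta|\xln{\iconstant}(\jconstant) - \z(\jconstant)|\right),
\end{align*}
and similarly for the $\yln{\iconstant}$ terms. Consequently, if all missing coordinates except one, say $\jconstant_0 \in \buln{z}$, are held fixed, then $\bt$ viewed as a function of the single real variable $\z(\jconstant_0)$ has exactly the form treated in Lemma~\ref{lemma:2}: the weight $\saln{\iconstant}$ is replaced by $\saln{\iconstant} \prod_{\jconstant \in \buln{z},\, \jconstant \neq \jconstant_0} \exp(-\beta|\xln{\iconstant}(\jconstant) - \z(\jconstant)|)$ and $\bbln{\iconstant}$ analogously. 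Crucially these modified weights remain strictly positive, since each exponential factor is positive, so Lemma~\ref{lemma:2} genuinely applies.

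First I would establish the lower bound~\eqref{supp:lemma:4:eqn:1} by an iterative argument. Enumerate $\buln{z} = \{\jconstant_1, \ldots, \jconstant_m\}$ and start from the given imputation $z^{\ast}$. At step $k$, holding all other coordinates at their current values, apply Lemma~\ref{lemma:2} along coordinate $\jconstant_k$; this bounds the current value of $\bt$ below by $\min\{0, \cdot\}$ taken over the values obtained when $\z(\jconstant_k)$ is set to each observed component value $\xln{1}(\jconstant_k), \ldots, \yln{\ellsub{2}}(\jconstant_k)$. If this minimum is $0$, the current value is already $\ge 0 \ge \min\{0, \min\setchi\}$ and we stop. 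Otherwise the minimum is attained at one of those observed component values; fix $\z(\jconstant_k)$ there, producing an imputation whose $\bt$-value is no larger, and move to coordinate $\jconstant_{k+1}$.

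After all $m$ coordinates have been processed without the $0$ bound ever being reached, every missing coordinate has been set to an observed component value at its own coordinate, so the resulting value is a genuine element of $\setchi$ and is $\le \bt(\{z^{\ast}(j) : j \in \buln{z}\})$; hence $\bt(\{z^{\ast}(j) : j \in \buln{z}\}) \ge \min\setchi$. Combining the two cases gives~\eqref{supp:lemma:4:eqn:1}. The upper bound~\eqref{supp:lemma:4:eqn:2} follows by the identical chain applied to $-\bt$ (which interchanges the roles of the $\saln{\iconstant}, \xln{\iconstant}$ and the $\bbln{\iconstant}, \yln{\iconstant}$), exactly as in the univariate argument, or equivalently by invoking the upper-bound half of Lemma~\ref{lemma:2} at each step.

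The main point to get right is the bookkeeping of the chained inequalities rather than any new analytic content: one must confirm that positivity of the modified weights is preserved at every step so that Lemma~\ref{lemma:2} applies, and that the telescoping $\bt(z^{\ast}) \ge \bt(z^{(1)}) \ge \cdots$ terminates correctly — either because the lower bound $0$ is met at some intermediate coordinate, or because after the final coordinate the value belongs to $\setchi$. The only real subtlety is handling the $\min\{0,\cdot\}$ disjunction consistently across all $m$ steps.
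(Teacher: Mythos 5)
Your proposal is correct and is essentially the paper's own argument: the paper also reduces to the univariate Lemma~\ref{lemma:2} by factorising the exponential over the missing coordinates (absorbing the fixed-coordinate factors into new positive weights $\saln{\iconstant}'$, $\bbln{\iconstant}'$) and then chains the resulting bounds through $\min\{0,\cdot\}$ at each coordinate. The only cosmetic difference is that the paper packages the coordinate-by-coordinate descent as a formal induction on $|\buln{z}|$, fixing one coordinate at the query value and invoking the hypothesis on the rest before a final univariate application at the grid argmin, whereas your greedy termination-or-descend loop is precisely that induction unrolled.
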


\begin{proof}
	We will only prove inequality \eqref{supp:lemma:4:eqn:1} and inequality \eqref{supp:lemma:4:eqn:2} can be proved following the same method.
	
	Let us prove inequality \eqref{supp:lemma:4:eqn:1} using mathematical induction. When $|\buln{z}| = 1$, this lemma is exactly Lemma \ref{supp:lemma:2}, which has already been shown. Suppose when $|\buln{z}| = \lconstant - 1$, where $\lconstant$ is any integer such that $1 \le \lconstant$,  inequality \eqref{supp:lemma:4:eqn:1} holds. We are going to show when $|\buln{z}| = \lconstant$, inequality \eqref{supp:lemma:4:eqn:1} still holds.
	
	Without loss of generality, let us assume (after relabeling) that $\buln{\z} = 1, \ldots, \lconstant$. Subsequently, for any given $(\zsln{1}, \ldots, \zsln{\lconstant}) \in \mathbb{R}^{\lconstant}$, it follows
	\begin{align*}
		\bt(\zsln{1}, \ldots, \zsln{\lconstant}) &= \sum_{\iconstant = 1}^{\ellsub{1}} \saln{\iconstant}  \exp\left(-\beta \sum_{\jconstant = 1}^{\lconstant} |\xln{\iconstant}(\jconstant) - \zsln{\jconstant}|\right) - \sum_{\iconstant = 1}^{\ellsub{2}} \bbln{\iconstant} \exp\left(-\beta  \sum_{\jconstant = 1}^{\lconstant}|\yln{\iconstant}(\jconstant) - \zsln{\jconstant}|\right)\\
		& = \sum_{\iconstant = 1}^{\ellsub{1}} \saln{\iconstant}  \exp\left(-\beta |\xln{\iconstant}(1) - \zsln{1}|\right) \exp\left(-\beta \sum_{\jconstant = 2}^{\lconstant} |\xln{\iconstant}(\jconstant) - \zsln{\jconstant}|\right)\\
		& - \sum_{\iconstant = 1}^{\ellsub{2}} \bbln{\iconstant} \exp\left(-\beta  |\yln{\iconstant}(1) - \zsln{1}|\right) \exp\left(-\beta  \sum_{\jconstant = 2}^{\lconstant}|\yln{\iconstant}(\jconstant) - \zsln{\jconstant}|\right).
	\end{align*}
	Denote
	\begin{align*}
		\saln{\iconstant}' &= \saln{\iconstant}  \exp\left(-\beta |\xln{\iconstant}(1) - \zsln{1}|\right),~\iconstant = 1,\ldots, \ellsub{1},\\
		\bbln{\iconstant}' &= \bbln{\iconstant}  \exp\left(-\beta  |\yln{\iconstant}(1) - \zsln{1}|\right),~\iconstant = 1,\ldots, \ellsub{2}.
	\end{align*}
	Consider a new function 
	\begin{align*}
		\bt'(\zln{2}, \ldots, \zln{\lconstant}) = \sum_{\iconstant = 1}^{\ellsub{1}} \saln{\iconstant}'  \exp\left(-\beta \sum_{\jconstant = 2}^{\lconstant} |\xln{\iconstant}(\jconstant) - \zln{\jconstant}|\right) - \sum_{\iconstant = 1}^{\ellsub{2}} \bbln{\iconstant}' \exp\left(-\beta  \sum_{\jconstant = 2}^{\lconstant}|\yln{\iconstant}(\jconstant) - \zln{\jconstant}|\right).
	\end{align*}
	Let
	\begin{align*}
		\setchi' = \{\bt'(\zln{2}, \ldots, \zln{\lconstant})| \zln{\jconstant} \in \{\xln{1}(\jconstant), \ldots, \xln{\ellsub{1}}(\jconstant), \yln{1}(\jconstant), \ldots, \yln{\ellsub{2}}(\jconstant) \}, \jconstant = 2, \ldots, \lconstant \}.
	\end{align*}
	Then, using the assumption that when $|\buln{z}| = \lconstant - 1$, inequality \eqref{supp:lemma:4:eqn:1} holds, it follows that
	\begin{align*}
		\bt'(\zsln{2}, \ldots, \zsln{\lconstant}) &\ge \min \{0, \min \setchi'\},
	\end{align*}
	Notice that 
	\begin{align}  
		\begin{split} \label{supp:lemma:4:eqn:3}
			\bt'(\zln{2}, \ldots, \zln{\lconstant}) &= \sum_{\iconstant = 1}^{\ellsub{1}} \saln{\iconstant}'  \exp\left(-\beta \sum_{\jconstant = 2}^{\lconstant} |\xln{\iconstant}(\jconstant) - \zln{\jconstant}|\right) - \sum_{\iconstant = 1}^{\ellsub{2}} \bbln{\iconstant}' \exp\left(-\beta  \sum_{\jconstant = 2}^{\lconstant}|\yln{\iconstant}(\jconstant) - \zln{\jconstant}|\right)\\
			& =  \sum_{\iconstant = 1}^{\ellsub{1}} \saln{\iconstant}  \exp\left(-\beta |\xln{\iconstant}(1) - \zsln{1}|\right)  \exp\left(-\beta \sum_{\jconstant = 2}^{\lconstant} |\xln{\iconstant}(\jconstant) - \zln{\jconstant}|\right) \\
			& - \sum_{\iconstant = 1}^{\ellsub{2}} \bbln{\iconstant}  \exp\left(-\beta  |\yln{\iconstant}(1) - \zsln{1}|\right) \exp\left(-\beta  \sum_{\jconstant = 2}^{\lconstant}|\yln{\iconstant}(\jconstant) - \zln{\jconstant}|\right)\\
			& =\bt(\zsln{1}, \zln{2}, \ldots, \zln{\lconstant}).
		\end{split}
	\end{align}
	Hence,
	\begin{align*}
		\bt'(\zsln{2}, \ldots, \zsln{\lconstant}) = \bt(\zsln{1}, \ldots, \zsln{\lconstant}).
	\end{align*}
	We therefore have
	\begin{align} \label{supp:lemma:4:eqn:4}
		\bt(\zsln{1}, \ldots, \zsln{\lconstant}) &\ge \min \{0, \min \chi'\}.
	\end{align}

	Further, let us denote 
	\begin{align} \label{supp:lemma:4:eqn:5}
		(\zln{2}', \ldots, \zln{\lconstant}') = \argmin_{\zln{\iconstant} \in \{\xln{1}(\iconstant), \ldots, \xln{\ellsub{1}}(\iconstant), \yln{1}(\iconstant), \ldots, \yln{\ellsub{2}}(\iconstant) \}, \iconstant = 2, \ldots, \lconstant} \bt'(\zln{2}, \ldots, \zln{\lconstant}).
	\end{align}
	That is,
	\begin{align*}
		\bt'(\zln{2}', \ldots, \zln{\lconstant}') = \min \setchi'.
	\end{align*}
	Applying \eqref{supp:lemma:4:eqn:3}, it follows
	\begin{align} \label{supp:lemma:4:eqn:6}
		\bt(\zsln{1}, \zln{2}', \ldots, \zln{\lconstant}') = \bt'(\zln{2}', \ldots, \zln{\lconstant}') = \min \setchi'.
	\end{align}
	
	Denote
	\begin{align*}
		\saln{\iconstant}'' &= \saln{\iconstant}  \exp\left(-\beta \sum_{\jconstant = 2}^{\lconstant} |\xln{\iconstant}(\jconstant) - \zln{\jconstant}'|\right),~\iconstant = 1,\ldots, \ellsub{1},\\
		\bbln{\iconstant}'' &= \bbln{\iconstant}  \exp\left(-\beta  \sum_{\jconstant = 2}^{\lconstant}|\yln{\iconstant}(\jconstant) - \zln{\jconstant}'|\right)~\iconstant = 1,\ldots, \ellsub{2}.
	\end{align*}
	Consider a new function
	\begin{align*}
		\bt''(\zln{1}) = \sum_{\iconstant=1}^{\ellsub{1}} \saln{\iconstant}'' \exp\left(-\beta|\xln{\iconstant}(1) - \zln{1}|\right) - \sum_{\iconstant=1}^{\ellsub{1}} \bbln{\iconstant}'' \exp\left(-\beta|\yln{\iconstant}(1) - \zln{1}|\right).
	\end{align*}
	Let
	\begin{align*}
		\setchi'' = \{\bt''(\zln{1})| \zln{1} \in \{\xln{1}(1), \ldots, \xln{\ellsub{1}}(1), \yln{1}(1), \ldots, \yln{\ellsub{2}}(1) \} \}.
	\end{align*}
	Then, using the result that when $|\buln{z}| = 1$, \eqref{supp:lemma:4:eqn:1}  holds, it follows
	\begin{align}  \label{supp:lemma:4:eqn:7}
		\bt''(\zsln{1}) &\ge \min\{0, \min \setchi''\}.
	\end{align}
	Notice that 
	\begin{align*}
		\bt''(\zln{1}) &= \sum_{\iconstant=1}^{\ellsub{1}} \saln{\iconstant}'' \exp\left(-\beta|\xln{\iconstant}(1) - \zln{1}|\right) - \sum_{\iconstant=1}^{\ellsub{1}} \bbln{\iconstant}'' \exp\left(-\beta|\yln{\iconstant}(1) - \zln{1}|\right) \\
		& = \sum_{\iconstant=1}^{\ellsub{1}} \saln{\iconstant}  \exp\left(-\beta \sum_{\jconstant = 2}^{\lconstant} |\xln{\iconstant}(\jconstant) - \zln{\jconstant}'|\right) \exp\left(-\beta|\xln{\iconstant}(1) - \zln{1}|\right)\\
		& - \sum_{\iconstant=1}^{\ellsub{1}} \bbln{\iconstant}  \exp\left(-\beta  \sum_{\jconstant = 2}^{\lconstant}|\yln{\iconstant}(\jconstant) - \zln{\jconstant}'|\right) \exp\left(-\beta|\yln{\iconstant}(1) - \zln{1}|\right) \\
		& = \bt(\zln{1}, \zln{2}', \ldots, \zln{\lconstant}').
	\end{align*}
	According to \eqref{supp:lemma:4:eqn:6}, 
	\begin{align*}
		\bt''(\zsln{1}) =  \min \setchi'.
	\end{align*}
	Notice that since \eqref{supp:lemma:4:eqn:7} holds, we have
	\begin{align*}
		\min \setchi' \ge \min\{0, \min \setchi''\},
	\end{align*}
	put which back into \eqref{supp:lemma:4:eqn:4}, we further have
	\begin{align*}
		\bt(\zsln{1}, \ldots, \zsln{\lconstant}) &\ge \min \{0, \min \setchi'\} \ge \min \{0, \min \setchi''\}.
	\end{align*}
	Notice that
	\begin{align*}
		\min \setchi'' &= \min \{\bt''(\zln{1})| \zln{1} \in \{\xln{1}(1), \ldots, \xln{\ellsub{1}}(1), \yln{1}(1), \ldots, \yln{\ellsub{2}}(1) \} \} \\
		& = \min \{\bt(\zln{1}, \zln{2}', \ldots, \zln{\lconstant}')| \zln{1} \in \{\xln{1}(1), \ldots, \xln{\ellsub{1}}(1), \yln{1}(1), \ldots, \yln{\ellsub{2}}(1) \} \},
	\end{align*}
	where 
	\begin{align*}
		\zln{\iconstant}' \in \{\xln{1}(\iconstant), \ldots, \xln{\ellsub{1}}(\iconstant), \yln{1}(\iconstant), \ldots, \yln{\ellsub{2}}(\iconstant) \}, \iconstant = 2, \ldots, \lconstant
	\end{align*}
	according to its definition in \eqref{supp:lemma:4:eqn:5}.
	Hence,
	\begin{align*}
		\min \setchi'' \ge \min \chi \Rightarrow \bt(\zsln{1}, \ldots, \zsln{\lconstant}) \ge \min \{0, \min \setchi\},
	\end{align*}
	which proves inequality \eqref{supp:lemma:4:eqn:1}.
\end{proof}

\subsection{Proof of Lemma \ref{lemma:5}} \label{Proof of Lemma 5}

In Lemma \ref{lemma:2}, it is proved that in order to compute the bounds of 
$\bt( \{ \z (j) \,:\, j \in \buln{z} \})$ for $z \in \mathbb{R}^d$,
we only need to check the imputations of $\z$ where its missing components
are imputed using the components of 
$\xln{1}, \ldots, \xln{\ellsub{1}}, \yln{1}, \ldots, \yln{\ellsub{2}}$.
However, computing $\bt( \{ \z (j) \,:\, j \in \buln{z} \})$ for all possible 
imputations using Lemma \ref{lemma:4} is 
$(\ellsub{1} + \ellsub{2})^{|\buln{\xln{\iconstant}}|}$, which 
is exponential in the number of unobserved components of $\z$ and impractical
to compute. To address this practical computational challenge, we introduce the following lemma:

\begin{lemma} \label{supp:lemma:5}
	Following the notation and definitions in Lemma \ref{lemma:4}, denote 
	\begin{align*}
		\xhln{\iconstant}(\jconstant) &= \max \{ |\xln{\iconstant}(\jconstant) - \xln{1}(\jconstant)|, \ldots, |\xln{\iconstant}(\jconstant) - \xln{\ellsub{1}}(\jconstant)|, |\xln{\iconstant}(\jconstant) - \yln{1}(\jconstant)|, \ldots, |\xln{\iconstant}(\jconstant) - \yln{\ellsub{2}}(\jconstant)|\}
	\end{align*}
	for any $\iconstant \in \{1,\ldots,\ellsub{1}\}, \jconstant \in \buln{z}$; denote
	\begin{align*}
		\yhln{\iconstant}(\jconstant) &= \max \{ |\yln{\iconstant}(\jconstant) - \ellsub{1}(\jconstant)|, \ldots, |\yln{\iconstant}(\jconstant) - \xln{\nln{1}}(\jconstant)|, |\yln{\iconstant}(\jconstant) - \yln{1}(\jconstant)|, \ldots, |\yln{\iconstant}(\jconstant) - \yln{\ellsub{2}}(\jconstant)|\}
	\end{align*}
	for any $\iconstant \in \{1,\ldots,\ellsub{2}\}, \jconstant \in \buln{z}$. Subsequently,
	\begin{align}
		&\min \setchi \ge  \sum_{\iconstant = 1}^{\ellsub{1}} \saln{\iconstant} \exp\left(-\beta \sum_{ \jconstant \in \buln{z}} \xhln{\iconstant}(\jconstant) \right) - \sum_{\iconstant = 1}^{\ellsub{2}} \bbln{\iconstant}, \label{supp:lemma:5:eqn:1}\\
		&\max \setchi \le  \sum_{\iconstant = 1}^{\ellsub{1}} \saln{\iconstant} - \sum_{\iconstant = 1}^{\ellsub{2}} \bbln{\iconstant} \exp\left(-\beta \sum_{ \jconstant \in \buln{z}} \yhln{\iconstant}(\jconstant) \right). \label{supp:lemma:5:eqn:2}
	\end{align}
\end{lemma}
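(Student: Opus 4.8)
The plan is to prove both inequalities directly, by bounding each of the two sums in $\bt$ separately using nothing more than the monotonicity of the exponential and one structural observation about $\setchi$. That observation is: every element of $\setchi$ is obtained by imputing each missing coordinate $\z(\jconstant)$, $\jconstant \in \buln{z}$, with one of the observed coordinate values $\{\xln{1}(\jconstant), \ldots, \xln{\ellsub{1}}(\jconstant), \yln{1}(\jconstant), \ldots, \yln{\ellsub{2}}(\jconstant)\}$. For any such imputation the definitions of $\xhln{\iconstant}(\jconstant)$ and $\yhln{\iconstant}(\jconstant)$ as maximal coordinate distances give the coordinate-wise bounds $|\xln{\iconstant}(\jconstant) - \z(\jconstant)| \le \xhln{\iconstant}(\jconstant)$ and $|\yln{\iconstant}(\jconstant) - \z(\jconstant)| \le \yhln{\iconstant}(\jconstant)$, which is the only nontrivial input.

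To establish the lower bound on $\min\setchi$, I would fix an arbitrary element of $\setchi$ and bound its two sums in opposite directions. Summing the coordinate-wise inequality yields $\sum_{\jconstant \in \buln{z}} |\xln{\iconstant}(\jconstant) - \z(\jconstant)| \le \sum_{\jconstant \in \buln{z}} \xhln{\iconstant}(\jconstant)$; since $\exp(-\beta\,\cdot\,)$ is decreasing and $\saln{\iconstant} > 0$, this gives $\saln{\iconstant}\exp(-\beta \sum_{\jconstant} |\xln{\iconstant}(\jconstant) - \z(\jconstant)|) \ge \saln{\iconstant}\exp(-\beta \sum_{\jconstant} \xhln{\iconstant}(\jconstant))$ for each $\iconstant$. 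For the subtracted sum I would use only the trivial bound $0 < \exp(-\beta \sum_{\jconstant}|\yln{\iconstant}(\jconstant) - \z(\jconstant)|) \le 1$, so that $\sum_{\iconstant} \bbln{\iconstant}\exp(\cdots) \le \sum_{\iconstant} \bbln{\iconstant}$. Subtracting, every element of $\setchi$ is at least $\sum_{\iconstant} \saln{\iconstant}\exp(-\beta\sum_{\jconstant}\xhln{\iconstant}(\jconstant)) - \sum_{\iconstant}\bbln{\iconstant}$, which is therefore a lower bound for $\min\setchi$ and proves \eqref{supp:lemma:5:eqn:1}.

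For the upper bound on $\max\setchi$ I would simply swap the roles of the two sums: bound the positive sum trivially by $\exp(\cdots) \le 1$ to obtain $\sum_{\iconstant}\saln{\iconstant}\exp(\cdots) \le \sum_{\iconstant}\saln{\iconstant}$, and lower-bound the subtracted sum via the coordinate-wise inequality with $\yhln{\iconstant}(\jconstant)$, giving $\sum_{\iconstant}\bbln{\iconstant}\exp(\cdots) \ge \sum_{\iconstant}\bbln{\iconstant}\exp(-\beta\sum_{\jconstant}\yhln{\iconstant}(\jconstant))$. Subtracting yields the claimed bound for each element of $\setchi$, hence for $\max\setchi$, proving \eqref{supp:lemma:5:eqn:2}.

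I do not expect a substantive obstacle here: both inequalities follow from monotonicity of the exponential combined with the coordinate-wise distance bound. The only point requiring care is selecting the correct direction for each of the four term-groups — the tight bound ($\xhln{}$ or $\yhln{}$) on the sum carrying the matching sign and the trivial bound ($\exp \le 1$) on the other — so that the two one-sided estimates are internally consistent. It is worth observing that these bounds are generally \emph{not} attained by any single imputation, since one fixed $\z$ cannot simultaneously maximize every $\xln{\iconstant}$ distance and minimize every $\yln{\iconstant}$ exponential; this slackness is exactly what is traded for the polynomial-time evaluation described after the lemma, replacing the exponential enumeration implicit in Lemma~\ref{lemma:4}.
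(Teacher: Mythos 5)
Your proposal is correct and takes essentially the same route as the paper's proof: both rest on the coordinate-wise bound $|\xln{\iconstant}(\jconstant) - \z(\jconstant)| \le \xhln{\iconstant}(\jconstant)$ (valid because $\z(\jconstant)$ ranges only over the listed observed coordinate values), monotonicity of $\exp(-\beta \,\cdot\,)$ applied to the sum with matching sign, and the trivial bound $\exp(\cdot) \le 1$ on the opposite-sign sum. The only cosmetic differences are that you bound every element of $\setchi$ uniformly where the paper applies the identical estimates to the argmin attaining $\min \setchi$, and that you write out the second inequality explicitly where the paper invokes symmetry---neither changes the argument.
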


\begin{proof}
	We will only prove inequality \eqref{supp:lemma:5:eqn:1} and inequality \eqref{supp:lemma:5:eqn:2} can be proved similarly. 
	
	Without loss of generality, let us assume (after relabeling) $\buln{z} = 1, \ldots, \lconstant$, where $\lconstant = |\buln{z}|$. Subsequently, let 
	\begin{align*}
		(\zsln{1}, \ldots, \zsln{\lconstant}) = \argmin_{ \{\zln{\iconstant} \in \{\xln{1}(\iconstant), \ldots, \xln{\ellsub{1}}(\iconstant), \yln{1}(\iconstant), \ldots, \yln{\ellsub{2}}(\iconstant) \}, \iconstant = 1, \ldots, \lconstant \}} \bt(\zln{1}, \ldots,\zln{\lconstant}).
	\end{align*}
	That is,
	\begin{align*}
		\bt(\zsln{1}, \ldots,\zsln{\lconstant}) = \min \setchi.
	\end{align*}
	Notice that for any $\jconstant \in \{1,\ldots, \lconstant\}$, 
	\begin{align*}
		\zsln{\jconstant} \in \{\xln{1}(\jconstant), \ldots, \xln{\ellsub{1}}(\jconstant), \yln{1}(\jconstant), \ldots, \yln{\ellsub{2}}(\jconstant) \}
	\end{align*}
	according to its definition. Hence, for any $\iconstant \in \{1,\ldots, \ellsub{1} \}$,
	\begin{align*}
		|\zsln{\jconstant} - \xln{\iconstant}(\jconstant)| \le \max \{ |\xln{\iconstant}(\jconstant) - \xln{1}(\jconstant)|, \ldots, |\xln{\iconstant}(\jconstant) - \xln{\ellsub{1}}(\jconstant)| =  \xhln{\iconstant}(\jconstant), 
	\end{align*}
	following which 
	\begin{align*}
		& \sum_{\jconstant = 1}^{\lconstant} \xhln{\iconstant}(\jconstant) \ge \sum_{\jconstant = 1}^{\lconstant} |\xln{\iconstant}(\jconstant) - \zsln{\jconstant}|,~ \iconstant = \{1,\ldots,\ellsub{1}\}\\
		\Rightarrow &-\beta \sum_{\jconstant = 1}^{\lconstant} \xhln{\iconstant}(\jconstant) \le -\beta \sum_{\jconstant = 1}^{\lconstant} |\xln{\iconstant}(\jconstant) - \zsln{\jconstant}|,~ \iconstant = \{1,\ldots,\ellsub{1}\}.
	\end{align*}
	Subsequently,
	\begin{align*}
		\bt(\zsln{1}, \ldots, \zsln{\lconstant}) &= \sum_{\iconstant = 1}^{\ellsub{1}} \saln{\iconstant}  \exp\left(-\beta \sum_{\jconstant = 1}^{\lconstant} |\xln{\iconstant}(\jconstant) - \zsln{\jconstant}|\right) - \sum_{\iconstant = 1}^{\ellsub{2}} \bbln{\iconstant} \exp\left(-\beta  \sum_{\jconstant = 1}^{\lconstant}|\yln{\iconstant}(\jconstant) - \zsln{\jconstant}|\right) \\
		& \ge \sum_{\iconstant = 1}^{\ellsub{1}} \saln{\iconstant} \exp\left(-\beta \sum_{\jconstant = 1}^{\lconstant} \xhln{\iconstant}(\jconstant) \right) - \sum_{\iconstant = 1}^{\ellsub{2}} \bbln{\iconstant} \exp\left(-\beta  \sum_{\jconstant = 1}^{\lconstant}|\yln{\iconstant}(\jconstant) - \zsln{\jconstant}|\right) \\
		& \ge \sum_{\iconstant = 1}^{\ellsub{1}} \saln{\iconstant} \exp\left(-\beta \sum_{\jconstant = 1}^{\lconstant} \xhln{\iconstant}(\jconstant) \right) - \sum_{\iconstant = 1}^{\ellsub{2}} \bbln{\iconstant},
	\end{align*}
	which proves inequality \eqref{supp:lemma:5:eqn:1}.
\end{proof}

\subsection{Proof of Theorem \ref{theorem:1} when $\dd = 1$} \label{Proof of Theorem 1}

We can now prove our main result Theorem \ref{theorem:1}. To do so, we divide the proof of Theorem \ref{theorem:1} into univariate case ($\dd = 1$) and multivariate case ($\dd > 1$), which are provided in Theorem \ref{supp:theorem:1} and Theorem \ref{supp:theorem:2}, respectively.

\begin{theorem} \label{supp:theorem:1}
	Suppose $\bx = \{\xln{1}, \ldots, \xln{\nln{1}}\}$ and $\by = \{\yln{1}, \ldots, \yln{\nln{2}}\}$ are univariate real values. Assume $\xln{1}, \ldots, \xln{\mln{1}}$, $\yln{1}, \ldots, \yln{\mln{2}}$ are unobserved. Let $\kk$ denote the Laplacian kernel and define
	\begin{align*}
		\btln{1}(\z) &= \constant{1} \sum_{\jconstant = \mln{1} + 1}^{\nln{1}} \kernel{\z}{\xln{\jconstant}} - 	\constant{3} \sum_{\jconstant = \mln{2} + 1}^{\nln{2}} \kernel{\z}{\yln{\jconstant}},\\
		\btln{2}(\z) &= \constant{2} \sum_{\jconstant = \mln{2} + 1}^{\nln{2}} \kernel{\z}{\xln{\jconstant}} - 	\constant{3} \sum_{\jconstant = \mln{1} + 1}^{\nln{1}} \kernel{\z}{\yln{\jconstant}},
	\end{align*}
	where $\constant{1} = \frac{2}{\nln{1}(\nln{1}-1)}, \constant{2} = \frac{2}{\nln{2}(\nln{2}-1)}$ and $\constant{3} = \frac{2}{\nln{1}\nln{2}}$. Further, let
	\begin{align*}
		\bsln{1}:= \{0, \btln{1}(\xln{\mln{1}+1}),\ldots,  \btln{1}(\xln{\nln{1}}), \btln{1}(\yln{\mln{2}+1}),\ldots,  \btln{1}(\yln{\nln{2}})\},\\
		\bsln{2}:= \{0, \btln{2}(\xln{\mln{2}+1}),\ldots,  \btln{2}(\xln{\nln{2}}), \btln{2}(\yln{\mln{1}+1}),\ldots,  \btln{2}(\yln{\nln{1}})\}.
	\end{align*}	
	Then, the $\mmdu{\bx}{\by}$ using Laplacian kernel $\kk$ is bounded as follows
	\begin{align*}
		&\frac{\mln{1}(\mln{1} - 1)}{\nln{1}(\nln{1}-1)} + \frac{\mln{2}(\mln{2} - 1)}{\nln{2}(\nln{2}-1)} + \mln{1} \max \bsln{1} + \mln{2} \max \bsln{2} + \termtwo  >	\mmdu{\bx}{\by},\\
		&\mmdu{\bx}{\by} > \termtwo + \mln{1} \min \bsln{1} + \mln{2} \min \bsln{2} - \frac{2}{\nln{1}\nln{2}} \mln{1}\mln{2},
	\end{align*}
	where $\termtwo$ is defined in lemma \ref{lemma:1}.
\end{theorem}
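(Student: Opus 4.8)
The plan is to assemble the two-sided bound directly from the decomposition $\mmdu{\bx}{\by} = \termone + \termtwo + \termthree + \termfour$ established in Lemma~\ref{lemma:1}, bounding the four terms separately. The term $\termtwo$ involves only observed samples and is therefore a fixed, computable quantity that passes through both inequalities unchanged. The term $\termone$ involves only unobserved samples, and since the Laplacian kernel satisfies $0 < \kernel{\x}{\y} \le 1$, the strict bounds on $\termone$ recorded in Equation~\eqref{eqn:bounding termone} can simply be quoted; these already supply the outermost additive constants $\tfrac{\mln{1}(\mln{1}-1)}{\nln{1}(\nln{1}-1)} + \tfrac{\mln{2}(\mln{2}-1)}{\nln{2}(\nln{2}-1)}$ on the upper side and $-\tfrac{2}{\nln{1}\nln{2}}\mln{1}\mln{2}$ on the lower side.

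The substance of the argument lies in bounding the mixed terms $\termthree$ and $\termfour$. Following Equation~\eqref{rewrite:termthree:2} I would write $\termthree = \sum_{\iconstant=1}^{\mln{1}} \btln{1}(\xln{\iconstant})$ and, by the symmetric manipulation applied to $\termfour$, $\termfour = \sum_{\iconstant=1}^{\mln{2}} \btln{2}(\yln{\iconstant})$. The key observation is that $\btln{1}$ is exactly a function of the form $\bt$ in Lemma~\ref{lemma:2}, obtained by taking every coefficient $\saln{\iconstant}$ equal to $\constant{1}$ and every coefficient $\bbln{\iconstant}$ equal to $\constant{3}$, with the observed $x$- and $y$-values serving as the fixed points; the same identification applies to $\btln{2}$ with coefficients $\constant{2}$ and $\constant{3}$. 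Because each unobserved $\xln{\iconstant}$ (for $\iconstant \le \mln{1}$) is merely some unknown real number, Lemma~\ref{lemma:2} applies to each individual evaluation $\btln{1}(\xln{\iconstant})$ by taking $\zln{0} = \xln{\iconstant}$, giving $\min \bsln{1} \le \btln{1}(\xln{\iconstant}) \le \max \bsln{1}$, where $\bsln{1}$ collects $0$ together with the values of $\btln{1}$ at the observed points; analogously $\min \bsln{2} \le \btln{2}(\yln{\iconstant}) \le \max \bsln{2}$. Crucially, these bounds hold uniformly over every admissible value of the missing data, which is precisely what makes the final bound free of assumptions on the missingness mechanism.

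With these pointwise bounds in hand, I would sum over the $\mln{1}$ unobserved $x$-values to obtain $\mln{1}\min\bsln{1} \le \termthree \le \mln{1}\max\bsln{1}$, and likewise $\mln{2}\min\bsln{2} \le \termfour \le \mln{2}\max\bsln{2}$. Adding the upper bounds for $\termone$, $\termthree$ and $\termfour$ to the exact value of $\termtwo$ yields the claimed upper bound, and adding the corresponding lower bounds yields the claimed lower bound; the overall inequalities remain strict because the bounds on $\termone$ are strict while those on $\termthree$ and $\termfour$ are non-strict.

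The main obstacle is not any single calculation but the conceptual step of recognising $\btln{1}$ and $\btln{2}$ as concrete instances of the abstract function $\bt$ of Lemma~\ref{lemma:2}, so that the lemma may be invoked with each missing value playing the role of the free argument $\zln{0}$. Once the coefficients and the observed-point sets are matched correctly for both $\btln{1}$ and $\btln{2}$ — and one checks that these observed-point sets are exactly the ones appearing in the definitions of $\bsln{1}$ and $\bsln{2}$ — the remainder is a routine assembly of the three groups of inequalities.
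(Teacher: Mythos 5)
Your proposal is correct and follows essentially the same route as the paper's own proof: decompose $\mmdu{\bx}{\by} = \termone + \termtwo + \termthree + \termfour$ via Lemma~\ref{lemma:1}, quote Equation~\eqref{eqn:bounding termone} for the strict bounds on $\termone$, and apply Lemma~\ref{lemma:2} pointwise to each evaluation $\btln{1}(\xln{\iconstant})$ and $\btln{2}(\yln{\iconstant})$ before summing, exactly as the paper does. If anything, you spell out more explicitly than the paper the identification of $\btln{1}$ and $\btln{2}$ as instances of $\bt$ with coefficients $(\constant{1},\constant{3})$ and $(\constant{2},\constant{3})$ respectively, which is the correct matching (the paper's surrounding discussion contains a typo stating the $\bbln{\iconstant}$ equal $\constant{2}$).
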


\begin{proof}
	To start, in Lemma \ref{lemma:1}, it is shown that
	\begin{align*}
		& \mmdu{\bx}{\by} = \termone + \termtwo + \termthree + \termfour,
	\end{align*}
	where
	\begin{align*}
		\termone &= \constant{1} \sum_{\iconstant = 1}^{\mln{1}} \sum_{\jconstant = \iconstant + 1}^{\mln{1}} \kernel{\xln{\iconstant}}{\xln{\jconstant}} + \constant{2} \sum_{\iconstant = 1}^{\mln{2}} \sum_{\jconstant = \iconstant +1}^{\mln{2}} \kernel{\yln{\iconstant}}{\yln{\jconstant}} - \constant{3} \sum_{\iconstant = 1}^{\mln{1}} \sum_{\jconstant = 1}^{\mln{2}} \kernel{\xln{\iconstant}}{\yln{\jconstant}},\\
		\termtwo &= \constant{1} \sum_{\iconstant = \mln{1} + 1}^{\nln{1}-1} \sum_{\jconstant = \iconstant + 1}^{\nln{1}} \kernel{\xln{\iconstant}}{\xln{\jconstant}} + \constant{2}\sum_{\iconstant = \mln{2}+1}^{\nln{2}-1} \sum_{\jconstant = \iconstant +1}^{\nln{2}} \kernel{\yln{\iconstant}}{\yln{\jconstant}} - 	\constant{3}	\sum_{\iconstant = \mln{1} + 1}^{\nln{1}} \sum_{\jconstant = \mln{2} + 1}^{\nln{2}} \kernel{\xln{\iconstant}}{\yln{\jconstant}},\\
		\termthree & =  \constant{1}\sum_{\iconstant = 1}^{\mln{1}} \sum_{\jconstant = \mln{1} + 1}^{\nln{1}} \kernel{\xln{\iconstant}}{\xln{\jconstant}} - 	\constant{3} \sum_{\iconstant = 1}^{\mln{1}} \sum_{\jconstant = \mln{2} + 1}^{\nln{2}} \kernel{\xln{\iconstant}}{\yln{\jconstant}}, \\
		\termfour & = \constant{2} \sum_{\iconstant = 1}^{\mln{2}} \sum_{\jconstant = \mln{2} +1}^{\nln{2}} \kernel{\yln{\iconstant}}{\yln{\jconstant}} - 	\constant{3} \sum_{\iconstant = \mln{1} + 1}^{\nln{1}} \sum_{\jconstant = 1}^{\mln{2}} \kernel{\xln{\iconstant}}{\yln{\jconstant}},
	\end{align*}
	and $\constant{1} = \frac{2}{\nln{1}(\nln{1}-1)}, \constant{2} = \frac{2}{\nln{2}(\nln{2}-1)}$ and $\constant{3} = \frac{2}{\nln{1}\nln{2}}$.
	
	Following Equation \eqref{eqn:bounding termone}, it is provided
	\begin{align*}
		\frac{\mln{1}(\mln{1} - 1)}{\nln{1}(\nln{1}-1)} + \frac{\mln{2}(\mln{2} - 1)}{\nln{2}(\nln{2}-1)} > \termone > - \frac{2}{\nln{1}\nln{2}} \mln{1}\mln{2}.
	\end{align*}
	Further, using Lemma \ref{lemma:2}, it follows
	\begin{align*}
		\mln{1} \max \bsln{1} \ge \termthree \ge \mln{1} \min \bsln{1},
	\end{align*}
	and
	\begin{align*}
		\mln{2} \max \bsln{2} \ge \termfour \ge \mln{2} \min \bsln{2},
	\end{align*}
	which concludes our proof.	
\end{proof}

In order to compute the bounds of $\mmdu{\bx}{\by}$ for univariate samples with missing data using Theorem \ref{supp:theorem:1}, we need to compute $ \min \bsln{1}, \max \bsln{1}, \min \bsln{2}$ and $\max \bsln{2}$. Since $|\bsln{1}| = \nln{1} - \mln{1}$ and $|\bsln{2}| = \nln{2} - \mln{2}$. The computation complexity is of order $O(\nln{1} + \nln{2})$.

\subsection{Proof of Theorem \ref{theorem:1} when $\dd > 1$} \label{Proof of Theorem 2}

In Theorem \ref{supp:theorem:1}, the result of Theorem \ref{theorem:1} when $\dd = 1$ is provided. This section proves the result of  Theorem \ref{theorem:1} when $\dd > 1$. The final conclusion of this section is provided in Theorem \ref{supp:theorem:2}.  To obtain this result, we first introduce the following definition of the incomplete Laplacian kernel, which can be 
computed between two incomplete samples. 
\begin{definition} \label{def: incomplete kernel}
	For any $\x, \y \in \mathbb{R}^d$, let $\buln{\x, \y} \subset \{1,\ldots,\dd\}$ 
	be the index of dimensions for which either $\x$ or $\y$ have components that
	are not observed. Let $\kk$ denote the 
	Laplacian kernel with parameter $\beta$. Then, the incomplete Laplacian 
	kernel is defined as
	\begin{align*}
			\kernels{\x}{\y} = \exp\left(-\beta \sum_{\iconstant \in \{1,\ldots,\dd\} \setminus\left(\buln{\x} \cup \buln{\y}\right) } |\x(\iconstant) - \y(\iconstant)| \right).
			%		\kernels{\x}{\y} = \exp\left(-\beta \sum_{\iconstant \in  } |\x(\iconstant) - \y(\iconstant)| \right).
		\end{align*}
\end{definition}
Subsequently, we have the following result.
\setcounter{lemma}{4}
\begin{lemma} \label{supp:lemma:3}
	Suppose $\x$ and $\y$ are two samples of $\mathbb{R}^d$ real values and assume not all dimensions of values of $\x$ and $\y$ are observed. Denote $\tonumber{\dd} = \{1,\ldots,\dd\}$. Let $\buln{\x}$ be a set includes all unobserved dimensions in $\x$ and $\buln{\y}$  be a set includes all unobserved dimensions in $\y$. Let $\kk^*$ denote the incomplete Laplacian kernel. Subsequently, it follows
	\begin{align*}
			\kernels{\x}{\y} \ge \kernel{\x}{\y} \ge 0.
		\end{align*} 
\end{lemma}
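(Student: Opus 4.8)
The plan is to compare the two kernels directly through their defining exponents, exploiting that the incomplete kernel sums the $\ell_1$ distance over a strict subset of the coordinates used by the complete kernel. First I would write both quantities as exponentials of a negatively scaled sum of absolute differences: the complete Laplacian kernel is $\kernel{\x}{\y} = \exp\left(-\beta \sum_{\iconstant = 1}^{\dd} |\x(\iconstant) - \y(\iconstant)|\right)$, where the missing components of $\x$ and $\y$ are understood to take some fixed real values, whereas the incomplete kernel $\kernels{\x}{\y}$ restricts the identical sum to the index set $\{1, \ldots, \dd\} \setminus (\buln{\x} \cup \buln{\y})$, that is, to the coordinates observed in both vectors.

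The key observation is that this index set is a subset of $\{1, \ldots, \dd\}$ and that every summand $|\x(\iconstant) - \y(\iconstant)|$ is non-negative. Hence the restricted sum is bounded above by the full sum,
\[
\sum_{\iconstant \in \{1, \ldots, \dd\} \setminus (\buln{\x} \cup \buln{\y})} |\x(\iconstant) - \y(\iconstant)| \le \sum_{\iconstant = 1}^{\dd} |\x(\iconstant) - \y(\iconstant)|,
\]
with equality precisely when the omitted coordinates contribute zero distance. Multiplying through by $-\beta < 0$ reverses the inequality, so the exponent appearing in $\kernels{\x}{\y}$ is at least the exponent appearing in $\kernel{\x}{\y}$.

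Applying the strictly increasing map $\exp(\cdot)$ to both sides preserves the ordering, which yields $\kernels{\x}{\y} \ge \kernel{\x}{\y}$. The lower bound $\kernel{\x}{\y} \ge 0$ is then immediate, since the exponential function is everywhere positive (the inequality is in fact strict, but the stated $\ge 0$ suffices).

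I do not anticipate a genuine obstacle here: the result is a direct consequence of the monotonicity of $\exp(\cdot)$ together with the fact that the incomplete kernel simply discards non-negative terms from the exponent. The only point meriting care is the interpretation of $\kernel{\x}{\y}$ when $\x$ and $\y$ are not fully observed; the displayed inequality holds uniformly over every imputation of the missing components, so the bound $\kernels{\x}{\y} \ge \kernel{\x}{\y}$ is valid regardless of the unknown values assigned to the unobserved coordinates. This uniformity is exactly what makes the incomplete kernel useful as a building block for bounding $\mmdu{\bx}{\by}$ in the multivariate setting.
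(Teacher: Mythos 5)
Your proposal is correct and follows essentially the same route as the paper's proof: bound the restricted coordinate sum by the full sum, multiply by $-\beta$ to reverse the inequality, and apply the monotone map $\exp(\cdot)$, with positivity of the exponential giving $\kernel{\x}{\y} \ge 0$. Your added remark that the bound holds uniformly over all imputations of the missing coordinates is a useful observation that the paper leaves implicit, but it does not change the argument.
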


\begin{proof}
	According to the definition of the Laplacian kernel, we have
	\begin{align*}
			\kernel{\x}{\y} = \exp\left(-\beta \sum_{\lconstant=1}^{\dd}|\x(\lconstant) - \y(\lconstant)|\right) \ge 0.
		\end{align*}
	Also, notice that
	\begin{align*}
			\sum_{\lconstant \in \tonumber{\dd} \setminus\left(\buln{\x} \cup \buln{\y}\right) } |\x(\lconstant) - \y(\lconstant)| \le \sum_{\lconstant=1}^{\dd}|\x(\lconstant) - \y(\lconstant)|.
		\end{align*}
	Then, since $\beta > 0$,
	\begin{align*}
			-\beta \sum_{\iconstant \in \tonumber{\dd} \setminus\left(\buln{\x} \cup \buln{\y}\right) } |\x(\iconstant) - \y(\iconstant)| \ge - \beta \sum_{\lconstant=1}^{\dd}|\x(\lconstant) - \y(\lconstant)|.
		\end{align*}
	According to the definition of incomplete Laplacian kernel, we therefore have
	\begin{align*}
			\kernels{\x}{\y} &= \exp\left(-\beta \sum_{\lconstant \in \tonumber{\dd} \setminus\left(\buln{\x} \cup \buln{\y}\right) } |\x(\lconstant) - \y(\lconstant)| \right)\\
			& \ge \exp\left(-\beta \sum_{\lconstant=1}^{\dd}|\x(\lconstant) - \y(\lconstant)|\right) \\
			& = \kernel{\x}{\y},
		\end{align*}
	which finishes our proof.
\end{proof}

Recall that in Lemma \ref{lemma:1}, it is shown that $\mmdu{\bx}{\by}$ can be decomposed into four parts $\termone, \termtwo, \termthree$ and $\termfour$ with part $\termone$ includes incomplete samples only, $\termtwo$ includes complete samples only, and $\termthree$, $\termfour$ mixed with both complete and incomplete samples.

The Lemma \ref{supp:lemma:3} allows us to bound term $\termone$ by bounding any two incompletely observed samples separately. Let $\buln{\xln{\iconstant}}$ be a set including all unobserved dimensions in $\xln{\iconstant}$, where $1\le \iconstant \le \mln{1}$. Similarly, let $\buln{\yln{\iconstant}}$ be a set including all unobserved dimensions in $\yln{\iconstant}$, where $1\le \iconstant \le \mln{2}$. Subsequently, according to Lemma \ref{supp:lemma:3}, it can be seen that $\sum_{\iconstant = 1}^{\mln{1}} \sum_{\jconstant = \iconstant + 1}^{\mln{1}} \kernels{\xln{\iconstant}}{\xln{\jconstant}} \ge \sum_{\iconstant = 1}^{\mln{1}} \sum_{\jconstant = \iconstant + 1}^{\mln{1}} \kernel{\xln{\iconstant}}{\xln{\jconstant}} > 0$, $\sum_{\iconstant = 1}^{\mln{2}} \sum_{\jconstant = \iconstant +1}^{\mln{2}} \kernels{\yln{\iconstant}}{\yln{\jconstant}} \ge \sum_{\iconstant = 1}^{\mln{2}} \sum_{\jconstant = \iconstant +1}^{\mln{2}} \kernel{\yln{\iconstant}}{\yln{\jconstant}} > 0$, and $\sum_{\iconstant = 1}^{\mln{1}} \sum_{\jconstant = 1}^{\mln{2}} \kernels{\xln{\iconstant}}{\yln{\jconstant}} \ge \sum_{\iconstant = 1}^{\mln{1}} \sum_{\jconstant = 1}^{\mln{2}} \kernel{\xln{\iconstant}}{\yln{\jconstant}} > 0$. Then, according to the definition of $\termone$ in Lemma \ref{lemma:1}, it follows
\begin{align} \label{eqn:boundstermone:multi}
\constant{1}{\sum_{\iconstant = 1}^{\mln{1}}\sum_{\jconstant = \iconstant + 1}^{\mln{1}} \kernels{\xln{\iconstant}}{\xln{\jconstant}}} + \constant{2}{\sum_{\iconstant = 1}^{\mln{2}}\sum_{\jconstant = \iconstant + 1}^{\mln{2}} \kernels{\yln{\iconstant}}{\yln{\jconstant}}}  > \termone > -\constant{3} \sum_{\iconstant = 1}^{\mln{1}}\sum_{\jconstant = 1}^{\mln{2}} \kernels{\xln{\iconstant}}{\yln{\jconstant}},
\end{align} 
where $\constant{1} = \frac{2}{\nln{1}(\nln{1}-1)}, \constant{2} = \frac{2}{\nln{2}(\nln{2}-1)}$ and $\constant{3} = \frac{2}{\nln{1}\nln{2}}$. We now bound the terms $\termthree$ and $\termfour$. Recall that the term $\termthree$ can be rewritten as \eqref{rewrite:termthree:1} in both univariate and multivariate cases. For any $\xln{\iconstant}$ where $1\le \iconstant \le \mln{1}$, let $\buln{\xln{\iconstant}}$ be a set including all unobserved dimensions in $\xln{\iconstant}$ and let us introduce the function
\begin{align} \label{def:T_1}
	\btln{1}(\xln{\iconstant}(\lconstant); \lconstant \in \buln{\xln{\iconstant}}) = \constant{1} \sum_{\jconstant = \mln{1} + 1}^{\nln{1}} \kernel{\xln{\iconstant}}{\xln{\jconstant}} - 	\constant{3} \sum_{\jconstant = \mln{2} + 1}^{\nln{2}} \kernel{\xln{\iconstant}}{\yln{\jconstant}},~\iconstant \in \{1,\ldots,\mln{1}\}
\end{align}
where $\constant{1} = \frac{2}{\nln{1}(\nln{1}-1)}$ and $\constant{3} = \frac{2}{\nln{1}\nln{2}}$. Then, $\termthree$ can be computed as 
\begin{align*}
	\termthree = \sum_{\iconstant = 1}^{\mln{1}} \btln{1}(\xln{\iconstant}(\lconstant); \lconstant \in \buln{\xln{\iconstant}}).
\end{align*}
Subsequently, by combining Lemma \ref{supp:lemma:4} and Lemma \ref{supp:lemma:5}, we have the results of bounds of $\mmdu{\bx}{\by}$ when $\dd > 1$.

\begin{theorem} \label{supp:theorem:2}
	Suppose $\bx = \{\xln{1}, \ldots, \xln{\nln{1}}\}$ and $\by = \{\yln{1}, \ldots, \yln{\nln{2}}\}$ are samples of $\mathbb{R}^{\dd}$ real values. Assume $\xln{1}, \ldots, \xln{\mln{1}}$, $\yln{1}, \ldots, \yln{\mln{2}}$ are samples that are not observed completely. Let $\kk$ denote the Laplacian kernel and $\kk^*$ denote the incomplete Laplacian kernel defined in Definition \ref{def: incomplete kernel}. Further, denote
	\begin{align*}
		\xhln{\iconstant}(\lconstant) &= \max \{ |\xln{\iconstant}(\lconstant) - \xln{\mln{1}+1}(\lconstant)|, \ldots, |\xln{\iconstant}(\lconstant) - \xln{\nln{1}}(\lconstant)|, |\xln{\iconstant}(\lconstant) - \yln{\mln{2}+1}(\lconstant)|, \ldots, |\xln{\iconstant}(\lconstant) - \yln{\nln{2}}(\lconstant)|\}
	\end{align*}
	for any $\iconstant \in \{\mln{1}+1,\ldots,\nln{1}\}, \lconstant \in \{1,\ldots,\dd\}$; denote
	\begin{align*}
		\yhln{\iconstant}(\lconstant) &= \max \{ |\yln{\iconstant}(\lconstant) - \xln{\mln{1}+1}(\lconstant)|, \ldots, |\yln{\iconstant}(\lconstant) - \xln{\nln{1}}(\lconstant)|, |\yln{\iconstant}(\lconstant) - \yln{\mln{2}+1}(\lconstant)|, \ldots, |\yln{\iconstant}(\lconstant) - \yln{\nln{2}}(\lconstant)|\}
	\end{align*}
	for any $\iconstant \in \{\mln{2}+1,\ldots,\nln{2}\}, \lconstant \in \{1,\ldots,\dd\}$. Let
	\begin{align*}
		C_1 &= \sum_{\iconstant = 1}^{\mln{1}} \max\left\{0,  \constant{1}\sum_{\jconstant = \mln{1}+1}^{\nln{1}} \kernels{\xln{\iconstant}}{\xln{\jconstant} } - \constant{3}\sum_{\jconstant = \mln{2}+1}^{\nln{2}} \kernels{\xln{\iconstant}}{\yln{\jconstant} } \exp\left(-\beta \sum_{\lconstant \in \buln{\xln{\iconstant}}} \yhln{\jconstant}(\lconstant) \right)  \right\},\\
		C_2 & = \sum_{\iconstant = 1}^{\mln{1}} \min\left\{0,  \constant{1}\sum_{\jconstant = \mln{1}+1}^{\nln{1}} \kernels{\xln{\iconstant}}{\xln{\jconstant} } \exp\left(-\beta \sum_{\lconstant \in \buln{\xln{\iconstant}}} \xhln{\jconstant}(\lconstant) \right) - \constant{3}\sum_{\jconstant = \mln{2}+1}^{\nln{2}} \kernels{\xln{\iconstant}}{\yln{\jconstant} }  \right\}, \\
		C_3 & = \sum_{\iconstant = 1}^{\mln{2}} \max\left\{0,  \constant{2}\sum_{\jconstant = \mln{2}+1}^{\nln{2}} \kernels{\yln{\iconstant}}{\yln{\jconstant} } - \constant{3}\sum_{\jconstant = \mln{1}+1}^{\nln{1}} \kernels{\yln{\iconstant}}{\xln{\jconstant} } \exp\left(-\beta \sum_{\lconstant \in \buln{\yln{\iconstant}}} \xhln{\jconstant}(\lconstant) \right)  \right\}, \\
		C_4 & = \sum_{\iconstant = 1}^{\mln{2}} \min\left\{0,  \constant{2}\sum_{\jconstant = \mln{2}+1}^{\nln{2}} \kernels{\yln{\iconstant}}{\yln{\jconstant} } \exp\left(-\beta \sum_{\lconstant \in \buln{\yln{\iconstant}}} \yhln{\jconstant}(\lconstant) \right) - \constant{3}\sum_{\jconstant = \mln{1}+1}^{\nln{1}} \kernels{\yln{\iconstant}}{\xln{\jconstant} }  \right\},
	\end{align*}
	where $\constant{1} = \frac{2}{\nln{1}(\nln{1}-1)}, \constant{2} = \frac{2}{\nln{2}(\nln{2}-1)}$ and $\constant{3} = \frac{2}{\nln{1}\nln{2}}$.	Subsequently, the $\mmdu{\bx}{\by}$ using Laplacian kernel $\kk$ is bounded as follows
	\begin{alignat*}{2}
		&\constant{1}{\sum_{\iconstant = 1}^{\mln{1}}\sum_{\jconstant = \iconstant + 1}^{\mln{1}} \kernels{\xln{\iconstant}}{\xln{\jconstant}}} + \constant{2}{\sum_{\iconstant = 1}^{\mln{2}}\sum_{\jconstant = \iconstant + 1}^{\mln{2}} \kernels{\yln{\iconstant}}{\yln{\jconstant}}} + \termtwo + C_1 + C_3 > \mmdu{\bx}{\by},\\
		&\mmdu{\bx}{\by} > - \constant{3} \sum_{\iconstant = 1}^{\mln{1}}\sum_{\jconstant = 1}^{\mln{2}} \kernels{\xln{\iconstant}}{\yln{\jconstant}} + \termtwo + C_2 + C_4,
	\end{alignat*}
	where $\termtwo$ is defined in Lemma \ref{lemma:1}.
\end{theorem}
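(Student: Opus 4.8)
The plan is to bound $\mmdu{\bx}{\by}$ term-by-term on the decomposition $\mmdu{\bx}{\by} = \termone + \termtwo + \termthree + \termfour$ supplied by Lemma \ref{lemma:1}, in which $\termone$ collects kernel evaluations between two incomplete samples, $\termtwo$ between two fully observed samples, and $\termthree$, $\termfour$ are the mixed terms (one incomplete argument, one observed). Since $\termtwo$ depends only on observed data it is computed exactly and passed through unchanged. For $\termone$ I would invoke Lemma \ref{supp:lemma:3}: since the incomplete Laplacian kernel $\kk^*$ dominates $\kk$ pointwise while both stay positive, for the upper bound I replace $\kk$ by $\kk^*$ in the two positive sums and discard the subtracted cross-term, while for the lower bound I discard the two positive sums and bound the subtracted cross-term below using $\kk \le \kk^*$; this yields exactly the outer inequalities of Equation \eqref{eqn:boundstermone:multi}, accounting for the two leading sums in the upper bound and the leading sum in the lower bound of the theorem.

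The heart of the argument is bounding $\termthree$ (and, by a symmetric relabeling, $\termfour$). Writing $\termthree = \sum_{\iconstant=1}^{\mln{1}} \btln{1}(\xln{\iconstant})$ as in Equation \eqref{def:T_1}, I would fix one incomplete sample $\xln{\iconstant}$ and factor each Laplacian kernel against a fully observed sample into its observed-dimension and missing-dimension parts: for observed $\xln{\jconstant}$ one has $\kernel{\xln{\iconstant}}{\xln{\jconstant}} = \kernels{\xln{\iconstant}}{\xln{\jconstant}} \exp(-\beta \sum_{\lconstant \in \buln{\xln{\iconstant}}} |\xln{\jconstant}(\lconstant) - \xln{\iconstant}(\lconstant)|)$, and similarly for $\kernel{\xln{\iconstant}}{\yln{\jconstant}}$. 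This recasts $\btln{1}(\xln{\iconstant})$, viewed as a function of the unknown components $\{\xln{\iconstant}(\lconstant) : \lconstant \in \buln{\xln{\iconstant}}\}$, exactly into the form treated by Lemma \ref{supp:lemma:4}, with positive coefficients $\saln{\jconstant} = \constant{1}\kernels{\xln{\iconstant}}{\xln{\jconstant}}$ and $\bbln{\jconstant} = \constant{3}\kernels{\xln{\iconstant}}{\yln{\jconstant}}$, and with the observed samples $\xln{\mln{1}+1},\ldots,\xln{\nln{1}}$, $\yln{\mln{2}+1},\ldots,\yln{\nln{2}}$ supplying the candidate imputation values. Applying Lemma \ref{supp:lemma:4} gives $\min\{0,\min\setchi\} \le \btln{1}(\xln{\iconstant}) \le \max\{0,\max\setchi\}$, and Lemma \ref{supp:lemma:5} then replaces $\min\setchi$ and $\max\setchi$ by closed-form expressions involving the maximal per-coordinate distances $\xhln{\jconstant}$, $\yhln{\jconstant}$. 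Summing over $\iconstant \in \{1,\ldots,\mln{1}\}$ produces precisely $C_2 \le \termthree \le C_1$.

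For $\termfour$ I would group by the incomplete $Y$-samples and use symmetry of the Laplacian kernel to write the analogue of $\btln{1}$ in which the observed $Y$-samples now play the positive ($\saln{}$) role (coefficient $\constant{2}$) and the observed $X$-samples the negative ($\bbln{}$) role (coefficient $\constant{3}$); the same two lemmas then give $C_4 \le \termfour \le C_3$, and adding the four bounds yields the stated inequalities. The main obstacle is bookkeeping rather than analysis: one must verify the kernel factorization carefully (checking that, because the observed sample has no missing coordinates, $\kk^*$ really is the observed-dimension factor) and track which observed samples furnish the candidate values in Lemma \ref{supp:lemma:5}, so that the $\yhln{\jconstant}$ appearing in $C_1$ (indexed over observed $Y$) and the $\xhln{\jconstant}$ in $C_2$ (indexed over observed $X$) line up with the max-distance quantities defined in the theorem. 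Once the relabeling for $\termfour$ is pinned down, no further estimates are required.
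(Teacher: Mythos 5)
Your proposal is correct and follows essentially the same route as the paper's proof: the Lemma~\ref{lemma:1} decomposition with $\termtwo$ passed through exactly, the bound on $\termone$ via $\kernels{\cdot}{\cdot} \ge \kernel{\cdot}{\cdot} > 0$ from Lemma~\ref{supp:lemma:3} giving Equation~\eqref{eqn:boundstermone:multi}, and the factorization $\kernel{\xln{\iconstant}}{\xln{\jconstant}} = \kernels{\xln{\iconstant}}{\xln{\jconstant}}\exp\left(-\beta \sum_{\lconstant \in \buln{\xln{\iconstant}}} |\xln{\iconstant}(\lconstant)-\xln{\jconstant}(\lconstant)|\right)$ to recast $\btln{1}$ into the form of Lemma~\ref{supp:lemma:4} with coefficients $\constant{1}\kernels{\xln{\iconstant}}{\xln{\jconstant}}$ and $\constant{3}\kernels{\xln{\iconstant}}{\yln{\jconstant}}$, followed by Lemma~\ref{supp:lemma:5} and summation over the incomplete samples to obtain $C_2 \le \termthree \le C_1$ and, by the symmetric relabeling, $C_4 \le \termfour \le C_3$. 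Your bookkeeping remark about aligning the indices of $\xhln{\jconstant}$ and $\yhln{\jconstant}$ with the observed samples is exactly the point requiring care (the paper's own intermediate displays contain an index slip there that the theorem statement resolves), and your handling of it is correct.
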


\begin{proof}
	In Lemma \ref{lemma:1}, it is proved that
	\begin{align*}
		\mmdu{\bx}{\by} = \termone + \termtwo + \termthree + \termfour,
	\end{align*}
	where
	\begin{align*}
		\termone &= \constant{1} \sum_{\iconstant = 1}^{\mln{1}} \sum_{\jconstant = \iconstant + 1}^{\mln{1}} \kernel{\xln{\iconstant}}{\xln{\jconstant}} + \constant{2} \sum_{\iconstant = 1}^{\mln{2}} \sum_{\jconstant = \iconstant +1}^{\mln{2}} \kernel{\yln{\iconstant}}{\yln{\jconstant}} - \constant{3} \sum_{\iconstant = 1}^{\mln{1}} \sum_{\jconstant = 1}^{\mln{2}} \kernel{\xln{\iconstant}}{\yln{\jconstant}},\\
		\termtwo &= \constant{1} \sum_{\iconstant = \mln{1} + 1}^{\nln{1}-1} \sum_{\jconstant = \iconstant + 1}^{\nln{1}} \kernel{\xln{\iconstant}}{\xln{\jconstant}} + \constant{2}\sum_{\iconstant = \mln{2}+1}^{\nln{2}-1} \sum_{\jconstant = \iconstant +1}^{\nln{2}} \kernel{\yln{\iconstant}}{\yln{\jconstant}} - 	\constant{3}	\sum_{\iconstant = \mln{1} + 1}^{\nln{1}} \sum_{\jconstant = \mln{2} + 1}^{\nln{2}} \kernel{\xln{\iconstant}}{\yln{\jconstant}},\\
		\termthree & =  \constant{1}\sum_{\iconstant = 1}^{\mln{1}} \sum_{\jconstant = \mln{1} + 1}^{\nln{1}} \kernel{\xln{\iconstant}}{\xln{\jconstant}} - 	\constant{3} \sum_{\iconstant = 1}^{\mln{1}} \sum_{\jconstant = \mln{2} + 1}^{\nln{2}} \kernel{\xln{\iconstant}}{\yln{\jconstant}}, \\
		\termfour & = \constant{2} \sum_{\iconstant = 1}^{\mln{2}} \sum_{\jconstant = \mln{2} +1}^{\nln{2}} \kernel{\yln{\iconstant}}{\yln{\jconstant}} - 	\constant{3} \sum_{\iconstant = \mln{1} + 1}^{\nln{1}} \sum_{\jconstant = 1}^{\mln{2}} \kernel{\xln{\iconstant}}{\yln{\jconstant}},
	\end{align*}
	and $\constant{1} = \frac{2}{\nln{1}(\nln{1}-1)}, \constant{2} = \frac{2}{\nln{2}(\nln{2}-1)}$ and $\constant{3} = \frac{2}{\nln{1}\nln{2}}$.
	
	According to inequality \eqref{eqn:boundstermone:multi}, 
	\begin{align} \label{supp:theorem:2:eqn:1} 
		\constant{1}{\sum_{\iconstant = 1}^{\mln{1}}\sum_{\jconstant = \iconstant + 1}^{\mln{1}} \kernels{\xln{\iconstant}}{\xln{\jconstant}}} + \constant{2}{\sum_{\iconstant = 1}^{\mln{2}}\sum_{\jconstant = \iconstant + 1}^{\mln{2}} \kernels{\yln{\iconstant}}{\yln{\jconstant}}}  > \termone > -\constant{3} \sum_{\iconstant = 1}^{\mln{1}}\sum_{\jconstant = 1}^{\mln{2}} \kernels{\xln{\iconstant}}{\yln{\jconstant}}.
	\end{align} 
	
	Notice that
	\begin{align*}
		\termthree & =  \sum_{\iconstant = 1}^{\mln{1}} \left(\constant{1} \sum_{\jconstant = \mln{1} + 1}^{\nln{1}} \kernel{\xln{\iconstant}}{\xln{\jconstant}} - 	\constant{3} \sum_{\jconstant = \mln{2} + 1}^{\nln{2}} \kernel{\xln{\iconstant}}{\yln{\jconstant}} \right).
	\end{align*}
	Let us introduce the function
	\begin{align*}
		\btln{1}(\xln{\iconstant}(\lconstant); \lconstant \in \buln{\xln{\iconstant}}) = \constant{1} \sum_{\jconstant = \mln{1} + 1}^{\nln{1}} \kernel{\xln{\iconstant}}{\xln{\jconstant}} - 	\constant{3} \sum_{\jconstant = \mln{2} + 1}^{\nln{2}} \kernel{\xln{\iconstant}}{\yln{\jconstant}},~\iconstant \in \{1,\ldots,\mln{1}\}.
	\end{align*}
	Subsequently, $\termthree$ can be computed as 
	\begin{align} \label{supp:theorem:2:eqn:2}
		\termthree = \sum_{\iconstant = 1}^{\mln{1}} \btln{1}(\xln{\iconstant}(\lconstant); \lconstant \in \buln{\xln{\iconstant}}).
	\end{align}
	Denote
	\begin{align*}
		\setchiln{\xln{\iconstant}} = \{\btln{1}(\xln{\iconstant}(\lconstant); \lconstant \in \buln{\xln{\iconstant}})| \xln{\lconstant} \in \{\xln{\mln{1}+1}(\lconstant), \ldots, \xln{\nln{1}}(\lconstant), \yln{\mln{2}+1}(\lconstant), \ldots, \yln{\nln{2}}(\lconstant) \}, \lconstant \in \buln{\xln{\iconstant}} \},
	\end{align*}	
	for any $\iconstant = 1,\ldots,\mln{1}$. Let $\tonumber{\dd} = \{1,\ldots,\dd\}$. Notice that
	\begin{align*}
		\btln{1}(\xln{\iconstant}(\lconstant); \lconstant \in \buln{\xln{\iconstant}}) &= \sum_{\jconstant = \mln{1} + 1}^{\nln{1}} \salnp{\iconstant,\jconstant} \exp\left(-\beta \sum_{\lconstant \in \buln{\xln{\iconstant}}} |\xln{\iconstant}(\lconstant) - \xln{\jconstant}(\lconstant)|\right) - \\
		&\sum_{\jconstant = \mln{2} + 1}^{\nln{2}} \bblnp{\iconstant,\jconstant} \exp\left(-\beta \sum_{\lconstant \in \buln{\xln{\iconstant}}} |\xln{\iconstant}(\lconstant) - \yln{\jconstant}(\lconstant)|\right),
	\end{align*}	
	where
	\begin{align*}
		\salnp{\iconstant,\jconstant} &= \constant{1} \exp\left(-\beta \sum_{\lconstant \in \tonumber{\dd} \setminus \buln{\xln{\iconstant}}} |\xln{\iconstant}(\lconstant) - \xln{\jconstant}(\lconstant)|\right),\\
		\bblnp{\iconstant,\jconstant} &=  \constant{3}\exp\left(-\beta \sum_{\lconstant \in \tonumber{\dd} \setminus\buln{\xln{\iconstant}}} |\xln{\iconstant}(\lconstant) - \yln{\jconstant}(\lconstant)|\right). 
	\end{align*}
	
	Then, according to Lemma \ref{lemma:4}, it follows
	\begin{align} \label{supp:theorem:2:eqn:3}
		\max\{0, \max \setchiln{\xln{\iconstant}}\} \ge  \btln{1}(\xln{\iconstant}(\lconstant); \lconstant \in \buln{\xln{\iconstant}}) \ge \min \{0, \min \setchiln{\xln{\iconstant}}\}, \iconstant = 1,\ldots,\mln{1}.
	\end{align}
	According to Lemma \ref{lemma:5}, 
	\begin{align*}
		&\min \setchiln{\xln{\iconstant}} \ge \sum_{\jconstant = \mln{1} + 1}^{\nln{1}} \salnp{\iconstant,\jconstant} \exp\left(-\beta \sum_{\lconstant \in \buln{\xln{\iconstant}}} \xhln{\iconstant}(\lconstant) \right) -
		\sum_{\jconstant = \mln{2} + 1}^{\nln{2}} \bblnp{\iconstant,\jconstant},\\
		&\max \setchiln{\xln{\iconstant}} \le \sum_{\jconstant = \mln{1}+1}^{\nln{1}} \salnp{\iconstant, \jconstant} - \sum_{\jconstant = \mln{2}+1}^{\nln{2}} \bblnp{\iconstant,\jconstant} \exp\left(-\beta \sum_{\lconstant \in \buln{\xln{\iconstant}}} \yhln{\jconstant}(\lconstant) \right).
	\end{align*}
	Notice that according to the definition of incomplete kernel in Definition \ref{def: incomplete kernel},
	\begin{align*}
		\salnp{\iconstant,\jconstant} &= \constant{1} \exp\left(-\beta \sum_{\lconstant \in \tonumber{\dd} \setminus \buln{\xln{\iconstant}}} |\xln{\iconstant}(\lconstant) - \xln{\jconstant}(\lconstant)|\right) = \constant{1}\kernels{\xln{\iconstant}}{\xln{\jconstant}},\\
		\bblnp{\iconstant,\jconstant} &=  \constant{3}\exp\left(-\beta \sum_{\lconstant \in \tonumber{\dd} \setminus\buln{\xln{\iconstant}}} |\xln{\iconstant}(\lconstant) - \yln{\jconstant}(\lconstant)|\right) = \constant{3} \kernels{\xln{\iconstant}}{\yln{\jconstant}}. 
	\end{align*}
	Hence, for any $\iconstant = 1,\ldots,\mln{1}$, 
	\begin{align*}
		&\min \setchiln{\xln{\iconstant}} \ge \sum_{\jconstant = \mln{1} + 1}^{\nln{1}} \constant{1}\kernels{\xln{\iconstant}}{\xln{\jconstant}} \exp\left(-\beta \sum_{\lconstant \in \buln{\xln{\iconstant}}} \xhln{\iconstant}(\lconstant) \right) -
		\sum_{\jconstant = \mln{2} + 1}^{\nln{2}} \constant{3} \kernels{\xln{\iconstant}}{\yln{\jconstant}},\\
		&\max \setchiln{\xln{\iconstant}} \le \sum_{\jconstant = \mln{1}+1}^{\nln{1}} \constant{1}\kernels{\xln{\iconstant}}{\xln{\jconstant}} - \sum_{\jconstant = \mln{2}+1}^{\nln{2}} \constant{3} \kernels{\xln{\iconstant}}{\yln{\jconstant}} \exp\left(-\beta \sum_{\lconstant \in \buln{\xln{\iconstant}}} \yhln{\jconstant}(\lconstant) \right).
	\end{align*}
	According to \eqref{supp:theorem:2:eqn:3}, we then have
	\begin{align*}
		&\btln{1}(\xln{\iconstant}(\lconstant); \lconstant \in \buln{\xln{\iconstant}}) \ge \min \{0, \min \setchiln{\xln{\iconstant}}\} \\
		& \ge \min \left\{0, \sum_{\jconstant = \mln{1} + 1}^{\nln{1}} \constant{1}\kernels{\xln{\iconstant}}{\xln{\jconstant}} \exp\left(-\beta \sum_{\lconstant \in \buln{\xln{\iconstant}}} \xhln{\iconstant}(\lconstant) \right) -
		\sum_{\jconstant = \mln{2} + 1}^{\nln{2}} \constant{3} \kernels{\xln{\iconstant}}{\yln{\jconstant}} \right\},
	\end{align*}	
	and
	\begin{align*}
		&\btln{1}(\xln{\iconstant}(\lconstant); \lconstant \in \buln{\xln{\iconstant}}) \le \max\{0, \max \setchiln{\xln{\iconstant}}\}\\
		&\le \max\left\{0, \sum_{\jconstant = \mln{1}+1}^{\nln{1}} \constant{1}\kernels{\xln{\iconstant}}{\xln{\jconstant}} - \sum_{\jconstant = \mln{2}+1}^{\nln{2}} \constant{3} \kernels{\xln{\iconstant}}{\yln{\jconstant}} \exp\left(-\beta \sum_{\lconstant \in \buln{\xln{\iconstant}}} \yhln{\jconstant}(\lconstant) \right)\right\}.
	\end{align*}
	Following \eqref{supp:theorem:2:eqn:2}, then
	\begin{align*}
		C_1 \ge \termthree \ge C_2. 
	\end{align*}
	Using the similar method, it can be prove that
	\begin{align*}
		C_3 \ge \termfour \ge C_4.
	\end{align*}
	Combining with \eqref{supp:theorem:2:eqn:3}, we conclude our result.
\end{proof}

For computing bounds of $\mmdu{\bx}{\by}$ using Theorem \ref{supp:theorem:2}, we need to compute $	\xhln{\iconstant}(\lconstant)$ for any $\iconstant \in \{\mln{1}+1,\ldots,\nln{1}\}, \lconstant \in \{1,\ldots,\dd\}$, and compute $\yhln{\iconstant}(\lconstant)$ for any  $\iconstant \in \{\mln{2}+1,\ldots,\nln{2}\}, \lconstant \in \{1,\ldots,\dd\}$. Notice that for any $\xhln{\iconstant}(\lconstant)$, we have
\begin{align*}
	&|\{ |\xln{\iconstant}(\lconstant) - \xln{\mln{1}+1}(\lconstant)|, \ldots, |\xln{\iconstant}(\lconstant) - \xln{\nln{1}}(\lconstant)|, |\xln{\iconstant}(\lconstant) - \yln{\mln{2}+1}(\lconstant)|, \ldots, |\xln{\iconstant}(\lconstant) - \yln{\nln{2}}(\lconstant)|\}| \\
	&= \nln{1} - \mln{1} + \nln{2} - \mln{2}.
\end{align*}
Hence, computing all $\xhln{\iconstant}(\lconstant)$ is of computational order at most $O(\dd\nln{1}(\nln{1} + \nln{2}))$. Similarly, computing all $\yhln{\iconstant}(\lconstant)$ is of computational complexity order at most $O(\dd\nln{2}(\nln{1} + \nln{2}))$. After computing $\xhln{\iconstant}(\lconstant)$ for any $\iconstant \in \{\mln{1}+1,\ldots,\nln{1}\}, \lconstant \in \{1,\ldots,\dd\}$, and $\yhln{\iconstant}(\lconstant)$ for any  $\iconstant \in \{\mln{2}+1,\ldots,\nln{2}\}, \lconstant \in \{1,\ldots,\dd\}$, $C_1, C_2, C_3$ and $C_4$ can be computed with computational complexity order at most $O((\nln{1}+\nln{2}))$. Overall, the computational complexity order of computing bounds of $\mmdu{\bx}{\by}$ using Theorem \ref{supp:theorem:2}, is $O((\nln{1}+\nln{2}))$. This concludes our proof for Theorem \ref{theorem:1}.T

\subsection{Proof of Theorem \ref{theorem:3} } \label{Proof of Theorem 3}

This section provides bounds of $p$-value of MMD test statistic based on the bounds of MMD derived in Section \ref{Bounding MMD with Missing Data}, and the permutation test.
\setcounter{theorem}{1}
\begin{theorem} \label{supp:theorem:3}
	Suppose $\bx = \{\xln{1}, \ldots, \xln{\nln{1}}\}$ and $\by = \{\yln{1}, \ldots, \yln{\nln{2}}\}$ are  samples of $\ddimension$ real value. Let $(\sigma^{(1)}, \ldots, \sigma^{(B)})$ be $B$ $\iid$ random samplings, each being random permutation of $\{1,\ldots,\nln{1} + \nln{2}\}$ and denoted as $\sigma^{(\iconstant)} = (\sigma^{(\iconstant)}(1),\ldots, \sigma^{(\iconstant)}(\nln{1}+\nln{2}))$, $\iconstant = 1,\ldots,B$. Subsequently, let $\zln{1} = \xln{1},\ldots, \zln{\nln{1}} = \xln{\nln{1}}, \zln{\nln{1}+1} = \yln{1},\ldots, \zln{\nln{1}+\nln{2}} = \yln{\nln{2}}$ and for any $\iconstant = 1,\ldots,B$, denote
	\begin{align*}
		\bx_{\sigma^{(\iconstant)}} = \{\zln{\sigma^{(\iconstant)}(1)},\ldots,\zln{\sigma^{(\iconstant)}(\nln{1})} \}, \by_{\sigma^{(\iconstant)}} = \{\zln{\sigma^{(\iconstant)}(\nln{1}+1)},\ldots,\zln{\sigma^{(\iconstant)}(\nln{1}+\nln{2})} \},
	\end{align*}
	and define $p$ according to equation \eqref{def:p value using permutations}. Suppose further
	\begin{align*}
		\mmdulower{\bx}{\by}\le \mmdu{\bx}{\by},~ \mmdu{\bx_{\sigma^{(\iconstant)}}}{\by_{\sigma^{(\iconstant)}}} \le \mmduupper{\bx_{\sigma^{(\iconstant)}}}{\by_{\sigma^{(\iconstant)}}},~\iconstant = 1,\ldots, B.
	\end{align*}
	Define
	\begin{align*}
		\upperp = \frac{1}{B + 1} \left(1 + \sum_{\iconstant = 1}^{B} \indicator{\mmduupper{\bx_{\sigma^{(\iconstant)}}}{\by_{\sigma^{(\iconstant)}}} \ge \mmdulower{\bx}{\by}} \right).
	\end{align*}
	Then, we must have $\upperp \ge \p$.
\end{theorem}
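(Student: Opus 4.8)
The plan is to prove $\upperp \ge \p$ by a term-by-term comparison of the two sums that define these quantities. Since both $\p$ (from Equation~\eqref{def:p value using permutations}) and $\upperp$ share the common prefactor $\tfrac{1}{B+1}$ and the additive constant $1$, it suffices to show that for each permutation index $\iconstant = 1, \ldots, B$,
\begin{align*}
    \indicator{\mmduupper{\bx_{\sigma^{(\iconstant)}}}{\by_{\sigma^{(\iconstant)}}} \ge \mmdulower{\bx}{\by}} \ge \indicator{\mmdu{\bx_{\sigma^{(\iconstant)}}}{\by_{\sigma^{(\iconstant)}}} \ge \mmdu{\bx}{\by}},
\end{align*}
and then sum over $\iconstant$. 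First I would fix an arbitrary $\iconstant$ and split into the two possible values of the right-hand indicator. If it equals $0$, the inequality holds trivially, because the left-hand indicator is always nonnegative.

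The only substantive case is when the right-hand indicator equals $1$, that is, when $\mmdu{\bx_{\sigma^{(\iconstant)}}}{\by_{\sigma^{(\iconstant)}}} \ge \mmdu{\bx}{\by}$. Here I would chain the three inequalities supplied by the hypotheses: the upper bound $\mmdu{\bx_{\sigma^{(\iconstant)}}}{\by_{\sigma^{(\iconstant)}}} \le \mmduupper{\bx_{\sigma^{(\iconstant)}}}{\by_{\sigma^{(\iconstant)}}}$ on the permuted statistic, the assumed ordering of the observed statistics, and the lower bound $\mmdulower{\bx}{\by} \le \mmdu{\bx}{\by}$. Together these give
\begin{align*}
    \mmduupper{\bx_{\sigma^{(\iconstant)}}}{\by_{\sigma^{(\iconstant)}}} \ge \mmdu{\bx_{\sigma^{(\iconstant)}}}{\by_{\sigma^{(\iconstant)}}} \ge \mmdu{\bx}{\by} \ge \mmdulower{\bx}{\by},
\end{align*}
so $\mmduupper{\bx_{\sigma^{(\iconstant)}}}{\by_{\sigma^{(\iconstant)}}} \ge \mmdulower{\bx}{\by}$, and hence the left-hand indicator also equals $1$. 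This establishes the desired term-by-term domination.

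Finally I would sum the established inequality over $\iconstant = 1, \ldots, B$, add $1$ to both sides, and multiply by $\tfrac{1}{B+1}$, which reproduces the defining expressions for $\p$ and $\upperp$ and yields $\upperp \ge \p$ directly. I do not expect a genuine obstacle in this argument: it rests entirely on the transitivity of $\ge$ and the monotonicity of the indicator function. The single point requiring care is to confirm that the \emph{directions} of the two bounding hypotheses are exactly the ones needed to close the chain --- a lower bound on the observed statistic $\mmdu{\bx}{\by}$ and an upper bound on each permuted statistic $\mmdu{\bx_{\sigma^{(\iconstant)}}}{\by_{\sigma^{(\iconstant)}}}$ --- so that the inequality $\mmduupper{\bx_{\sigma^{(\iconstant)}}}{\by_{\sigma^{(\iconstant)}}} \ge \mmdulower{\bx}{\by}$ is implied whenever $\mmdu{\bx_{\sigma^{(\iconstant)}}}{\by_{\sigma^{(\iconstant)}}} \ge \mmdu{\bx}{\by}$. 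These directions are precisely what the theorem assumes, and are what the bounds of Section~\ref{Bounding MMD with Missing Data} provide, so the chain closes and the conclusion follows.
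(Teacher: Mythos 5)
Your proposal is correct and follows essentially the same route as the paper's proof: a term-by-term comparison of indicators, with the case $\indicator{\mmdu{\bx_{\sigma^{(\iconstant)}}}{\by_{\sigma^{(\iconstant)}}} \ge \mmdu{\bx}{\by}} = 1$ handled by chaining $\mmduupper{\bx_{\sigma^{(\iconstant)}}}{\by_{\sigma^{(\iconstant)}}} \ge \mmdu{\bx_{\sigma^{(\iconstant)}}}{\by_{\sigma^{(\iconstant)}}} \ge \mmdu{\bx}{\by} \ge \mmdulower{\bx}{\by}$, followed by summing and normalizing. No gaps; your closing check on the directions of the bounding hypotheses matches the paper's reasoning exactly.
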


\begin{proof}
	Notice that for any $\iconstant \in \{1,\ldots,B\}$, we have
	\begin{align*}
		&\indicator{\mmdu{\bx_{\sigma^{(\iconstant)}}}{\by_{\sigma^{(\iconstant)}}} \ge \mmdu{\bx}{\by} } = 1\\
		\Rightarrow & \mmdu{\bx_{\sigma^{(\iconstant)}}}{\by_{\sigma^{(\iconstant)}}} \ge \mmdu{\bx}{\by}.
	\end{align*}
	Since 
	\begin{align*}
		\mmdu{\bx}{\by} \ge \mmdulower{\bx}{\by},~ \mmduupper{\bx_{\sigma^{(\iconstant)}}}{\by_{\sigma^{(\iconstant)}}} \ge \mmdu{\bx_{\sigma^{(\iconstant)}}}{\by_{\sigma^{(\iconstant)}}},~\iconstant = 1,\ldots, B,
	\end{align*}
	it then follows
	\begin{align*}
		&\mmduupper{\bx_{\sigma^{(\iconstant)}}}{\by_{\sigma^{(\iconstant)}}} \ge \mmdu{\bx_{\sigma^{(\iconstant)}}}{\by_{\sigma^{(\iconstant)}}} \ge \mmdulower{\bx}{\by} ,\\
		\Rightarrow & \indicator{\mmduupper{\bx_{\sigma^{(\iconstant)}}}{\by_{\sigma^{(\iconstant)}}} \ge \mmdulower{\bx}{\by}} = 1.
	\end{align*}
	for any $\iconstant \in \{1,\ldots,B\}$.
	
	If, for any $\iconstant \in \{1,\ldots,B\}$,
	\begin{align*}
		\indicator{\mmdu{\bx_{\sigma^{(\iconstant)}}}{\by_{\sigma^{(\iconstant)}}} \ge \mmdu{\bx}{\by} }  = 0,
	\end{align*}
	then we have
	\begin{align*}
		\indicator{\mmduupper{\bx_{\sigma^{(\iconstant)}}}{\by_{\sigma^{(\iconstant)}}} \ge \mmdulower{\bx}{\by}} = 0~\text{or}~1.
	\end{align*}
	Hence, for  any $\iconstant \in \{1,\ldots,B\}$, we have
	\begin{align*}
		\indicator{\mmduupper{\bx_{\sigma^{(\iconstant)}}}{\by_{\sigma^{(\iconstant)}}} \ge \mmdulower{\bx}{\by}} \ge \indicator{\mmdu{\bx_{\sigma^{(\iconstant)}}}{\by_{\sigma^{(\iconstant)}}} \ge \mmdu{\bx}{\by} } .
	\end{align*}
	According to the definitions of $\p$ and $\upperp$, we then must have $\upperp \ge \p$.
	
\end{proof}

\subsection{Proof of Theorem \ref{theorem:4}} \label{Proof of Theorem 4}

This section provides bounds of $p$-value of MMD test statistic based on the bounds of MMD derived in Section \ref{Bounding MMD with Missing Data}, and the normality approximation proposed in \cite{gao2023two}.

\begin{theorem} \label{supp:theorem:4}
	Following the notation and definitions in Proposition 10 in \cite{gao2023two}, let $\kk$ denote the Laplacian kernel. Assume $X_1, \ldots, X_{\mln{1}}$ and $Y_1, \ldots, Y_{\mln{2}}$ are not observed. For $1 \le s,t \le N$ let us further denote 
	\begin{align*}
		\overline{a}_{s,t}^{k} = \left\{ \begin{array}{ll}
			\kernels{X_s}{X_t} & 1\le s,t \le \n,  s \neq t\\
			\kernels{X_s}{Y_{t-n}} & 1\le s \le n < t \le N, s \neq t \\
			\kernels{X_t}{Y_{s-n}} & 1\le t \le n < s \le N, \\
			\kernels{Y_{s-n}}{Y_{t-n}} & n+1<s,t\le N,
		\end{array}\right.,\underline{a}_{s,t}^{k} = \left\{ \begin{array}{ll}
			0 & 1\le s \le \mln{1}, 1 \le t \le N, \\ 0 & 1 \le s \le N, \n + 1 \le t \le \n+ \mln{2}, \\1 & s = t \\
			a_{s,t}^{k} & \mbox{Otherwise}.
		\end{array}\right.
	\end{align*}
	Let $\overline{A}_{s,t}^{k*} = \overline{a}_{s,t}^{k} - \underline{a}_{\cdot,t}^{k} - \underline{a}_{s,\cdot}^{k} + \overline{a}_{\cdot,\cdot}^{k}$, where
	\begin{align*}
		\underline{a}_{\cdot,t}^{k} = \frac{1}{N-2} \sum_{\iconstant=1}^{N} \underline{a}_{\iconstant,t}^{k},~		\underline{a}_{s,\cdot}^{k} = \frac{1}{N-2} \sum_{\jconstant=1}^{N} \underline{a}_{s,\jconstant}^{k},~\overline{a}_{\cdot,\cdot}^{k} = \frac{1}{(N-1)(N-2)} \sum_{\iconstant,\jconstant=1}^{N} \overline{a}_{\iconstant,\jconstant}^{k}.
	\end{align*}
	Define
	\begin{align*}
		\overline{\mathcal{V}}_{n,m}^{k*} = \frac{1}{N(N-3)}\sum_{s\neq t} \left(\overline{A}_{s,t}^{k*}\right)^2 - \frac{1}{(N-1)(N-3)}.
	\end{align*}
	Then, we must have $ \overline{\mathcal{V}}_{n,m}^{k*}  \ge {\mathcal{V}}_{n,m}^{k*},$
	Further, suppose 
	\begin{align*}
		0 \le \mmdulower{\bx}{\by}\le \mmdu{\bx}{\by}, 
	\end{align*}
	we then have
	\begin{align*}
		\frac{\mmdulower{\bx}{\by}}{\sqrt{\constant{\n,\m}\overline{\mathcal{V}}_{n,m}^{k*}}} \le 	\frac{\mmdu{\bx}{\by}}{\sqrt{\constant{\n,\m}{\mathcal{V}}_{n,m}^{k*}}}.
	\end{align*}
\end{theorem}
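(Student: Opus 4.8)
The plan is to prove the theorem's two claims in order: first the variance bound $\overline{\mathcal{V}}_{n,m}^{k*} \ge \mathcal{V}_{n,m}^{k*}$, and then, combining it with the hypothesis $0 \le \mmdulower{\bx}{\by} \le \mmdu{\bx}{\by}$, the inequality between the two studentized statistics. The engine throughout is that every entry of the unobservable complete-data Gram matrix is trapped in a computable box, so that the (uncomputable) estimator $\mathcal{V}_{n,m}^{k*}$ can be dominated by a computable quantity built from $\overline{a}$ and $\underline{a}$.

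First I would show that for each pair $(s,t)$ the true kernel value obeys $\underline{a}_{s,t}^{k} \le a_{s,t}^{k} \le \overline{a}_{s,t}^{k}$. The upper inequality is exactly Lemma~\ref{supp:lemma:3}: the incomplete Laplacian kernel satisfies $\kernels{\cdot}{\cdot} \ge \kernel{\cdot}{\cdot}$ and agrees with $k$ on fully observed pairs. The lower inequality holds because $\kernel{\cdot}{\cdot} \ge 0$, so zeroing any entry in a row or column belonging to a missing sample can only decrease it, while the diagonal equals $1$ exactly. I would then push these box constraints through the double-centering $A_{s,t}^{k*} = a_{s,t}^{k} - a_{\cdot,t}^{k} - a_{s,\cdot}^{k} + a_{\cdot,\cdot}^{k}$, which is linear in the entries: bounding the two positively-signed terms ($a_{s,t}^{k}$ and the grand mean $a_{\cdot,\cdot}^{k}$) above by their $\overline{a}$ values and the two subtracted marginal means below by their $\underline{a}$ values yields $A_{s,t}^{k*} \le \overline{A}_{s,t}^{k*}$ for every admissible completion $a$.

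The hard part will be upgrading this one-sided bound to the sum-of-squares inequality $\sum_{s \neq t}(A_{s,t}^{k*})^2 \le \sum_{s \neq t}(\overline{A}_{s,t}^{k*})^2$, to which $\overline{\mathcal{V}}_{n,m}^{k*} \ge \mathcal{V}_{n,m}^{k*}$ reduces once the common additive constant $\tfrac{1}{(N-1)(N-3)}$ cancels. Because squaring is not monotone, $A_{s,t}^{k*} \le \overline{A}_{s,t}^{k*}$ alone is insufficient; I would need the two-sided control $|A_{s,t}^{k*}| \le \overline{A}_{s,t}^{k*}$, that is, also $-A_{s,t}^{k*} \le \overline{A}_{s,t}^{k*}$. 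Obtaining this cleanly is the delicate step, and I expect it to force careful use of $0 \le \kernel{\cdot}{\cdot} \le 1$, of the nonnegativity $a_{s,t}^{k} \ge 0$, and of the precise $N-2$ and $(N-1)(N-2)$ normalisations appearing in the marginal means of Proposition 10 in \cite{gao2023two}.

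With $0 < \mathcal{V}_{n,m}^{k*} \le \overline{\mathcal{V}}_{n,m}^{k*}$ established and $\constant{\n,\m} > 0$, the studentized inequality is then routine. Monotonicity of the square root gives $\sqrt{\constant{\n,\m}\,\mathcal{V}_{n,m}^{k*}} \le \sqrt{\constant{\n,\m}\,\overline{\mathcal{V}}_{n,m}^{k*}}$, so enlarging the denominator while shrinking the nonnegative numerator yields $\frac{\mmdulower{\bx}{\by}}{\sqrt{\constant{\n,\m}\overline{\mathcal{V}}_{n,m}^{k*}}} \le \frac{\mmdulower{\bx}{\by}}{\sqrt{\constant{\n,\m}\mathcal{V}_{n,m}^{k*}}} \le \frac{\mmdu{\bx}{\by}}{\sqrt{\constant{\n,\m}\mathcal{V}_{n,m}^{k*}}}$. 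Since the normal-approximation $p$-value is a decreasing function of the studentized statistic, this lower bound on the statistic is exactly an upper bound on the $p$-value, as claimed; the assumption that not all data are fully missing is what guarantees $\mathcal{V}_{n,m}^{k*} > 0$, so that every ratio above is well defined.
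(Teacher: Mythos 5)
Your plan follows the same route as the paper's own proof: Lemma~\ref{supp:lemma:3} plus nonnegativity of the Laplacian kernel gives the entrywise box $\underline{a}_{s,t}^{k} \le a_{s,t}^{k} \le \overline{a}_{s,t}^{k}$, this is pushed through the linear double-centering to give $A_{s,t}^{k*} \le \overline{A}_{s,t}^{k*}$ for every admissible completion, and the final studentized inequality follows by exactly the monotonicity argument you give (shrink the nonnegative numerator, enlarge the denominator under the square root, using $\constant{\n,\m} > 0$). The problem is at the step you yourself label ``the hard part'': upgrading the one-sided bound $A_{s,t}^{k*} \le \overline{A}_{s,t}^{k*}$ to the sum-of-squares inequality $\sum_{s \neq t} (A_{s,t}^{k*})^2 \le \sum_{s \neq t} (\overline{A}_{s,t}^{k*})^2$, which is all that $\overline{\mathcal{V}}_{n,m}^{k*} \ge \mathcal{V}_{n,m}^{k*}$ amounts to after the common constant cancels. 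You correctly observe that squaring is not monotone and that one would need two-sided control such as $|A_{s,t}^{k*}| \le \overline{A}_{s,t}^{k*}$, but your proposal stops there and merely expresses the expectation that this follows from $0 \le \kernel{\x}{\y} \le 1$ and the $N-2$, $(N-1)(N-2)$ normalizations. Since the entire first claim of Theorem~\ref{supp:theorem:4} rests on this step, the proposal as written is incomplete; moreover the needed two-sided bound is not obviously true, since $\overline{A}_{s,t}^{k*}$ need not dominate $-A_{s,t}^{k*}$: the entry $a_{s,t}^{k}$ can be near $0$ while the true marginal means $a_{\cdot,t}^{k}$, $a_{s,\cdot}^{k}$ are near their maxima, making $A_{s,t}^{k*}$ strongly negative, and nothing in the construction forces $\overline{A}_{s,t}^{k*}$ to exceed $|A_{s,t}^{k*}|$ in such configurations (in the extreme case of no missing data one has $\overline{A}_{s,t}^{k*} = A_{s,t}^{k*}$, which can be negative).

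In your defense, the paper's own proof makes precisely the leap you declined to make: it derives $\overline{A}_{s,t}^{k*} \ge A_{s,t}^{k*}$ from the entrywise bounds and then concludes $\overline{\mathcal{V}}_{n,m}^{k*} \ge \mathcal{V}_{n,m}^{k*}$ with a bare ``subsequently,'' implicitly treating $x \mapsto x^2$ as monotone on all of $\mathbb{R}$. So you have correctly isolated a genuine lacuna in the published argument rather than missed an idea the paper supplies. But identifying the gap is not closing it: to complete the theorem along this route you would need either (i) a proof that $\overline{A}_{s,t}^{k*} \ge |A_{s,t}^{k*}|$ for every completion consistent with the observed data (or at least an aggregate version over $s \neq t$), or (ii) a modified dominating quantity, e.g.\ one built per entry from both a centered upper bound and a centered lower bound on $A_{s,t}^{k*}$, taking the larger square, which would restore validity at the cost of a looser (still computable) $\overline{\mathcal{V}}_{n,m}^{k*}$. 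One further small point: your closing claim that ``not all data fully missing'' guarantees $\mathcal{V}_{n,m}^{k*} > 0$ is asserted, not proved --- neither you nor the paper establishes positivity of this variance estimator, which is needed for the studentized ratio on the right-hand side to be well defined.
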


\begin{proof}
	To start, recall that it is proved in Lemma \ref{supp:lemma:3} that
	\begin{align*}
		\kernels{\x}{\y} \ge \kernel{\x}{\y} > 0,
	\end{align*}
	where $\x,\y$ are potentially incompletely observed $\mathbb{R}^d$ samples.
	Hence, for any $1 \le s,t \le N$, we have
	\begin{align*}
		\overline{a}_{s,t}^{k} \ge {a}_{s,t}^{k} \ge \underline{a}_{\cdot,t}^{k},
	\end{align*}
	which then gives
	\begin{align*}
		\overline{A}_{s,t}^{k*} \ge {A}_{s,t}^{k*},
	\end{align*}
	where ${A}_{s,t}^{k*}$ is defined in Proposition 10 in  \cite{gao2023two} as
	\begin{align*}
		{A}_{s,t}^{k*} := {a}_{s,t}^{k} - {a}_{\cdot,t}^{k} - {a}_{s,\cdot}^{k} + {a}_{\cdot,\cdot}^{k}.
	\end{align*}
	Subsequently,
	\begin{align*}
		\overline{\mathcal{V}}_{n,m}^{k*}  \ge {\mathcal{V}}_{n,m}^{k*},
	\end{align*}
	where $ {\mathcal{V}}_{n,m}^{k*}$ is defined in Proposition 10 in  \cite{gao2023two} as
	\begin{align*}
		{\mathcal{V}}_{n,m}^{k*} = \frac{1}{N(N-3)}\sum_{s\neq t} \left({A}_{s,t}^{k*}\right)^2 - \frac{1}{(N-1)(N-3)}.
	\end{align*}
	Further, suppose
	\begin{align*}
		0 \le \mmdulower{\bx}{\by}\le \mmdu{\bx}{\by}.
	\end{align*}
	Then, we must have
	\begin{align*}
		\frac{\mmdulower{\bx}{\by}}{\sqrt{\constant{\n,\m}\overline{\mathcal{V}}_{n,m}^{k*}}} \le 	\frac{\mmdu{\bx}{\by}}{\sqrt{\constant{\n,\m}{\mathcal{V}}_{n,m}^{k*}}},
	\end{align*}
	which completes our proof.
\end{proof}

In \cite{gao2023two}, the $p$-value is defined 
as $\p = 1 - \Phi\left(\frac{\mmdu{\bx}{\by}}{\sqrt{\constant{\n,\m}{\mathcal{V}}_{n,m}^{k*}}}\right)$, 
where $\Phi$ denotes the cumulative density function of the 
standard normal distribution. In the presence of missing data, 
we can define 
$\upperp = 1 - \Phi\left(\frac{\mmdulower{\bx}{\by}}{\sqrt{\constant{\n,\m}\overline{\mathcal{V}}_{n,m}^{k*}}}\right)$. 
Then, according to Theorem \ref{supp:theorem:4}, 
we must have $\upperp \ge \p$ when $0 \le \mmdulower{\bx}{\by}$. 
When $\mmdulower{\bx}{\by} < 0$, it suggests that the $\mmdu{\bx}{\by}$ 
could potentially be smaller than 0. Hence, we then define $\underline{p} = 1$ 
in this case, i.e. the null hypothesis will not be rejected if $\mmdulower{\bx}{\by} < 0$.

\newpage
\section{Simulation Details} \label{simulationdetails}

\subsection{Computing Environment and Resources} \label{computingenviro}

The experiments were run on an high performance computing cluster with 325 compute nodes, each equipped with 2x AMD EPYC 7742 processors (128 cores, 1TB RAM per node). Each job utilized 32 cores and 64 GB of memory, with compute times varying from a couple of hours to 2-3 days, based on the samples sizes and dimensions of data of the job.

%The experiments were run on a AMD EPYC 7742 processors, each processor has 64 cores. For the execution of each simulation job, 32 cores and 64 GB of memory were allocated.
%
%The experiments were run on AMD EPYC 7742 processors

\subsection{Simulation Settings} \label{simulationsettings}

\paragraph{Median Heuristic.} For all the experiments in the section \ref{Experiments}, Laplacian kernel is used and median heuristic \cite{gretton2012kernel} will be used to compute the parameter $\beta$ of the kernel. This is a popular method of computing $\beta$, and it generally works well \cite{gretton2012kernel}. For case deletion and the proposed methods, only completely observed samples will be used for computing $\beta$. For mean and hot deck imputation, the $\beta$ will be computed using samples after imputations. 

\paragraph{Case deletion.} For univariate data, case deletion uses only the observed data for testing. For multivariate data, only data that are completely observed are used.

\paragraph{Mean imputation.} In univariate data, mean imputation replaces missing values with the mean of the observed samples in the same group, $\bx$ or $\by$. For multivariate data, the missing components of an incomplete sample are imputed with the mean of the observed components of that same sample.

\paragraph{Hot deck imputation.} In univariate data, hot deck imputation imputes missing values with randomly selected observed values (with replacement) from the same group, $\bx$ or $\by$. For multivariate data, the missing components of an incomplete sample are imputed with randomly selected observed components (with replacement) from that same sample.

\paragraph{Implementations of common missing data approaches.} After performing case deletion, mean imputation, or hot deck imputation, standard MMD two-sample testing with a Laplacian kernel is conducted, with parameter $\beta$ computed using the median heuristic. Permutation tests with $B=100$ are used to compute the $p$-value.

\paragraph{Implementations of proposed methods.} In Theorem \ref{theorem:3}, it is proved that the $p$-value of $\mmdu{\bx}{\by}$, denoted as $\p$, can be bounded by $\upperp$, which can be computed using the bounds of $\mmdu{\bx}{\by}$ in the presence of missing data. The bounds of $\mmdu{\bx}{\by}$ are computed according to Theorem \ref{supp:theorem:1} when $\dd = 1$ and according to Theorem \ref{supp:theorem:2} when $\dd > 1$. Permutation times $B = 100$ will be used. The null hypothesis is rejected when $\upperp$ is larger or equal than the significance level $\alpha$. In Section \ref{Experiments}, we denote this proposed method based on permutations as \textbf{MMD-Miss: Perm}.  

When the sample sizes $\nln{1}, \nln{2} \ge 100, \dd \ge 50$, the normality approximation proposed in \cite{gao2023two} may be suitable for computing the $p$-value of the MMD test. In Theorem \ref{theorem:4}, we provide the upper bound the $p$-value, denoted as $\upperp$, by first bounding the value of estimation of variance and combining with the lower bounds of $\mmdu{\bx}{\by}$ provided in Section \ref{Bounding MMD with Missing Data}.  The null hypothesis will then be rejected when $\upperp$ is larger or equal than the significance level $\alpha$. In Section \ref{Experiments}, we denote this proposed method based on normality approximation as \textbf{MMD-Miss: Normality}.  

\paragraph{Implementations for generating missing data.} The missingness mechanisms for each experiment are detailed in Section \ref{Experiments}. If there are not enough samples in $\bx$ or $\by$ satisfying the conditions to be missing, randomly missing will be introduced. For example, in Figure \ref{fig:1}, $s \in \{0,\ldots,0.2\}$ proportion of samples in $\bx$ will be selected and labeled as missing and the missingness mechanism is that only sample smaller than 0 in $\bx$ will be randomly selected. If the proportion of samples smaller than 0 in $\bx$, denoted as $a$, is smaller than a given $s$, then all samples in $\bx$ smaller than 0 will be selected and $\lfloor(s - a)\nln{1}\rfloor$ number of unselected samples in $\bx$ will be randomly selected.

\newpage
\subsection{Examples of MNIST dataset.} \label{Examples of MNIST dataset}

In this section, we present examples of MNIST images used in Section \ref{Experiments}.

\begin{figure}[h] 
	\begin{center} 
		\includegraphics[width=14cm]{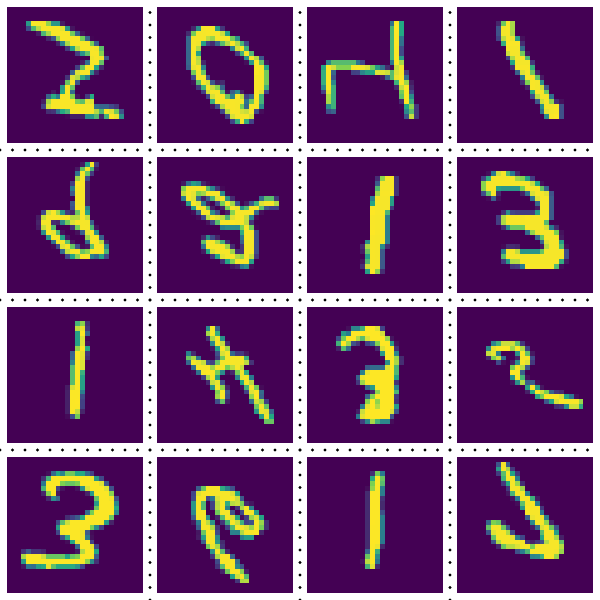}
	\end{center}
	\caption{Examples of MNIST images \cite{lecun1998gradient}. Each image has dimensions of $\dd = 28 \times 28$. The labels for the images, ordered from left to right and top to bottom, are 5, 0, 4, 1, 9, 2, 1, 3, 1, 4, 3, 5, 3, 6, 1, and 7.}  \label{fig:MNIST}
\end{figure}

\begin{figure}[h] 
	\begin{center} 
		\includegraphics[width=14cm]{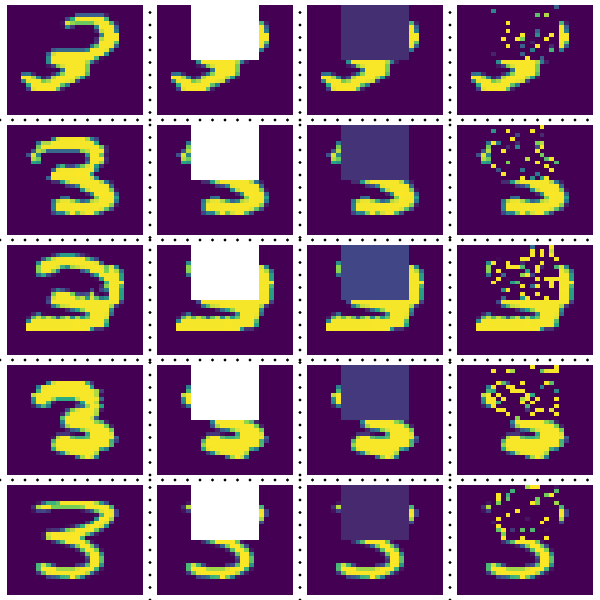}
	\end{center}
	\caption{ Examples of incompletely observed MNIST images, where white pixels represent unobserved values. Each image has dimensions of $\dd = 28 \times 28$ and is labeled as 3. The first column shows fully observed images; the second column shows images with upper half pixels missing (rows 1 to 14 and columns 8 to 21); the third column shows images after mean imputation; and the fourth column shows images after hot deck imputation. Each row presents different variations of the same original image.}  \label{fig:MNIST Incomplete}
\end{figure}

\clearpage
\section{Additional Experiments} \label{addtionalexperiments}

In this section, additional experiments are provided for investigating the power of proposed method. Different sample sizes and alternatives are considered compared to those in Section \ref{Experiments}.

\subsection{Univariate Data}
The missingness mechanism for producing the following two figures is the same as the mechanisms used to produce Figure \ref{fig:2}.
\begin{figure}[h] 
	\begin{center} 
		\includegraphics[width=14.1cm]{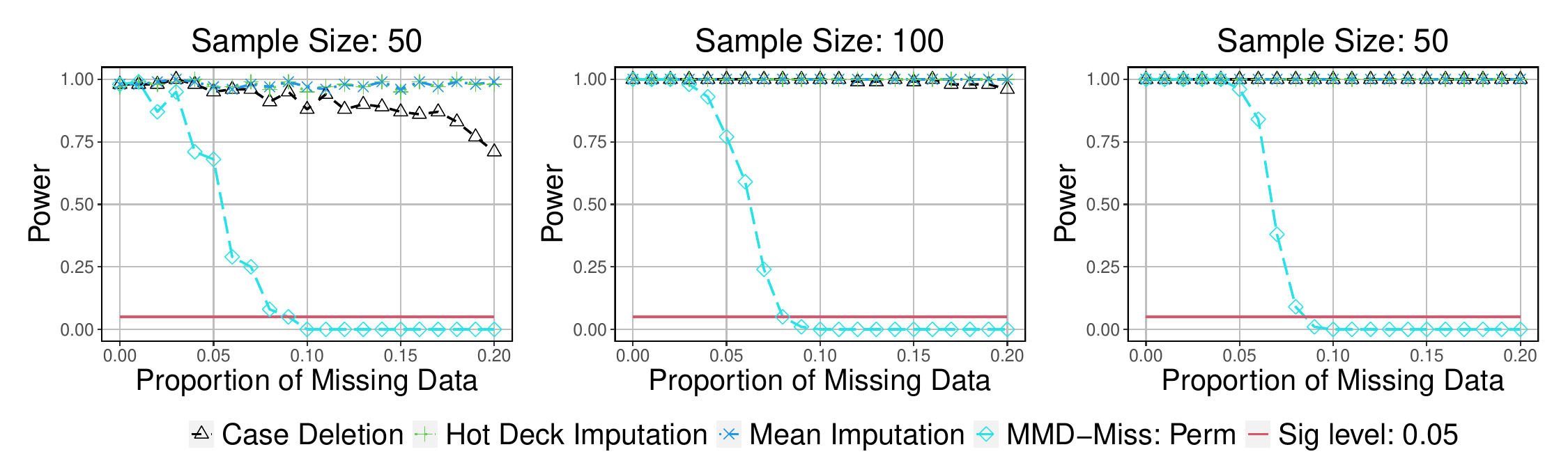}
	\end{center}
	\caption{The power of MMD-Miss and the three common 
		missing data approaches for univariate samples when data are missing not at random (MNAR). (Left): Sample size $\nln{1} = \nln{2} = 50$; (Middle): Sample size $\nln{1} = \nln{2} = 100$; (Right): Sample size $\nln{1} = \nln{2} = 200$. For all figures, significance level $\alpha = 0.05$, and alternative hypothesis $N(0,1)$ vs $N(1,1)$ are used. The plot values are the average times of rejecting the null hypothesis over 100 repetitions.}  \label{supp:fig:1}
\end{figure}

\begin{figure}[h] 
	\begin{center} 
		\includegraphics[width=14.1cm]{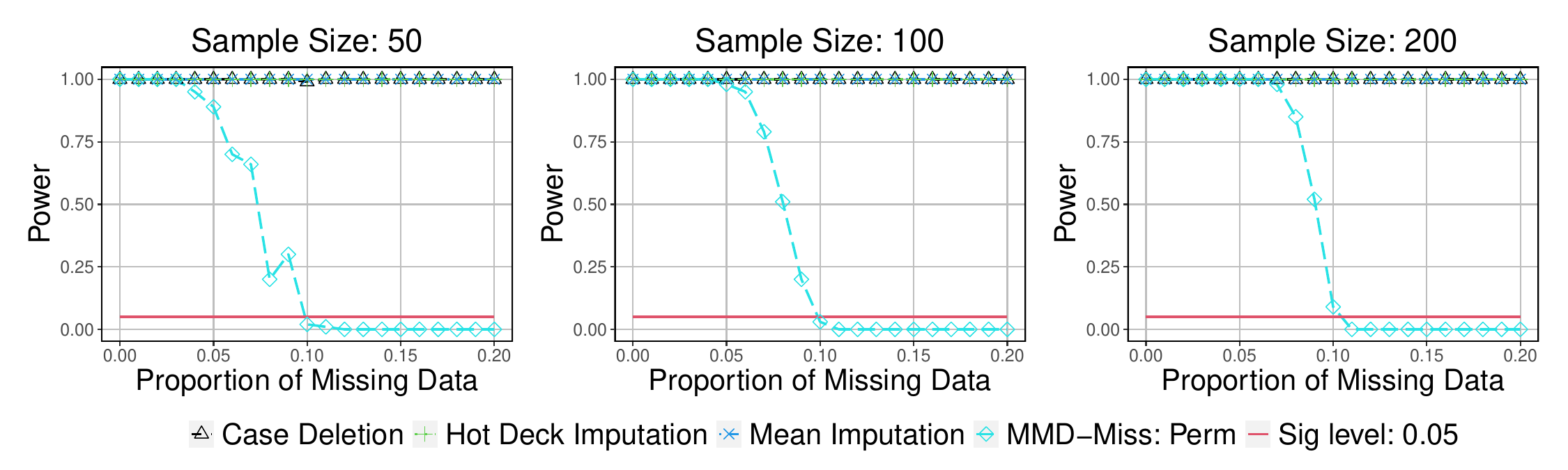}
	\end{center}
	\caption{The power of MMD-Miss and the three common 
		missing data approaches  for univariate samples when data are missing not at random (MNAR). (Left): Sample size $\nln{1} = \nln{2} = 50$; (Middle): Sample size $\nln{1} = \nln{2} = 100$; (Right): Sample size $\nln{1} = \nln{2} = 200$. For all figures, significance level $\alpha = 0.05$, and alternative hypothesis $N(0,1)$ vs $N(0,3)$ are used. The plot values are the average times of rejecting the null hypothesis over 100 repetitions.}  \label{supp:fig:2}
\end{figure}

\newpage
\subsection{Multivariate Data}
The missingness mechanism for producing the following two figures is the same as the mechanisms used to produce Figure \ref{fig:3}.
\begin{figure}[h] 
	\begin{center} 
		\includegraphics[width=14.1cm]{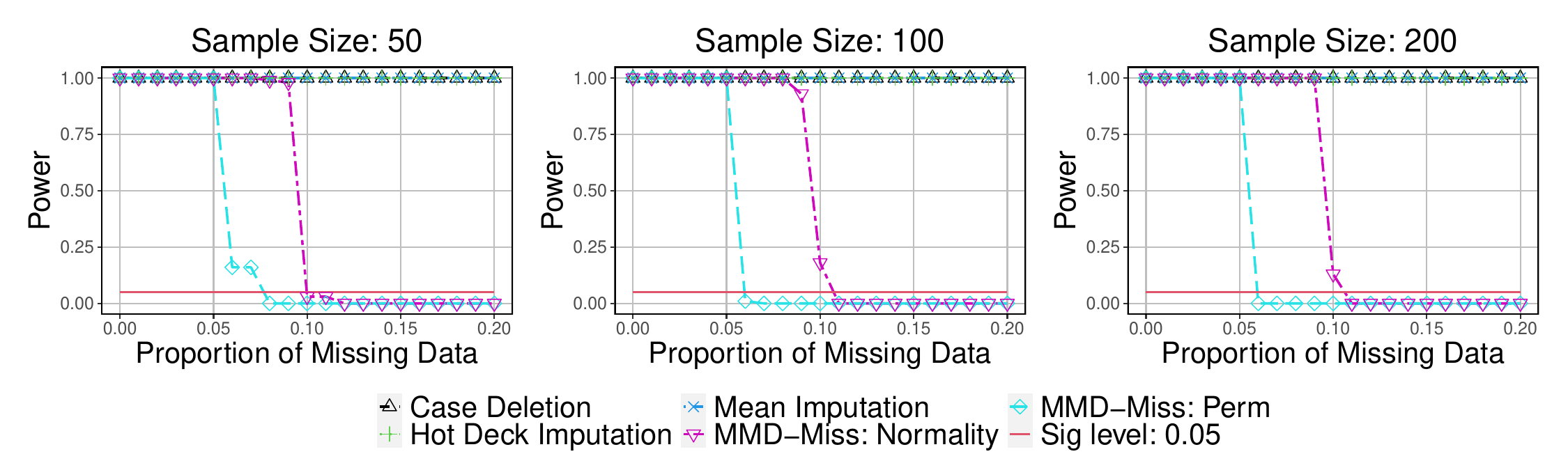}
	\end{center}
	\caption{The power of MMD-Miss and the three common 
		missing data approaches for multivariate samples $(\dd = 50)$ when data are missing not at random (MNAR). (Left): Sample size $\nln{1} = \nln{2} = 50$; (Middle): Sample size $\nln{1} = \nln{2} = 100$; (Right): Sample size $\nln{1} = \nln{2} = 200$. For all figures, significance level $\alpha = 0.05$, and alternative hypothesis $N((0,\ldots,0)^T,I_{50})$ vs $N((1,\ldots,1)^T,I_{50})$ are used. The plot values are the average times of rejecting the null hypothesis over 100 repetitions.}  \label{supp:fig:3}
\end{figure}

\begin{figure}[h] 
	\begin{center} 
		\includegraphics[width=14.1cm]{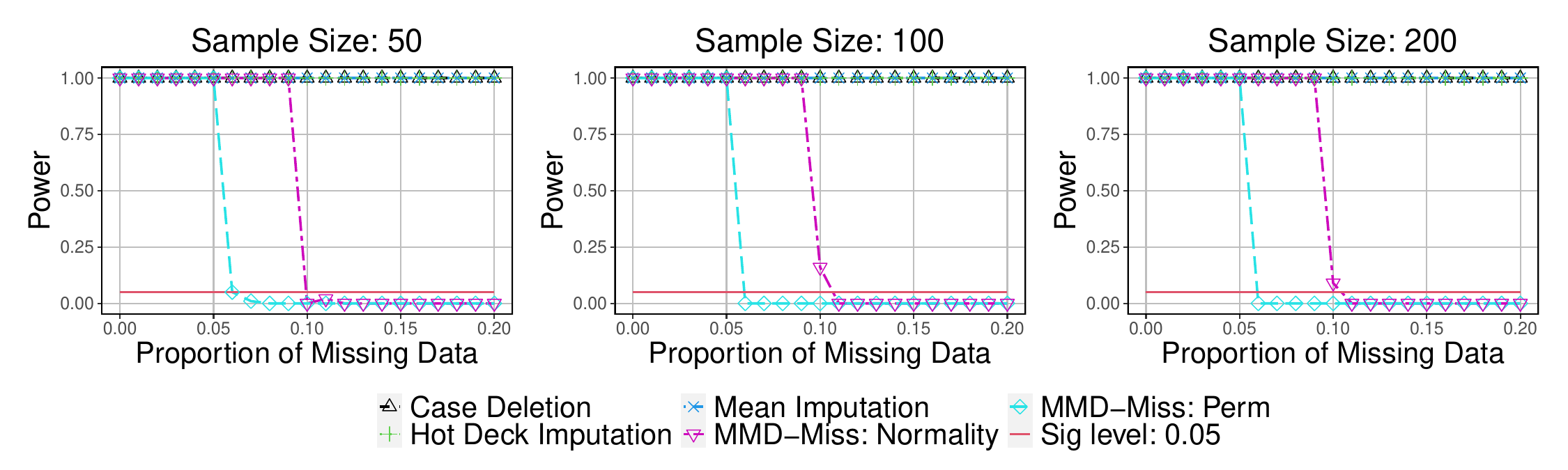}
	\end{center}
	\caption{The power of MMD-Miss and the three common 
		missing data approaches for multivariate samples $(\dd = 50)$ when data are missing not at random (MNAR). (Left): Sample size $\nln{1} = \nln{2} = 50$; (Middle): Sample size $\nln{1} = \nln{2} = 100$; (Right): Sample size $\nln{1} = \nln{2} = 200$. For all figures, significance level $\alpha = 0.05$, and alternative hypothesis $N((0,\ldots,0)^T,I_{50})$ vs $N((0,\ldots,0)^T,9I_{50})$ are used. The plot values are the average times of rejecting the null hypothesis over 100 repetitions.}  \label{supp:fig:4}
\end{figure}

\clearpage

\bibliographystyle{plain}
\bibliography{mybib}

\end{document}